\def\BibTeX{{\rm B\kern-.05em{\sc i\kern-.025em b}\kern-.08em
    T\kern-.1667em\lower.7ex\hbox{E}\kern-.125emX}}
\newcommand{\Lap}[1]{\text{Lap}\left(#1\right)}
\newcommand{\showinlater}[1]{}
\newcommand{\notshow}[1]{}
\def\calS{\mathcal{S}}
\newtheorem{theorem}{Theorem}[section]
\newtheorem{definition}[theorem]{Definition}
\newtheorem{lemma}[theorem]{Lemma}
\renewcommand{\Pr}[2]{\mathbb{P}_{#1}\left[ #2 \right]}
\newcommand{\ind}[1]{\mathbbm{1}\left(#1\right)}
\newcommand{\nodeprivacy}{%
\ifnum\spacefactor=3000
\expandafter\MakeUppercase\fi%
node differential privacy}
\def\calG{\mathcal{G}}
\def\calA{\mathcal{A}}
\renewcommand{\deg}{\text{deg}}
\newcommand{\degree}[2]{\deg_{#1}\left(#2\right)}
\newcommand{\outdegree}[2]{\text{out-}\deg_{#1}\left(#2\right)}
\newcommand{\indegree}[2]{\text{in-}\deg_{#1}\left(#2\right)}
\newcommand{\edge}[2]{\text{edge}_{#1}\left(#2\right)}
\newcommand{\outedge}[2]{\text{out-edge}_{#1}\left(#2\right)}
\newcommand{\inedge}[2]{\text{in-edge}_{#1}\left(#2\right)}
\newcommand{\numedge}[2]{\text{numedge}_{#1}\left(#2\right)}
\newcommand{\Din}{D_{\text{in}}}
\newcommand{\Dout}{D_{\text{out}}}
\def\tD{\tilde{D}}
\def\tDin{\tilde{D}_{\text{in}}}
\def\tDout{\tilde{D}_{\text{out}}}
\newcommand{\highdegree}[2]{\text{highDeg}_{#1}\left(#2\right)}
\newcommand{\highoutdegree}[2]{\text{highOutDeg}_{#1}\left(#2\right)}
\newcommand{\hist}[1]{\text{hist}\left(#1\right)}
\newcommand{\histout}[1]{\text{histOut}\left(#1\right)}
\newcommand{\GS}[2]{\text{GS}_{#1}\left(#2\right)}
\renewcommand{\S}[2]{S_{#1}\left(#2\right)}
\newcommand{\f}[1]{f\left(#1\right)}
\newcommand{\rom}[1]{\uppercase\expandafter{\romannumeral #1\relax}}
\DeclarePairedDelimiter\abs{\lvert}{\rvert}%
\let\oldabs\abs
\def\abs{\@ifstar{\oldabs}{\oldabs*}}
\newcommand{\seq}[3]{\left(#1\right)_{#2}^{#3}}
\newcommand{\paren}[1]{\left(#1\right)}
\def\time{\texttt{time}}
\def\argmin{\text{argmin}}
\def\argmax{\text{argmax}}
\newcommand\shapescale{7}
\newcommand\starscale{4}
\newcommand\shapenodesize{1.5pt}
\tikzset{vertex/.style = {shape=circle,fill=black,inner sep=0pt,draw,minimum size=\shapenodesize}}
\tikzset{dedge/.style = {->,> = latex'}}
\tikzset{edge/.style = {-,> = latex'}}
\newcommand{\udtriangle}{
\begin{tikzpicture}
\node[vertex] (a) at  (0,0) {};
\node[vertex] (b) at  (-1/\shapescale,-1.73/\shapescale) {};
\node[vertex] (c) at  (1/\shapescale, -1.73/\shapescale) {};
\draw[edge] (a) to (b);
\draw[edge] (b) to (c);
\draw[edge] (a) to (c);
\end{tikzpicture}
}
\newcommand{\dtriangleI}{
\begin{tikzpicture}
\node[vertex] (a) at  (0,0) {};
\node[vertex] (b) at  (-1/\shapescale,-1.73/\shapescale) {};
\node[vertex] (c) at  (1/\shapescale, -1.73/\shapescale) {};
\draw[dedge] (a) to (b);
\draw[dedge] (b) to (c);
\draw[dedge] (c) to (a);
\end{tikzpicture}
}
\newcommand{\dtriangleII}{
\begin{tikzpicture}
\node[vertex] (a) at  (0,0) {};
\node[vertex] (b) at  (-1/\shapescale,-1.73/\shapescale) {};
\node[vertex] (c) at  (1/\shapescale, -1.73/\shapescale) {};
\draw[dedge] (a) to (b);
\draw[dedge] (b) to (c);
\draw[dedge] (a) to (c);
\end{tikzpicture}
}
\newcommand{\dstarI}{
\begin{tikzpicture}
\node[vertex] (a) at  (0,0) {};
\node[vertex] (b) at  (0,  1/\starscale) {};
\node[vertex] (c) at  (0, -1/\starscale) {};
\node[vertex] (d) at  (-1/\starscale, 0) {};
\node[vertex] (e) at  (1/\starscale, 0) {};
\draw[dedge] (a) to (b);
\draw[dedge] (a) to (c);
\draw[dedge] (a) to (d);
\draw[dedge] (a) to (e);
\end{tikzpicture}
}
\newcommand{\dstarII}{
\begin{tikzpicture}
\node[vertex] (a) at  (0,0) {};
\node[vertex] (b) at  (0,  1/\starscale) {};
\node[vertex] (c) at  (0, -1/\starscale) {};
\node[vertex] (d) at  (-1/\starscale, 0) {};
\node[vertex] (e) at  (1/\starscale, 0) {};
\draw[dedge] (b) to (a);
\draw[dedge] (c) to (a);
\draw[dedge] (d) to (a);
\draw[dedge] (e) to (a);
\end{tikzpicture}
}
\newcommand{\udstar}{
\begin{tikzpicture}
\node[vertex] (a) at  (0,0) {};
\node[vertex] (b) at  (0,  1/\starscale) {};
\node[vertex] (c) at  (0, -1/\starscale) {};
\node[vertex] (d) at  (-1/\starscale, 0) {};
\node[vertex] (e) at  (1/\starscale, 0) {};
\draw[edge] (b) to (a);
\draw[edge] (c) to (a);
\draw[edge] (d) to (a);
\draw[edge] (e) to (a);
\end{tikzpicture}
}
\newcommand{\udkstar}[1]{\ensuremath{\udstar_{#1}}}
\newcommand{\dkstarI}[1]{\ensuremath{\dstarI_{#1}}}
\newcommand{\dkstarII}[1]{\ensuremath{\dstarII_{#1}}}
\newcommand{\dedge}{
\begin{tikzpicture}
\node[vertex] (a) at  (0,0) {};
\node[vertex] (b) at  (2/\shapescale, 0) {};
\draw[dedge] (a) to (b);
\end{tikzpicture}
}
\newcommand{\udedge}{
\begin{tikzpicture}
\node[vertex] (a) at  (0,0) {};
\node[vertex] (b) at  (2/\shapescale, 0) {};
\draw[edge] (a) to (b);
\end{tikzpicture}
}
\def\sensdiff{\textsc{SensDiff}}
\def\composeD{\textsc{Compose-$D$-bounded}}
\def\composeproj{\textsc{Compose-projection}}
\def\sensseqD{\textsc{SensSeq-$D$-bounded}}
\def\sensseqproj{\textsc{SensSeq-projection}}
\newcommand{\bigO}[1]{\ensuremath{O\left(#1\right)}}
\newcommand{\Var}[1]{\ensuremath{\mathrm{Var}\left(#1\right)}}
\begin{document}

\title{Differentially Private Continual Release of Graph Statistics}

\author{
\IEEEauthorblockN{
Shuang Song\IEEEauthorrefmark{1},
Susan Little\IEEEauthorrefmark{1}, 
Sanjay Mehta\IEEEauthorrefmark{1},
Staal Vinterbo\IEEEauthorrefmark{2} and
Kamalika Chaudhuri\IEEEauthorrefmark{1}
}
\IEEEauthorblockN{
\normalsize{
\textit{shs037@eng.ucsd.edu},
\textit{slittle@ucsd.edu},
\textit{srmehta@ucsd.edu},
\textit{staal.vinterbo@ntnu.no},
\textit{kamalika@eng.ucsd.edu}
}}
\IEEEauthorrefmark{1}University of California, San Diego\\
\IEEEauthorrefmark{2}Norwegian University of Science and Technology
}

\maketitle

\begin{abstract}
Motivated by understanding the dynamics of sensitive social networks over time, we consider the problem of continual release of statistics in a network that arrives online, while preserving privacy of its participants. For our privacy notion, we use differential privacy -- the gold standard in privacy for statistical data analysis.

The main challenge in this problem is maintaining a good privacy-utility tradeoff; naive solutions that compose across time, as well as solutions suited to tabular data either lead to poor utility or do not directly apply. In this work, we show that if there is a publicly known upper bound on the maximum degree of any node in the entire network sequence, then we can release many common graph statistics such as degree distributions and subgraph counts continually with a better privacy-accuracy tradeoff.
\end{abstract}

\begin{IEEEkeywords}
privacy, differential privacy, graph statistics
\end{IEEEkeywords}

\section{Introduction}

Dynamic social networks are ubiquitous models of social and economic phenomena, and analyzing them over a period of time can allow researchers to understand various aspects of human behavior. Many social networks, however, include sensitive and personal information about the people involved. Consequently, we need to design privacy-preserving algorithms that can summarize properties of dynamic social networks over time while still preserving the privacy of the participants. 

As a concrete motivating example, consider data on HIV transmission collected from patients in a particular region over multiple years \cite{little2014using, wang2015targeting, wertheim2017social}. Advances in sequencing technology allow scientists to infer putative transmission links by measuring similarities between HIV sequences obtained from different patients. These links can then be resolved into transmission networks, reflecting the patterns of transmission  in that population. Epidemiologists would like to study properties of these networks as they grow over time to understand how HIV propagates. Since there is considerable social stigma associated with HIV, these networks are highly sensitive information, and public release of their properties needs to ensure that privacy of the included individuals is not violated. Additionally, analyses of these networks need to happen intermittently -- for example, once a year -- so that properties of the network as it evolves may be studied. %

In this paper, we consider continual privacy-preserving release of graph statistics, such as degree distributions and subgraph counts, from sensitive networks where nodes and their associated edges appear over time in an online manner. For our privacy notion, we use differential privacy~\cite{DMNS06} -- the gold standard in private data analysis. Differential privacy guarantees privacy by ensuring that the participation of a single person in the dataset does not change the probability of any outcome by much; this is enforced by adding enough noise to either the input data or to the output of a function computed on the data so as to obscure the private value of a single individual. Since in our applications, a node corresponds to a single person, we use node differential privacy~\cite{hay2009accurate}, where the goal is to hide the participation of any single node.

There are two main challenges in continually releasing graph statistics with node differential privacy. The first is that node differential privacy itself is often very difficult to attain, and can only be attained in either bounded degree graphs or graphs that can be projected to be degree-bounded. The second challenge pertains to the online nature of the problem. Prior work has looked at continual release of statistics based on streaming tabular data~\cite{bolot2013private, cao2013efficient, chan2011private, kellaris2014differentially, chen2017pegasus, dwork2010differential}; however, these works rely on the fact that in tabular data, at any time $t$, we only get information about the $t$-th individual, and not about individuals who already exist in the data. This property no longer holds in online graphs, as an incoming node may bring in new information about existing nodes in the form of connecting edges, and therefore these solutions do not directly apply.

In this work, we show that if there is a publicly known upper bound on the maximum degree of any node in the entire graph sequence, then, a {\em{difference sequence}}  -- namely, the sequence of differences in the statistics computed on subsequent graphs -- has low sensitivity. The assumption of bounded maximum degree holds for many real networks, as many real-world graphs, such as 
social interaction networks, collaboration networks, computer networks and disease transmission networks, that are scale-free with power-law degree distributions have low maximum degree. Given this assumption holds, we show in particular that the sensitivity of the entire difference sequence only depends on the publicly known upper bound, and not on the length of the sequence. This implies that we can release a private version of the difference sequence with relatively high accuracy, which can be used to continually release the target statistic with high privacy-accuracy tradeoff. 

It is commonly believed that many real-world networks, such as social interaction networks, collaboration networks, computer networks and disease transmission networks are scale-free with degree distributions following a power law; such graphs have low maximum degrees.

We derive the sensitivity of the difference sequence for a number of common graph statistics, such as degree distribution, number of high degree nodes, as well as counts of fixed subgraphs. We then implement our algorithms and evaluate them on three real and two synthetic datasets against two natural baselines. Our experimental results show that the algorithm outperforms these baselines in terms of utility for these datasets over a range of privacy parameters.

\subsection{Related Work}

Since its inception~\cite{DMNS06}, differential privacy has become the gold standard for private data analysis and has been used in a long line of work -- see~\cite{SC13, DR14} for surveys. Differential privacy guarantees privacy by ensuring that the participation of a single individual in a dataset does not significantly affect the probability of any outcome; this is enforced by adding enough noise to obscure the influence of a single person. 

To apply differential privacy to graph data, it is therefore important to determine what a single person's data contributes to the graph. Prior work has looked at two forms of differential privacy in graphs -- {\em{edge differential privacy}}, where an edge corresponds to a person's private value, and {\em{node differential privacy}}, where a single node corresponds to a person. In our motivating application, a patient corresponds to a node, and hence node differential privacy is our privacy notion of choice. 

Prior work on edge differential privacy~\cite{NRS07, hay2009accurate} has looked at how to compute a number of statistics for {\em{static graphs}} while preserving privacy. For example,~\cite{karwa2011private} computes subgraph counts, and~\cite{hay2009accurate} degree distributions with edge differential privacy. It is also known how to successfully calculate more complex graph parameters under this notion; for example,~\cite{lu2014exponential} fits exponential random graph models and~\cite{wang2013differential, ahmed2013random} computes spectral graph statistics such as pagerank. 

In contrast, achieving node differential privacy is considerably more challenging. Changing a single node and its associated edges can alter even simple statistics of a static graph significantly; this means that any differentially private solution needs to add a considerable amount of noise to hide the effect of a single node, resulting in low utility. Prior work has addressed this challenge in two separate ways. The first is to assume that there is a publicly known upper bound on the maximum degree of any node in the graph~\cite{borgs2015private, gehrke2011towards}. 

The second is to use a carefully-designed projection from the input graph to a bounded degree graph, where adding or removing a single node has less effect, and then release statistics of the projected graph with privacy. To ensure that the entire process is privacy-preserving, the projection itself is required to be {\em{smooth}} -- in the sense that changing a single node should not change the statistics of the projected graph by much. This approach has been taken by~\cite{kasiviswanathan2013analyzing}, who releases degree distributions and subgraph counts for static graphs and~\cite{blocki2013differentially}, who releases subgraph counts and local profile queries. \cite{raskhodnikova2015efficient} releases degree distributions by using a flow-based projection algorithm. Finally, ~\cite{day2016publishing} proposes an improved projection method for releasing degree distributions, and is the state-of-the-art in this area. In this paper, we show that when the graph arrives {\em{online}}, existing projection-based approaches can yield poor utility, and therefore, we consider bounded degree graphs, where domain knowledge suggests an upper bound on the maximum degree.

Finally, while we are not aware of any work on differentially private statistics on streaming graph data, prior work has looked at releasing private statistics on streaming {\em{tabular data}} in an online manner~\cite{bolot2013private, cao2013efficient, chan2011private, kellaris2014differentially, chen2017pegasus, dwork2010differential}. In these settings, however, complete information about a single person (or a group of people) arrives at each time step, which makes the problem of private release considerably easier than online graph data, where newly arriving nodes may include information in the form of edges to already existing nodes. Thus, these approaches do not directly translate to online graphs.

\section{Preliminaries}

\subsection{Graphs and Graph Sequences}\label{sec:into_graph}

Formally, we consider a graph $G = (V, E)$, where a node $v \in V$ represents a person and an edge $(u, v) \in E$ a relationship. $G$ may be directed or undirected, depending on the application. We assume that each node $v \in V$ is associated with a time stamp, denoted by $v.\time$, that records when $v$ enters the graph.

In our setting, a graph arrives online as more and more of its vertices and some of their adjacent edges become visible. More specifically, at time $t$, a set of vertices $\partial{V_t}$ arrives, along with a set of edges $\partial{E_t}$; each edge in $\partial{E_t}$ has at least one end-point in $\partial{V_t}$, and the other end-point may be a vertex that arrived earlier. These vertices and edges, along with vertices and edges that arrived earlier comprise a graph $G_t$. Given a function $f$ that operates on graphs, our goal is to output (a private approximation to) $f(G_t)$ at each time step $t$.

More formally, the arrival process comprises a {\em{graph sequence}} $\calG = (G_1, G_2, \ldots)$, which is defined as a sequence of graphs with $G_t = (V_t, E_t)$ such that $V_0 = \emptyset$, $\partial{V_t} =  \{ v: v.\time = t \}$ is the set of all nodes with time stamp $t$ and $V_t = V_{t-1} \cup \partial {V_t}$ for $t \geq 1$ is the set of all nodes with time stamps $\leq t$.
Additionally, we let $E_0 = \emptyset$, 
$ \partial{E_t} = \{ (u, v) | u \in \partial{V_t}, v \in V_t  {\text{\ or\ }} u \in V_t, v \in \partial{V_t}\}$, 
and $E_t = E_{t-1} \cup \partial{E_t}$. Given a function $f$ that operates on a graph, we define $f$ applied to the graph sequence $f(\calG)$ as the sequence $(f(G_1), f(G_2), \ldots)$. 

If the graph sequence is $\calG = (G_1, G_2, \ldots, G_T)$, then the error of $\calA(G)$ is defined as: $\sum_{t=1}^{T} | \calA(G_t) - f(G_t)|$. Our goal is to design an algorithm $\calA$ that has as low error as possible.

\subsection{Differential Privacy}\label{sec:dp}

The gold standard for privacy in data-mining applications is differential privacy~\cite{DMNS06}, which essentially ensures that the participation of a single person in a database does not change the probability of any outcome by much. The formal definition is as follows.

\begin{definition}[$\epsilon$-Differential privacy]
A randomized algorithm $\calA$ is said to guarantee $\epsilon$-differential privacy if for any two databases $S$ and $S'$ that differ in the private value of a single individual, and for any $w \in \text{Range}(\calA)$, we have
\begin{align*}
\Pr{}{\calA(S) = w} \leq e^{\epsilon} \cdot \Pr{}{\calA(S') = w},
\end{align*}
where the probability is with respect to the randomness in $\calA$. Here $\epsilon$ is a privacy parameter, often called the {\em{privacy budget}}. 

\end{definition}

\paragraph{Global Sensitivity Mechanism.} A popular differential privacy mechanism is the Global Sensitivity Mechanism, introduced by~\cite{DMNS06}. Let $f$ be a function that operates on a database $S$. The Global Sensitivity of $f$, denoted by $\GS{}{f}$, is the maximum value of the difference $\| f(S) - f(S') \|_1$ when $S$ and $S'$ are any two databases that differ by a the participation of a single person. 

Given a function $f$, a database $S$ and a privacy budget $\epsilon$, the Global Sensitivity Mechanism computes an $\epsilon$-differentially private approximation to $f(S)$ as follows: $\calA_{GS}(S) = f(S) + \Lap{\frac{\GS{}{f}}{\epsilon}}.$ It was shown in~\cite{DMNS06} that this method preserves $\epsilon$-differential privacy. 

\paragraph{Node Differential Privacy.} To apply differential privacy to graphs, we need to determine what constitutes a single person's data in a graph. For the kind of graphs that we will study, a node $v$ corresponds to a single person. This is known as {\em{node differential privacy}}~\cite{hay2009accurate}, %
which ensures that the addition or removal of a single node along with its adjacent edges does not change the probability of any outcome by much.

\subsection{Bounded Degree Graphs}

A major challenge with ensuring node differential privacy is that the global sensitivity $\GS{}{f}$ may be very large even for simple graph functions $f$, which in turn requires the addition of a large amount of noise to ensure privacy. For example, if $f$ is the number of nodes with degree $\geq 1$, and we have an empty graph $G$ on $n$ nodes, then adding a single node connected to every other node can increase $f$ by as much as $n$.

Prior work has addressed this challenge in two separate ways. The first is by considering {\em{Bounded Degree Graphs}} \cite{kasiviswanathan2013analyzing, blocki2013differentially, borgs2015private}, where an a-priori bound on the degree of any node is known to the user and the algorithm designer. This is the solution that we will consider in this paper. 

A second line of prior work ~\cite{blocki2013differentially, kasiviswanathan2013analyzing, raskhodnikova2015efficient, day2016publishing} 
presents {\em{Graph Projections}} algorithms
that may be used to project graphs into lower degree graphs such that the resulting projections have low global sensitivity for some graph functions.  In Section~\ref{sec:highdeg}, we show that natural extensions of some of these projections may be quite unstable when a graph appears online. 

We first define bounded degree graphs.
Let $\degree{G}{v}$ denote the degree of node $v$ in an undirected graph $G$, $\outdegree{G}{v}$ and $\indegree{G}{v}$ denote the out-degree and in-degree of $v$ in a directed graph $G$.
\begin{definition}\label{def:D-bdd}
An undirected graph $G = (V, E)$ is $D$-bounded if $\degree{G}{v}\leq D$ for any $v\in V$. 
A graph sequence $\calG = (G_1, G_2, \ldots)$ is $D$-bounded if for all $t$, $G_t$ is $D$-bounded. %
In other words, the degree of all nodes remain bounded by $D$ in the entire graph sequence. 

A directed graph $G = (V, E)$ is $\Dout$-out-bounded if $\outdegree{G}{v}\leq \Dout$ for any $v\in V$. 
A graph sequence $\calG = (G_1, G_2, \ldots)$ is $\Dout$-out-bounded if for all $t$, $G_t$ is $\Dout$-out-bounded. %
Similarly, $G$ is $\Din$-in-bounded if $\indegree{G}{v}\leq \Din$ for any $v\in V$. $\calG$ is $\Din$-in-bounded if for all $t$, $G_t$ is $\Din$-in-bounded.

We say a directed graph or a graph sequence is $(\Din, \Dout)$-bounded if it is both $\Din$-in-bounded and $\Dout$-out-bounded.
\end{definition}

In this work, we assume that the domain consists only of degree bounded graphs. This ensures that the global sensitivity of certain common graph functions, such as degree distribution and subgraph counts, is low, and allows us to obtain privacy with relatively low noise. Additionally, many common sensitive graphs, such as the HIV transmission graph and co-authorship networks, typically have relatively low maximum degree, thus ensuring that the assumption holds for low or moderate values of $D$.

\subsection{Graph Functions}

This work will consider two types of functions on graph sequences. The first consists of functions of the degree distribution. The specific functions we will look at for undirected graphs are $\highdegree{\tau}{G}$, which counts the number of nodes in $G$ with degree $\geq \tau$ and the degree histogram $\hist{G}$, which counts the number of nodes with degree $d$ for any $d \in \mathbb{N}_{+}$.
Similarly, for directed graph, we consider $\highoutdegree{\tau}{G}$, the number of nodes with out-degree $\geq \tau$, and the out-degree histogram $\histout{G}$, which counts the number of nodes with out-degree $d$ for any $d \in \mathbb{N}_{+}$.

The second class of functions will involve subgraph counts. Given a subgraph $S$, we will count the number of occurrences of this subgraph $\S{}{G}$ in the entire graph $G$. For example, when $S$ is a triangle, $\S{}{G}$ will count the number of triangles in the graph. When $G$ is directed, so will be the corresponding subgraphs. 

\subsection{Other Notations}\label{sec:notation}

In a directed graph, an edge is denoted by an ordered tuple, i.e., $(u, v)$ represents a directed edge pointing from node $u$ to $v$.

We use $\seq{a_t}{t = 1}{T}$ as an abbreviation for vector $(a_1, a_2, \dots, a_T)$.

For any integer $i$, we use $[i]$ to denote the set $\{1,2,\dots,i\}$.

A degree histogram $h$ is a mapping from degrees to counts, i.e., given $d \in \mathbb{N}_{+}$, $h(d)$ is the number of nodes with degree equal to $d$. 
We define the distance between two histograms $h$ and $h'$ as $\|h(d) - h'(d)\|_1 = \sum_{d\in\mathbb{N}_{+}} |h(d) - h'(d)|$.
Given two sequences of histograms 
$\seq{h_t}{t=1}{T}$ and $\seq{h'_t}{t=1}{T}$, 
we define the generalized $L_1$ distance between them as $\sum_{t = 1}^T \|h_t(d) - h'_t(d)\|_1$.

\section{Main Algorithm}\label{sec:main_alg}

Recall that we are given as input a $D$-bounded (or $(\Din, \Dout)$-bounded) graph sequence $\calG = ( G_1, G_2, \ldots, G_T )$ that arrives online, a privacy budget $\epsilon$ and a function $f$. Our goal is to publish an $\epsilon$-differentially private approximation to the sequence $f(\calG)$ in an online manner. Specifically, at time $t$, an incoming vertex set $\partial{V_t}$ and edges $\partial{E_t}$ adjacent to it and the existing vertices arrive, and our goal is to release a private approximation to $f(G_t)$ with low additive $L_1$-error. 

\paragraph{Baseline Approaches.} A naive approach is to calculate $f(G_t)$ at each $t$ and add noise proportional to its global sensitivity over $\epsilon$. Since $\partial{E_t}$ may contain information on individuals in $G_{t-1}$ in the form of adjacent edges, this procedure will not provide $\epsilon$-differential privacy. 

The correct way to do privacy accounting for this method is by sequential composition~\cite{DMNS06}. Suppose the graph sequence has total length $T$ and we allocate privacy budget $\epsilon/T$ to each time step; then at time $t$, we calculate $f(G_t)$ and add noise proportional to its global sensitivity divided by $\epsilon/T$. If the global sensitivity of $f(G_t)$ is $O(1)$, then, we add $O(T/\epsilon)$ noise to $f(G_t)$, which results in a $\Theta(T^2/\epsilon)$  expected $L_1$-error between $f(\calG)$ and the output of the algorithm. 

A second approach is to calculate $f(G_t)$ and add noise proportional to the global sensitivity of the (entire) sequence $f(\calG)$ divided by $\epsilon$. This preserves $\epsilon$-differential privacy. However, the global sensitivity of the sequence $f(\calG)$ typically grows linearly with $T$, the length of the entire graph sequence, even if the graph sequence itself is degree-bounded. For example, if $f(G)$ is the number of nodes in $G$ with degree $\geq \tau$, then, a single extra node with degree $\tau + 1$, added at time $t=1$, can increase $f(G_t)$ by $1$ for {\em{every}} $t$, resulting in a global sensitivity of $\Omega(T)$. Consequently, the expected $L_1$-error between the true value of $f(\calG)$ and the output of this approach is as again large as $\Theta(T^2/\epsilon)$.

\paragraph{Our Approach.} The main observation in this work is that for a number of popular functions, the {\em{difference sequence}} $\Delta = ( f(G_1), f(G_2) - f(G_1), f(G_3) - f(G_2), \ldots)$ has considerably better properties. Observe that unlike certain functions on tabular data \cite{bolot2013private, cao2013efficient, chan2011private, kellaris2014differentially, chen2017pegasus, dwork2010differential}, releasing $f(G_t) - f(G_{t-1})$ after adding noise proportional to its sensitivity over $\epsilon$ will still not be $\epsilon$-differentially private -- this is because $\partial{E_t}$ can still include edges adjacent to people in $G_{t-1}$. 

However, the difference sequence $\Delta$ does have  considerably less global sensitivity than $f(\calG)$. In particular, we show that if the graph sequence $\calG$ is $D$-bounded, then, the global sensitivity $\GS{}{\Delta}$ of the entire difference sequence for a number of popular functions $f$ depends only on $D$ and not on the sequence length $T$. For example, in Section~\ref{sec:highdeg}, we show that when $G$ is an undirected graph and $f$ is the number of nodes with degree $\geq \tau$, the global sensitivity of the entire difference sequence is at most $2D + 1$. 

This immediately suggests the following algorithm.  At time $t$, calculate the difference $\Delta_t = f(G_t) - f(G_{t-1})$, and add Laplace noise proportional to its global sensitivity over $\epsilon$ to get a private perturbed version $\tilde{\Delta}_t$. Release the partial sum $\sum_{s=1}^{t} \tilde{\Delta}_s$, which is an approximation to $f(G_t)$. Since the expected value of $\tilde{\Delta}_t - \Delta_t$ is independent of $T$, the maximum standard deviation of any partial sum is at most $O(\sqrt{T}/\epsilon)$, which results in an expected $L_1$-error of $O(T^{3/2}/\epsilon)$ -- better than the $O(T^2/\epsilon)$-error achieved by the two baseline approaches.

\begin{algorithm}[h]
\caption{\sensdiff(Graph sequence $\calG$, query $f$, privacy parameter $\epsilon$)}
\label{alg}
\begin{algorithmic}
\For{$t=1, \ldots, T$}
\State{Receive $\partial{V_t}$ and $\partial{E_t}$, and construct $G_t$.}
\State{Calculate $\Delta_t = f(G_t) - f(G_{t-1})$.}
\State{Let $\GS{D}{\Delta}$ be the global sensitivity of the difference sequence;}
\State{Calculate $\tilde{\Delta}_t = \Delta_t + \Lap{\frac{\GS{D}{\Delta}}{\epsilon}}$, and the partial sum $\sum_{s=1}^{t} \tilde{\Delta}_s$.}
\EndFor
\State{\textbf{return} $\seq{\sum_{s=1}^{t} \tilde{\Delta}_s}{t=1}{T}$}
\end{algorithmic}
\end{algorithm}

The full algorithm, applied to a generic function $f$, is described in Algorithm~\ref{alg}. We call it \sensdiff\ as it uses the global sensitivity of the difference sequence $\Delta$.
The rest of the paper is devoted to analyzing the global sensitivity of the difference sequence for a number of popular graph functions $f$. Our analysis exploits specific combinatorial properties of the graph functions in question, and is carried out for two popular classes of graph functions -- functions of the degree distribution and subgraph counts.

\section{Functions of Degree Distributions}\label{sec:degree}

We begin with functions of the degree distributions of the graph sequence, and consider both directed and undirected graphs. 
A summary of the results in this section is provided in Table~\ref{tab:resultsdegree}.

\begin{table*}[h]
\centering
\caption{Summary of degree distribution results.}
\label{tab:resultsdegree}
\begin{tabular}{ l || c | c}
\hline
						&	Undirected graph					&	Directed graph	\\ \hline
(out-)degree histogram	&	$4D^2 + 2D + 1$ (for $D$-bounded)	&	$4\Dout\Din + 2\Dout + 1$ (for $(\Din, \Dout)$-bounded)			\\ \hline
high-(out-)degree nodes	&	$2D+1$ (for $D$-bounded)			&	$2\Din+1$ (for $\Din$-in-bounded)				\\ \hline
\end{tabular}
\vspace{-10pt}
\end{table*}

\subsection{Undirected Graphs}

For undirected graphs, we will consider two functions applied to graph sequences -- first, the number of nodes with degree greater than or equal to a threshold $\tau$, and second, the degree histogram. 

\subsubsection{Number of High Degree Nodes}\label{sec:highdeg}

Recall that $\highdegree{\tau}{G}$ is the number of nodes in $G$ with degree $\geq \tau$. We show below, that for $D$-bounded graphs, the difference sequence corresponding to $\highdegree{\tau}{G}$ has global sensitivity at most $2D+1$.

\begin{lemma}\label{lem:high-degree-undirected}
Let $f(G) = \highdegree{\tau}{G}$. For $D$-bounded graphs, the difference sequence corresponding to $f$ has global sensitivity at most $2D+1$. In fact, the global sensitivity is $2D+1$ for any $\tau < D$.
\end{lemma}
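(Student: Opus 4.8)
The plan is to bound, for any two $D$-bounded graph sequences $\calG$ and $\calG'$ that differ by a single node under node differential privacy, the generalized $L_1$ distance between their difference sequences. By symmetry I may assume $\calG$ contains one extra node $v$ together with all its adjacent edges, so that $G_t'$ is the subgraph of $G_t$ obtained by deleting $v$. Writing $g_t = f(G_t) - f(G_t')$ and $\Delta_t = f(G_t) - f(G_{t-1})$ (with the conventions $f(G_0)=0$ and $g_0=0$), the identity $\Delta_t - \Delta_t' = g_t - g_{t-1}$ shows that the distance between the two difference sequences equals $\sum_{t=1}^{T}\abs{\Delta_t - \Delta_t'} = \sum_{t=1}^{T}\abs{g_t - g_{t-1}}$, i.e. the total variation of the single scalar sequence $(g_t)_{t=0}^T$. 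The lemma thus reduces to showing this total variation is at most $2D+1$.

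Next I would compute $g_t$ explicitly. Deleting $v$ changes only the degrees of $v$'s current neighbors, each by exactly one, so a node is counted by $\highdegree{\tau}{G_t}$ but not by $\highdegree{\tau}{G_t'}$ precisely when it is $v$ with $\degree{G_t}{v}\ge\tau$, or a neighbor $u$ of $v$ with $\degree{G_t}{u}=\tau$ exactly. Hence $g_t = \ind{\degree{G_t}{v}\ge\tau} + \abs{\{u \in N_t(v): \degree{G_t}{u}=\tau\}}$, where $N_t(v)$ is the neighborhood of $v$ in $G_t$. Because the sequence is $D$-bounded and edges persist, $N_t(v)\subseteq N_T(v)$ with $\abs{N_T(v)}=\degree{G_T}{v}\le D$, so $v$ has at most $D$ distinct neighbors across the whole sequence.

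The crux is to control the total variation using monotone growth of the graph. Since edges are never deleted, every degree $\degree{G_t}{w}$ is non-decreasing in $t$, and every edge, once present, stays present. This forces a rigid structure on the two parts of $g_t$: the term $\ind{\degree{G_t}{v}\ge\tau}$ is a monotone $0$-then-$1$ sequence, of total variation at most $1$; and for each fixed neighbor $u$, the indicator $\ind{u\in N_t(v)\text{ and }\degree{G_t}{u}=\tau}$ is supported on a single contiguous time interval (the intersection of the monotone event ``$u$ is a neighbor'' with the interval ``$\degree{G_t}{u}=\tau$'', which is an interval because the degree is non-decreasing), so its total variation is at most $2$. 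Summing over the at most $D$ neighbors and applying the triangle inequality to the decomposition of $g_t$ gives total variation at most $1 + 2D = 2D+1$. I expect this to be the main obstacle: the naive bound $\abs{g_t - g_{t-1}}\le\abs{g_t}+\abs{g_{t-1}}$ is $\Theta(T)$, and it is precisely the monotone growth of the graph that collapses each neighbor's contribution into a bounded ``up-then-down'' bump.

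Finally, for the tightness claim when $\tau<D$, I would exhibit a $D$-bounded sequence realizing $2D+1$. The idea is to attach $v$ to $D$ neighbors $u_1,\dots,u_D$ and, at staggered time steps, first raise each $u_i$'s degree to exactly $\tau$ while it is joined to $v$, then one step later push it past $\tau$, so that each $u_i$ produces an independent $+1$ then $-1$ bump in $(g_t)$ (total variation $2$ each, hence $2D$ in all), while $v$ itself crosses degree $\tau$ at a separate step (an additional $+1$). Keeping all these events at distinct times prevents cancellation, and since $\tau<D$ every node can reach degree $\tau+1\le D$ without violating the bound, so the resulting total variation is exactly $2D+1$.
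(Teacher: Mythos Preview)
Your proof is correct. The upper-bound argument has the same content as the paper's but is organized differently. The paper reasons directly about ``crossing times'': since degrees are non-decreasing, each node crosses the threshold $\tau$ at most once; adding the extra node $v^*$ can shift the crossing time of each of its (at most $D$) neighbors to an earlier step, altering at most two entries of the difference sequence per neighbor, and $v^*$ itself contributes one more. Your route via the telescoping identity $\Delta_t-\Delta_t'=g_t-g_{t-1}$ and the decomposition $g_t=\ind{\degree{G_t}{v}\ge\tau}+\sum_u \ind{u\in N_t(v),\ \degree{G_t}{u}=\tau}$ repackages the same observation as a total-variation bound: the paper's ``crossing time moves once'' is exactly your ``each neighbor's indicator is a single contiguous bump with total variation at most $2$''. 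What your framing buys is a clean, reusable template (reduce to the total variation of $g_t$, then decompose $g_t$ into monotone or unimodal pieces) that makes the role of degree monotonicity explicit and would transfer to other statistics with little change. For the lower bound, the paper gives a more compact two-time-step construction in which all $D$ neighbors simultaneously cross $\tau$ at time $2$ in $G$ but already at time $1$ in $G'$ (and $v^*$ is high-degree at time $1$), yielding difference sequences $(\tau-1,\,D+1,\,0,\dots)$ and $(D+\tau,\,1,\,0,\dots)$; your staggered $(2D{+}1)$-step construction is also valid, though you should take a little care that $v$'s own threshold crossing lands at a step disjoint from the neighbor bumps.
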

Notice that $\tau \leq D$ is needed for the statistic to be meaningful; if $\tau > D$, there is no high-degree node. 

\paragraph{Projection Yields High Sensitivity in Graph Sequence}
A common idea in static graph analysis with \nodeprivacy\ is to project the original graph into a bounded-degree graph. The sensitivity of some common statistics on this projected graph scales with the degree bound instead of the total number of nodes.
The current state-of-the-art projection algorithm is proposed in \cite{day2016publishing}. Given a projection threshold $\tD$, a graph $G = (V, E)$ and an ordering of the nodes in $V$, the algorithm constructs a bounded-degree graph $G^{\tD}$ as follows. First, it adds all nodes in $V$ to $G^{\tD}$; then it orders all edges in $E$ according to the ordering of $V$, and for each edge $(u,v)$, adds it to $G^{\tD}$ if and only if the addition does not make the degree of either $u$ or $v$ exceed $\tD$.

This algorithm can be easily adapted to the online graph setting. \showinlater{We give a complete description of the algorithm in Appendix~\ref{sec:appendix_baseline}}.
However, it can be shown that the global sensitivity of the difference sequence is proportional to the total number of publications. 

\begin{lemma}\label{lem:high-degree projection}
Let $f(G) = \highdegree{\tau}{G^{\tD}}$. For $D$-bounded graphs, the corresponding difference sequence that ends at time $T$ has global sensitivity at least $\tD T$ for any $\tD > \tau > 0$.
\end{lemma}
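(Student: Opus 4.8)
The plan is to prove the lower bound by exhibiting a single pair of neighboring $D$-bounded sequences $\calG \sim \calG'$, differing in one node $w$, whose difference sequences are far apart. The first step is to rewrite the objective. Writing $g_t = f(G_t) - f(G_t')$ and using $f(G_0)=0$, the $t$-th coordinate of the gap between the two difference sequences is $\Delta_t - \Delta_t' = (f(G_t)-f(G_{t-1})) - (f(G_t')-f(G_{t-1}')) = g_t - g_{t-1}$. Hence $\GS{D}{\Delta} \ge \sum_{t=1}^T |g_t - g_{t-1}|$, the total variation of $(g_t)$ with $g_0=0$, and it suffices to construct neighbors for which this total variation is at least $\tD T$.

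The construction exploits the instability of the greedy, ordering-based projection. Since every arriving edge carries a newest endpoint, it is ordered last among the edges incident to any node it touches; consequently the online projection only adds edges over time and each inclusion decision, once made, is permanent. I would nonetheless seed a persistent imbalance: in $\calG$ the extra node $w$ consumes one unit of projection budget at a carrier node, whereas in $\calG'$ that slot stays free. I then force this one-slot difference to travel forward in time along a relay chain of carriers $c_0, c_1, \dots, c_T$, each with budget $\tD$, arranged as a toggle -- the relay edge $(c_{t-1}, c_t)$ is admitted iff $c_{t-1}$ has a free slot, and the fillers at $c_t$ are placed so that ``relay-in admitted'' forces ``relay-out rejected'' and vice versa. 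Because the toggle is a bijection on the one-bit state, the two executions stay in opposite states at every carrier, so the imbalance never dies out.

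The crucial device is to make this hidden state visible to $\highdegree{\tau}{\cdot}$ at every step, by using the online order to stagger the fillers. Each carrier $c_t$ is born at time $t$ carrying only $\tau-1$ padding edges together with its relay-in edge, while its remaining $\tD-\tau$ saturating fillers, which drive the toggle, arrive one step later. Thus at time $t$ exactly one node -- the newborn $c_t$ -- sits at degree $\tau-1$ or $\tau$ according to whether its relay-in edge was admitted, so its membership in $\highdegree{\tau}{G_t^{\tD}}$ differs between $\calG$ and $\calG'$; every older carrier is already saturated, and every padding/filler leaf and $w$ itself is high-degree in both executions, so they contribute only a time-independent constant to $g_t$. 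Since the executions are in opposite relay states, $c_t$'s contribution alternates in sign with $t$, forcing $|g_t-g_{t-1}|$ to be bounded below by a positive constant at every step. Running $\tD$ synchronized copies of the chain, all seeded by the single degree-$\tD$ node $w$, multiplies the per-step jump by $\tD$, yielding total variation at least $\tD T$. Finally I would verify that the original sequence is $D$-bounded with $D = \tD+1$ (each carrier is incident to $\tD+1$ edges and $w$ has degree $\tD$) and that $\tau < \tD$, as required.

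The hard part is the design of the relay gadget, which must do two things at once: persist and be observable. Persistence relies on the fact that greedy admission is anti-monotone in consumed budget -- freeing a slot upstream makes the downstream relay edge \emph{more} likely to be admitted -- which is exactly what lets a clean toggle keep $\calG$ and $\calG'$ out of phase indefinitely rather than re-synchronizing after one step. Observability is what forces the staggered-filler timing, so that each carrier is momentarily pinned at the threshold on the one step when its relay bit is exposed. The delicate bookkeeping the formal proof must discharge is that the fixed ordering ``according to the ordering of $V$'' can be realized, via consistent tie-breaking, as the within-step order (fillers of $c_{t-1}$, then the relay edge $(c_{t-1},c_t)$, then the padding of $c_t$) that the toggle analysis assumes, and that, after summing over all carriers and leaves, only the newborn carrier actually moves $g_t$.
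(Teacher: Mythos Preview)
Your approach is the same core construction as the paper's: a chain of carrier nodes each pre-loaded near the projection budget $\tD$, linked by relay edges, so that the greedy projection admits relay edges in an alternating pattern; the extra node shifts this pattern by one step in the neighbor graph, and $\tD$ parallel chains seeded by that single node supply the factor $\tD$. The paper simply takes each carrier $C_t$ to be the center of a $(\tD{-}1)$-star arriving entirely at time $t$, reads off the difference sequences $(0,2,0,2,\ldots)$ versus $(1,0,2,0,2,\ldots)$, and obtains $(2T{-}1)\tD$.

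Your staggered-filler device---giving each carrier only $\tau{-}1$ padding edges at birth and delaying the remaining $\tD{-}\tau$ fillers by one step---is a genuine refinement: it pins the newborn carrier exactly at the threshold, so the one-bit relay state is visible to $\highdegree{\tau}{\cdot}$ for arbitrary $\tau<\tD$. In the paper's version every center already has projected degree $\tD{-}1\geq\tau$ the moment it arrives, so for $\tau<\tD$ every center is high-degree regardless of which chain edges survive; the displayed sequences $(0,2,0,\ldots)$ and $(1,0,2,0,\ldots)$ literally match only $\tau=\tD$, which the lemma's hypothesis excludes. Your total-variation rewrite $\sum_t|g_t-g_{t-1}|$ is also a cleaner organizing device than the paper's direct inspection. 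In short: same alternating-chain idea, but your staggering is what actually makes the construction honor the hypothesis $\tau<\tD$.
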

Notice that $\tD > \tau$ is needed for $\highdegree{\tau}{G^{\tD}}$ to be meaningful; otherwise, we would have $\highdegree{\tau}{G^{\tD}} = |V|$ for any $G$.

\subsubsection{Degree Histogram}

Degree histogram is another informative statistic of a graph. However, we can show that even for bounded graph, the sensitivity can scale quadratically with the degree bound. We use the generalized $L_1$ distance defined in Section~\ref{sec:notation} as the distance metric for the global sensitivity.

\begin{lemma}\label{lem:degree-hist undirected}
Let $f(G) = \hist{G}$. For $D$-bounded graphs, the difference sequence corresponding to $f$ has global sensitivity $4D^2 + 2D + 1$.
\end{lemma}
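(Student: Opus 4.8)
The plan is to fix two $D$-bounded sequences $\calG$ and $\calG'$ that agree except that $\calG'$ also contains one extra node $v$ together with its incident edges, and to bound the generalized $L_1$ distance between the two difference sequences this induces. Writing $h_t=\hist{G_t}$, $h'_t=\hist{G'_t}$ and $\delta_t=h_t-h'_t$ (with $\delta_0=0$), the elementary but essential identity is $(h_t-h_{t-1})-(h'_t-h'_{t-1})=\delta_t-\delta_{t-1}$, so the distance to be bounded is exactly $\sum_{t=1}^{T}\|\delta_t-\delta_{t-1}\|_1$, the $L_1$ total variation of the trajectory $t\mapsto\delta_t$. This already encodes why the difference sequence behaves well: $\delta_t$ is eventually constant and changes only at the few times when $v$-related degree information arrives.

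Next I would localize $\delta_t$. Removing $v$ changes the degree of no node other than $v$ and its neighbors, so $\delta_t$ is supported on their degrees and decomposes as $\delta_t=c_v(t)+\sum_{u}c_u(t)$, the sum ranging over the neighbors $u$ of $v$. Here $c_v(t)$ is the histogram carrying a single mass $-1$ at the degree $\degree{G'_t}{v}$, and $c_u(t)$ is a ``dipole'' carrying $+1$ at $\degree{G_t}{u}$ and $-1$ at $\degree{G'_t}{u}=\degree{G_t}{u}+1$, recording that $u$'s edge to $v$ is absent in $G_t$. Telescoping and the triangle inequality then give $\sum_t\|\delta_t-\delta_{t-1}\|_1\le \mathrm{TV}(c_v)+\sum_u\mathrm{TV}(c_u)$, where $\mathrm{TV}(\cdot)$ denotes the $L_1$ total variation over $t$; this reduces the whole problem to bounding each single node's contribution in isolation.

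The heart of the matter is $\mathrm{TV}(c_u)$ for a neighbor $u$, and this is exactly where the online setting costs more than the static one: $c_u$ appears when $u$'s edge to $v$ arrives and then slides one step to the right every time $u$ acquires a further edge, including edges not incident to $v$. The decisive structural fact is that in the arrival model edges are only added, so $\degree{G'_t}{u}$ is non-decreasing in $t$; since it lies in $\{1,\dots,D\}$ it takes at most $D$ distinct values, so $c_u$ changes at most $D$ times over the entire sequence. As $\|c_u(t)\|_1\le 2$, each change costs at most $4$ in $L_1$, giving $\mathrm{TV}(c_u)\le 4D$. The same monotonicity applied to the single mass $c_v$, whose position ranges over at most the $D+1$ values $0,1,\dots,D$, yields $\mathrm{TV}(c_v)\le 2D+1$.

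Because $v$ is $D$-bounded it has at most $D$ neighbors, so summing the per-node bounds gives $\sum_t\|\delta_t-\delta_{t-1}\|_1\le (2D+1)+D\cdot 4D=4D^2+2D+1$, the claimed global sensitivity. I expect the only delicate point to be the constant-level bookkeeping in $\mathrm{TV}(c_u)$: one must handle the degree-$0$ boundary correctly (the histogram does not count isolated nodes, so the dipole degenerates to a single mass when $u$ or $v$ has degree $0$), and must check that several incident edges arriving in one time step are harmless, since a simultaneous jump of several degrees is still a single change of $L_1$-cost at most $4$ and therefore only helps. Finally, to see that the quadratic dependence on $D$ is genuine and not an artifact of the triangle inequality, I would exhibit a sequence in which $v$ is joined to $D$ initially isolated neighbors and each neighbor's degree is then raised toward $D$ in separate steps, forcing each of the $D$ dipoles to traverse $\Theta(D)$ positions.
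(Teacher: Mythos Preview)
Your proposal is correct and follows essentially the same route as the paper: both decompose the sensitivity into the extra node's own contribution ($2D+1$) plus a contribution of $4D$ for each of its at most $D$ neighbors. Your reformulation via the total variation of $\delta_t=h_t-h'_t$ and the explicit ``dipole'' bookkeeping is a tidier packaging of the same per-node accounting that the paper does somewhat informally (``it can cause $u_k$ to change its degree from $j$ to $j+1$ at an earlier time step''), and neither you nor the paper actually constructs a matching lower-bound example with the exact constant $4D^2+2D+1$.
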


\subsection{Directed Graphs}
For Directed graphs, we show similar results for the number of nodes with out-degree greater than or equal to a threshold $\tau$ and the out-degree histogram. Similar results can be obtained for in-degree.

\subsubsection{Number of High Out-Degree Nodes}
Recall that $\highoutdegree{\tau}{G}$ denotes the number of nodes in $G$ with degree $\geq \tau$. We show below that for $\Din$-in-bounded graphs, the difference sequence corresponding to $\highoutdegree{\tau}{G}$ has global sensitivity $2\Din+1$.
\begin{lemma}\label{lem:high-degree-directed}
Let $f(G) = \highoutdegree{\tau}{G}$. For $\Din$-in-bounded graphs, the difference sequence corresponding to $f$ has global sensitivity $2\Din+1$.
\end{lemma}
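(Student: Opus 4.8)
The plan is to reduce the global sensitivity of the difference sequence to a total-variation computation and then account for it node by node. Fix two neighboring $\Din$-in-bounded sequences that differ in a single node $v$ together with all its adjacent edges; call them $\calG = \seq{G_t}{t=1}{T}$ (without $v$) and $\calG' = \seq{G'_t}{t=1}{T}$ (with $v$). Writing $\Delta_t = f(G_t) - f(G_{t-1})$ and $\Delta'_t = f(G'_t) - f(G'_{t-1})$ with $f(G_0) = f(G'_0) = 0$, and setting $\phi(t) = f(G'_t) - f(G_t)$ with $\phi(0) = 0$, one gets $\Delta_t - \Delta'_t = -(\phi(t) - \phi(t-1))$. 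Hence the $L_1$ distance between the two difference sequences equals $\sum_{t=1}^T |\phi(t) - \phi(t-1)|$, the total variation of $\phi$, and the goal becomes bounding this total variation by $2\Din + 1$.

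Next I would decompose $\phi$ over nodes. Writing $g_u(t) = \ind{\outdegree{G_t}{u} \geq \tau}$ and $g'_u(t) = \ind{\outdegree{G'_t}{u} \geq \tau}$, we have $f(G_t) = \sum_u g_u(t)$ and likewise for $\calG'$, so $\phi(t) = \sum_u \phi_u(t)$ with $\phi_u(t) = g'_u(t) - g_u(t)$. The key structural observation is that removing $v$ changes the out-degree only of $v$ itself and of the nodes $u$ carrying an edge $(u,v)$ — that is, the in-neighbors of $v$; the out-neighbors of $v$ merely lose an incoming edge, which affects their in-degree but not their out-degree, so they contribute $\phi_u \equiv 0$. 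Since total variation is subadditive, $\sum_t |\phi(t)-\phi(t-1)| \leq \sum_u \sum_t |\phi_u(t) - \phi_u(t-1)|$, and only $v$ and its in-neighbors contribute.

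I would then bound the two kinds of contributions using the fact that the graph grows online, so every out-degree is non-decreasing in $t$. For $v$ itself, $\phi_v(t) = g'_v(t)$ is a non-decreasing $\{0,1\}$-valued sequence, so its total variation is at most $1$. For an in-neighbor $u$, let $t_{uv}$ be the time the edge $(u,v)$ appears in $\calG'$; for $t < t_{uv}$ the out-degrees agree so $\phi_u(t)=0$, while for $t \geq t_{uv}$ we have $\outdegree{G'_t}{u} = \outdegree{G_t}{u} + 1$, which gives $\phi_u(t) = \ind{\outdegree{G_t}{u} = \tau - 1}$. Because $\outdegree{G_t}{u}$ is non-decreasing, this indicator is supported on a single (possibly empty) contiguous block of times, so $\phi_u$ is a $\{0,1\}$ sequence with at most one rise and one fall; its total variation is at most $2$.

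Finally I would count. The in-neighbors of $v$ are exactly the in-neighbors of $v$ in the final graph $G_T$ (edges are only added), so there are at most $\indegree{G_T}{v} \leq \Din$ of them. Summing the per-node bounds yields total variation at most $1 + 2\Din = 2\Din + 1$, giving the upper bound, and the symmetric ``add a node'' direction yields the same value. To show $2\Din + 1$ is attained, I would exhibit a sequence in which $v$ crosses the threshold once and each of its $\Din$ in-neighbors sits at out-degree $\tau - 1$ exactly when its edge to $v$ arrives and is later pushed to out-degree $\tau$, so that every per-node total variation is met with equality. The main subtlety to get right is the directed bookkeeping — recognizing that only in-neighbors matter for an out-degree statistic, which is precisely what replaces $\Dout$ by $\Din$ in the bound — together with the clean factor of two arising from each in-neighbor's out-degree passing through exactly $\tau - 1$.
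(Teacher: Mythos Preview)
Your argument is correct and is essentially the paper's proof, just organized through the total-variation identity $\sum_t |\Delta_t-\Delta'_t| = \sum_t |\phi(t)-\phi(t-1)|$ and a per-node decomposition. The paper argues directly that each in-neighbor's crossing time shifts, affecting two entries of the difference sequence, and that $v^*$ contributes one more; your ``one rise, one fall'' bound on each $\phi_u$ is exactly that observation in different clothing, and your sketched tight example matches the paper's explicit construction (minor slip: the in-degree bound should read $\indegree{G'_T}{v}\le \Din$, since $v$ lives in $\calG'$).
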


\subsubsection{Out-Degree Histogram}
We show that for bounded directed graphs, the sensitivity of the histogram scales quadratically with the degree bounds as well.
\begin{lemma}\label{lem:out-degree-hist directed}
Let $f(G) = \histout{G}$. For $(\Din, \Dout)$-bounded graphs, the difference sequence corresponding to $f$ has global sensitivity $4\Dout\Din + 2\Dout + 1$. 
\end{lemma}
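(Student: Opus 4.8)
The plan is to mirror the structure used for the high-out-degree statistic (Lemma~\ref{lem:high-degree-directed}), but to track how an entire adjacent $(-1,+1)$ pair of histogram masses moves through the bins rather than a single threshold crossing. Fix two neighboring graph sequences $\calG$ and $\calG'$ differing in a single node $v^*$ together with its incident edges, with $\calG'$ the sequence containing $v^*$, and write $f = \histout{\cdot}$. Since the two sequences are identical before $v^*$'s timestamp, the only quantity to control is the generalized $L_1$ distance $\sum_t \|\Delta_t' - \Delta_t\|_1$. The first step is a telescoping identity: letting $\delta_t = f(G_t') - f(G_t)$ be the signed histogram ``footprint'' of $v^*$ at time $t$, we have $\Delta_t' - \Delta_t = \delta_t - \delta_{t-1}$, so it suffices to bound the total variation $\sum_t \|\delta_t - \delta_{t-1}\|_1$ of the footprint sequence, with $\delta_0 = 0$.

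Second, I would pin down exactly which nodes contribute to $\delta_t$. Because $f$ counts \emph{out}-degrees, the out-neighbors of $v^*$ (endpoints $w$ of edges $(v^*,w)$) are irrelevant, since their out-degrees are unchanged; the only affected nodes are $v^*$ itself and the \emph{in}-neighbors of $v^*$, i.e.\ nodes $u$ carrying an edge $(u,v^*)$. As $\calG'$ is $\Din$-in-bounded, there are at most $\Din$ such in-neighbors. Writing $c_x(t)$ for the contribution of node $x$ to $\delta_t$, the triangle inequality gives $\sum_t\|\delta_t-\delta_{t-1}\|_1 \le \sum_x \sum_t \|c_x(t)-c_x(t-1)\|_1$, reducing the problem to a per-node accounting.

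Third, I would bound the two node types separately. The node $v^*$ contributes a single $+1$ mass sitting in the bin equal to its current out-degree; this mass appears once (cost $1$) and thereafter only moves upward as $v^*$ acquires out-edges. Crucially, in the online model edges are only added, so $v^*$'s out-degree is non-decreasing and bounded by $\Dout$, giving at most $\Dout$ upward moves of $L_1$-cost $2$ each, hence at most $2\Dout+1$ total. Each in-neighbor $u$ contributes an adjacent $(-1,+1)$ pair (it sits one bin higher in $G'$ than in $G$), which materializes when the edge $(u,v^*)$ arrives and then shifts upward by one bin each time $u$ gains a further out-edge; monotonicity again caps the number of shifts by $\Dout$, and each shift of the pair costs at most $4$ in $L_1$, so each in-neighbor contributes at most $4\Dout$ and all of them together at most $4\Dout\Din$. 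Summing the two types yields the claimed $4\Dout\Din + 2\Dout + 1$, and a matching construction, in which $v^*$'s $\Din$ in-neighbors each receive one new out-edge per time step after their edge into $v^*$ appears, shows the bound is essentially attained.

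The main obstacle is the bookkeeping in the third step: I must argue that the $(-1,+1)$ pair of an in-neighbor moves only $O(\Dout)$ times over the whole (arbitrarily long) sequence, rather than $O(T)$ times, and that each move costs at most $4$. This is precisely where $(\Din,\Dout)$-boundedness and the append-only edge model are essential, since together they force every affected node's out-degree to be monotone and capped, thereby decoupling the sensitivity from the sequence length $T$. A secondary subtlety is handling simultaneous events (the edge into $v^*$ arriving in the same step that $u$ or $v^*$ gains other out-edges); such coincidences only merge moves and so cannot increase the per-node totals, but the worst-case ordering must be checked to confirm the stated constant and the tightness of the accompanying construction.
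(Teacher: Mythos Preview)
Your proposal is correct and takes essentially the same route as the paper's proof: decompose per node, observe that only $v^*$ (charged $2\Dout+1$) and its at most $\Din$ in-neighbors (each charged $4\Dout$) contribute, and sum. The telescoping via the footprint $\delta_t = f(G_t')-f(G_t)$ is a tidy repackaging but unwinds to exactly the per-node accounting the paper carries out directly on $\Delta_t'-\Delta_t$; like the paper, you establish only the upper bound (the paper gives no matching lower-bound construction, so your ``essentially attained'' hedge is appropriate).
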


\section{Functions of Subgraph Counts}\label{sec:subgraph}

In this section, we consider the count of some common directed and undirected subgraphs.

\paragraph{Popular subgraphs}
In undirected graphs, we consider three subgraphs. 1) an edge, including two nodes and the edge between them, 2) a triangle, including three nodes with edges between any two of them and 3) a $k$-star, including one center node $c$, $k$ boundary nodes $\{b_1,\dots,b_k\}$ and edges $\{(c,b_i):i\in[k]\}$. Table~\ref{tab:patterns_undirected} summarizes these subgraphs.

In directed graphs, we consider five subgraphs. 1) an edge, including two nodes and an directed edge between them, 2) triangle \rom{1}, including nodes $\{v_1,v_2,v_3\}$ and edges $\{(v_1,v_2),(v_2,v_3),(v_3,v_1)\}$, 3) triangle \rom{2}, including nodes $\{v_1,v_2,v_3\}$ and edges $\{(v_1,v_2),(v_1,v_3),(v_2,v_3)\}$, 4) an out-$k$-star, including one center node $c$, $k$ boundary nodes $\{b_1,\dots,b_k\}$ and edges $\{(c,b_i):i\in[k]\}$, 5) an in-$k$-star, including one center node $c$, $k$ boundary nodes $\{b_1,\dots,b_k\}$ and edges $\{(b_i,c):i\in[k]\}$.
Table~\ref{tab:patterns_directed} summarizes these subgraphs.

\begin{table}[h]
\centering
\caption{Subgraphs in undirected graphs.}
\label{tab:patterns_undirected}
\begin{tabular}{ c | c | c}
Edge 		&	Triangle	&	$k$-star		\\\hline
$S^u_E$		&	$S^u_{\triangle}$		&	$S^u_{\star}$			\\\hline
\udedge 		&	\udtriangle &	$\udkstar{}$ 
\end{tabular}
\end{table}

\begin{table}[h]
\centering
\caption{Subgraphs in directed graphs.}
\label{tab:patterns_directed}
\begin{tabular}{ c | c | c | c | c}
Edge 		&	Triangle \rom{1}	&	Triangle \rom{2}	 &	Out-$k$-star	&	In-$k$-star	\\\hline
$S^d_E$		&	$S^d_{\triangle^1}$	&	$S^d_{\triangle^2}$		&	$S^d_{\star^o}$	&	$S^d_{\star^i}$			\\\hline
\dedge 		&	\dtriangleI	&	\dtriangleII &	$\dkstarI{}$	&	$\dkstarII{}$ 
\end{tabular}
\end{table}

\subsection{Undirected Graphs}

First, we present a general result that applies to any undirected subgraph $S$. 
Recall that $\S{}{G}$ denotes the total number of copies of $S$ in graph $G$.
\begin{lemma}\label{lem:subgraph undirected}
Given any undirected subgraph $S$, if $\S{}{G}$ changes by at most $S_{+}$ with an additional node with degree $D$ (and the corresponding edges), then the difference sequence corresponding to $\S{}{\cdot}$ has global sensitivity $S_{+}$ for any $D$-bounded graph $G$.
\end{lemma}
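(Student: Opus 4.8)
The plan is to directly bound the generalized $L_1$ distance between the difference sequences arising from two neighboring $D$-bounded graph sequences and to show it never exceeds $S_{+}$, regardless of the length $T$. By symmetry of the node-neighbor relation, it suffices to consider two sequences $\calG = (G_1, \ldots, G_T)$ and $\calG' = (G_1', \ldots, G_T')$ where $\calG'$ is obtained from $\calG$ by inserting a single node $v$ (with time stamp $t_0 = v.\time$) together with its adjacent edges. Thus $G_t' = G_t$ for $t < t_0$, while for $t \geq t_0$ the graph $G_t'$ differs from $G_t$ only in the presence of $v$ and its incident edges.

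First I would isolate the contribution of $v$. For each $t$ set $c_t = f(G_t') - f(G_t)$, which counts exactly the copies of $S$ in $G_t'$ that use the node $v$, since all other copies are common to both graphs and cancel; I adopt the convention $c_0 = 0$ and note $c_t = 0$ for all $t < t_0$. A short computation then shows that the two difference sequences differ only through $v$: writing $\Delta_t = f(G_t) - f(G_{t-1})$ and $\Delta_t' = f(G_t') - f(G_{t-1}')$ with $f(G_0) = f(G_0') = 0$, one gets $\Delta_t' - \Delta_t = c_t - c_{t-1}$ for every $t$. Consequently the generalized $L_1$ distance between the two difference sequences equals $\sum_{t=1}^T |c_t - c_{t-1}|$, and this is the quantity I must bound.

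The key step is a monotonicity observation. Since the sequence is cumulative we have $G_{t-1}' \subseteq G_t'$, so any copy of $S$ containing $v$ that is present at time $t-1$ is still present at time $t$. Hence $(c_t)_{t}$ is non-decreasing, each $|c_t - c_{t-1}| = c_t - c_{t-1}$, and the sum telescopes to $c_T - c_0 = c_T$. Finally, $c_T$ is precisely the number of copies of $S$ through $v$ in the final graph $G_T'$; because $\calG'$ is $D$-bounded, $v$ has degree at most $D$, so by hypothesis $c_T \leq S_{+}$. This yields $\GS{D}{\Delta} \leq S_{+}$, and exhibiting a single sequence in which $v$ arrives with $D$ edges realizing the extremal configuration of the hypothesis shows the bound is attained, giving $\GS{D}{\Delta} = S_{+}$.

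The main obstacle is recognizing that the telescoping/monotonicity argument is exactly what decouples the sensitivity from $T$: without the observation that $c_t$ is non-decreasing, the only naive bound on $\sum_t |c_t - c_{t-1}|$ would scale with the number of time steps in which $v$ acquires new edges, exactly the blow-up that afflicts the baseline approaches. The remaining work is pure bookkeeping, namely getting the boundary terms at $t=1$ and $t=t_0$ right and invoking the hypothesis with the correct degree bound on $v$.
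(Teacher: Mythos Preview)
Your proposal is correct and follows essentially the same route as the paper (which defers to the directed version, Lemma~\ref{lem:subgraph directed}). Both arguments reduce $\sum_t|\Delta_t-\Delta'_t|$ to $\S{}{G'_T}-\S{}{G_T}$ via a monotonicity/telescoping step and then invoke the hypothesis; the only cosmetic difference is that the paper shows $\Delta'_t\ge\Delta_t$ by arguing that the set of copies of $S$ first appearing at time $t$ in $\calG$ is contained in the corresponding set for $\calG'$, whereas you phrase the same fact as non-decreasingness of $c_t=f(G'_t)-f(G_t)$.
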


Now we show the values of $S_{+}$ for the subgraphs listed in Table~\ref{tab:patterns_undirected}.
\begin{lemma}\label{lem:subgraph undirected patterns}
Given the degree bound $D$, the value of $S_{+}$ for some common subgraphs are:
\begin{enumerate}
\item for $S^u_E$,		$S_{+} = D$;
\item for $S^u_{\triangle}$, 	$S_{+} = {D \choose 2}$;
\item for $S^u_{\star}$, 	$S_{+} = D{D-1 \choose k-1} + {D \choose k}$.
\end{enumerate}
\end{lemma}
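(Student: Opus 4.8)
The plan is to invoke Lemma~\ref{lem:subgraph undirected}, which reduces the sensitivity computation to a purely combinatorial quantity: $S_{+}$ is the maximum increase in $\S{}{G}$ caused by inserting a single node $v$ of degree at most $D$ together with its incident edges. The observation I would use in all three cases is that every copy of the subgraph created by the insertion must use at least one of the new edges, and hence must contain $v$ as one of its vertices. It therefore suffices to count, for each pattern, the copies that pass through $v$, and to maximize that count over all admissible neighborhoods $N(v)$ of $v$ subject to $\degree{G}{v}\le D$.

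For the edge $S^u_E$, a created copy is just a new edge incident to $v$, so the count equals $\degree{G}{v}\le D$, with equality when $v$ has its full $D$ neighbors; hence $S_{+}=D$. For the triangle $S^u_{\triangle}$, a created copy consists of $v$ together with two of its neighbors $u,w$ that are themselves adjacent, so the number of new triangles equals the number of edges induced inside $N(v)$. This is at most $\binom{\degree{G}{v}}{2}\le\binom{D}{2}$, and equality holds when $N(v)$ induces a clique; I would note that this configuration stays $D$-bounded, since each of the $D$ neighbors then has $D-1$ neighbors inside the clique plus the single edge to $v$.

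The $k$-star $S^u_{\star}$ is the only case needing real care, and I would split the newly created stars according to the role played by $v$. If $v$ is the center, the $k$ boundary nodes are chosen from $N(v)$, contributing $\binom{\degree{G}{v}}{k}\le\binom{D}{k}$ copies. If $v$ is a boundary node, the center must be some neighbor $c\in N(v)$, the edge $(c,v)$ serves as one spoke, and the remaining $k-1$ boundary nodes are chosen from the other neighbors of $c$; this contributes $\binom{\degree{G}{c}-1}{k-1}\le\binom{D-1}{k-1}$ copies per center $c$, hence at most $\degree{G}{v}\,\binom{D-1}{k-1}\le D\binom{D-1}{k-1}$ in total. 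Summing the two contributions yields $\binom{D}{k}+D\binom{D-1}{k-1}$.

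The part I expect to be the main obstacle is the $k$-star count, for two reasons. First, one must argue that the two cases ($v$ as center versus $v$ as boundary) are genuinely disjoint and jointly exhaustive, so that no star is double counted and none is missed; this follows because a star has a unique center, and every new star contains $v$ in exactly one of these two roles. Second, one must verify that the two upper bounds are simultaneously attainable by a single $D$-bounded graph, so that the sum is actually realized rather than merely bounded. I would exhibit such a graph explicitly: let $v$ have exactly $D$ neighbors and give each neighbor $c$ an additional $D-1$ private neighbors consisting of fresh degree-one nodes, so that $\degree{G}{c}=D$ for every $c\in N(v)$. This graph is $D$-bounded and makes both bounds tight at once, pinning down $S_{+}=\binom{D}{k}+D\binom{D-1}{k-1}$ and completing the lemma.
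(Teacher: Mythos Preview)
Your proposal is correct and follows essentially the same approach as the paper: both argue that every new copy of $S$ must contain the added vertex $v^*$, and then bound the number of such copies by case analysis on the role of $v^*$ in $S$. For the $k$-star, the paper computes the increase at a neighbor $c$ via the identity $\binom{\deg(c)+1}{k}-\binom{\deg(c)}{k}=\binom{\deg(c)}{k-1}\le\binom{D-1}{k-1}$, whereas you count directly the stars centered at $c$ that use the spoke $(c,v)$; these are the same quantity. Your write-up is in fact more complete than the paper's, since you supply explicit $D$-bounded configurations (the $D$-clique for triangles, and the broom-like construction for $k$-stars) that realize the bounds, while the paper only establishes the upper bounds and asserts the equalities.
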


\subsection{Directed Graphs}

Again, we first present a lemma that applies to any subgraph $S$, and then show the values of $S_{+}$, the maximum change in the subgraph count caused by an additional node, for the subgraphs in Table~\ref{tab:patterns_directed}.
\begin{lemma}\label{lem:subgraph directed}
Given any directed subgraph $S$, if $\S{}{G}$ changes by at most $S_{+}$ with an additional node with $\Din$ in-degree and $\Dout$ out-degree (and the corresponding edges), then the difference sequence corresponding to $\S{}{\cdot}$ has global sensitivity $S_{+}$ for any $(\Din, \Dout)$-bounded graph $G$.
\end{lemma}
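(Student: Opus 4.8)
The plan is to reuse, essentially verbatim, the telescoping argument behind the undirected general bound (Lemma~\ref{lem:subgraph undirected}), with the single degree bound $D$ replaced by the pair $(\Din,\Dout)$ only at the very last step. First I would fix any two $(\Din,\Dout)$-bounded graph sequences $\calG$ and $\calG'$ that differ in the participation of a single node. Since the generalized $L_1$ distance is symmetric, I may assume without loss of generality that $\calG$ contains an extra node $v$, together with all of its adjacent directed edges, that is absent from $\calG'$. For each time $t$ I would define
\begin{equation*}
g_t = \S{}{G_t} - \S{}{G'_t},
\end{equation*}
so that $g_t$ counts exactly the copies of $S$ in $G_t$ that use $v$ (the copies destroyed when $v$ and its directed edges are removed), with $g_0 = 0$ under the convention $G_0 = \emptyset$.

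Next I would rewrite the term-by-term difference of the two difference sequences. Writing $\Delta_t = \S{}{G_t} - \S{}{G_{t-1}}$ and $\Delta'_t = \S{}{G'_t} - \S{}{G'_{t-1}}$ and regrouping gives
\begin{equation*}
\Delta_t - \Delta'_t = \big(\S{}{G_t} - \S{}{G'_t}\big) - \big(\S{}{G_{t-1}} - \S{}{G'_{t-1}}\big) = g_t - g_{t-1},
\end{equation*}
so the distance between the two difference sequences is $\sum_{t=1}^{T} |g_t - g_{t-1}|$, and the whole problem collapses to controlling this sum.

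The key observation I would exploit --- the only place where the online arrival model enters --- is that $g_t$ is non-decreasing in $t$: since $V_{t-1}\subseteq V_t$ and $E_{t-1}\subseteq E_t$, the graph only grows, so any copy of $S$ through $v$ present in $G_{t-1}$ survives in $G_t$, giving $g_t \geq g_{t-1}\geq 0$. With this in hand the absolute values disappear and the sum telescopes to
\begin{equation*}
\sum_{t=1}^{T} |g_t - g_{t-1}| = \sum_{t=1}^{T} (g_t - g_{t-1}) = g_T - g_0 = g_T.
\end{equation*}
Finally, I would note that $g_T$ is precisely the change in $\S{}{\cdot}$ caused by inserting $v$ into the final graph $G'_T$; because $\calG$ is $(\Din,\Dout)$-bounded we have $\indegree{G_T}{v}\leq\Din$ and $\outdegree{G_T}{v}\leq\Dout$, so the lemma's hypothesis bounds $g_T \leq S_{+}$. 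Maximizing over all single-node-differing pairs then yields $\GS{}{\Delta}\leq S_{+}$.

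I expect the only obstacles to be bookkeeping rather than conceptual: confirming that removing $v$ strips all of its in- and out-edges at every time step, and that no copy counted in $g_t$ can later vanish as additional vertices and edges arrive (which is exactly what makes $g_t$ monotone). The genuinely load-bearing step is the telescoping identity $\Delta_t-\Delta'_t = g_t-g_{t-1}$ together with monotonicity, which jointly show that only the final graph $G_T$ contributes; once that is established, the directed setting adds nothing beyond reading off the in- and out-degree bounds when invoking the hypothesis.
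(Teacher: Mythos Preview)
Your proposal is correct and follows essentially the same route as the paper: the paper also shows that the per-step differences $\Delta'_t-\Delta_t$ are nonnegative (via $\calS_t\subseteq\calS'_t$, which is exactly your monotonicity of $g_t$) and then telescopes to $\S{}{G'_T}-\S{}{G_T}\leq S_{+}$. The only cosmetic difference is that you package the argument through the auxiliary quantity $g_t$, whereas the paper phrases it directly in terms of the sets $\calS_t,\calS'_t$ of newly-appearing copies of $S$.
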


\begin{lemma}\label{lem:subgraph directed patterns}
Given degree bounds $\Din$ and $\Dout$, the value of $S_{+}$ for some common subgraphs are:
\begin{enumerate}
\item for $S^d_{E}$,		$S_{+} = \Din + \Dout$;
\item for $S^d_{\triangle^1}$, 	$S_{+} = \Din \Dout$;
\item for $S^d_{\triangle^2}$, $S_{+} = {{\Din + \Dout} \choose 2}$;
\item for $S^d_{\star^o}$, 	$S_{+} = \Din {\Dout - 1 \choose k - 1} + {\Dout \choose k}$;
\item for $S^d_{\star^i}$, $S_{+} = \Dout {\Din - 1 \choose k - 1} + {\Din \choose k}$.
\end{enumerate}
\end{lemma}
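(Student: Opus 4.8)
The plan is to invoke Lemma~\ref{lem:subgraph directed} to reduce the statement to a purely combinatorial count: $S_{+}$ is the maximum number of \emph{new} copies of the pattern created when a single node $w$ with in-degree $\Din$ and out-degree $\Dout$ (and its incident edges) is inserted into a $(\Din,\Dout)$-bounded graph. A new copy is precisely a copy of the subgraph that contains $w$, and since $w$ is a single vertex it occupies exactly one vertex-position of the pattern in any such copy. The unified technique I would use is a \emph{role decomposition}: classify copies by which orbit of the pattern's automorphism group maps to $w$, bound the number of copies in each role separately, and sum. Because $w$ lands in exactly one position per copy, this sum counts each new copy once. Throughout I would count copies as distinct vertex subsets carrying the required edges (equivalently, assume at most one directed edge between any ordered pair of endpoints), which is the convention already implicit in the undirected counts such as $\binom{D}{2}$ for $S^u_{\triangle}$; this keeps the per-position counts as clean binomials.

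The edge and the two stars are the easy cases and I would dispatch them first. For $S^d_E$, $w$ is either the tail (one copy per out-neighbor, hence at most $\Dout$) or the head (at most $\Din$), giving $\Din+\Dout$. For the out-star $S^d_{\star^o}$, if $w$ is the center we choose $k$ of its out-neighbors as boundary nodes, giving $\binom{\Dout}{k}$; if $w$ is a boundary node, we first choose its center $c$ among the $\Din$ in-neighbors of $w$, and then choose the remaining $k-1$ boundary nodes among the \emph{other} out-neighbors of $c$ --- here the key point is that the degree bound applies to $c$ as well, so $c$ has at most $\Dout-1$ out-neighbors besides $w$, yielding $\Din\binom{\Dout-1}{k-1}$. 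Summing the two roles gives $\Din\binom{\Dout-1}{k-1}+\binom{\Dout}{k}$, and the in-star formula follows by reversing all edge orientations and swapping the roles of $\Din$ and $\Dout$.

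The triangles are the crux and where I expect the real work to lie. For $S^d_{\triangle^1}$ (the directed $3$-cycle) the pattern is vertex-transitive, so there is a single role, but the cyclic structure forces $w$ to have exactly one in-neighbor and one out-neighbor inside any copy; thus a cycle through $w$ is determined by choosing an out-neighbor and an in-neighbor joined by the closing edge, giving at most $\Din\Dout$. For $S^d_{\triangle^2}$ (the transitive triangle) the automorphism group is trivial and there are three distinct roles (source, middle, sink); rather than summing these three counts, I would argue directly that any copy through $w$ uses two other vertices both adjacent to $w$, hence both among the at most $\Din+\Dout$ neighbors of $w$, and that under the single-edge convention each unordered pair of neighbors supports at most one transitive triangle, giving $\binom{\Din+\Dout}{2}$. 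The main obstacle is precisely this step for the triangles: one must (i) see why the cyclic pattern is constrained to the product bound $\Din\Dout$ while the transitive pattern enjoys the larger $\binom{\Din+\Dout}{2}$, and (ii) pin down the counting convention so that bidirectional edges do not inflate the per-pair count. I would close each case with a matching construction --- saturating exactly the edges used in the upper-bound argument, subject to the $(\Din,\Dout)$ degree constraints on all participating vertices --- to confirm that the stated values of $S_{+}$ are attained and not merely upper bounds.
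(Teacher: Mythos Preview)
Your proposal is correct and follows the same role-decomposition strategy as the paper: classify new copies by which position of the pattern the added node $v^*$ occupies, bound each role using the in-/out-degree constraints, and sum. The only differences are cosmetic---for the out-star boundary case the paper reaches $\binom{\Dout-1}{k-1}$ via the identity $\binom{d+1}{k}-\binom{d}{k}=\binom{d}{k-1}$ rather than your direct count, and for $S^d_{\triangle^2}$ it counts unordered pairs of \emph{edges} incident to $v^*$ (which is slightly more robust than pairs of neighbors if bidirectional edges are present) to get $\binom{\Din+\Dout}{2}$; the paper does not include the tightness constructions you propose.
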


\section{Experiments}
We next demonstrate the practical applicability of the proposed algorithm by comparing it with some natural baselines. In particular, we investigate the following questions: \begin{enumerate}
\item What is the utility offered by \sensdiff\ as a function of the privacy parameter $\epsilon$ and the number of releases?
\item How does its utility compare with existing baselines, such as composition across time steps, and composition coupled with graph projection?
\end{enumerate}
These questions are considered in the context of five datasets -- two synthetic and three real online graphs. We consider two versions of each dataset -- directed and undirected, and two graph statistics -- the number of high-degree nodes and the number of edges.

\subsection{Methodology}

\subsubsection{Baseline Algorithms}
We consider two natural baselines based on sequential differential privacy composition -- \composeD\ and \composeproj. \composeD\ considers each graph $G_t$ in the sequence separately, and adds noise to $f(G_t)$ that is proportional to its $D$-bounded global sensitivity divided by $\epsilon/T$. Here $\epsilon$ is the privacy parameter and $T$ is the number of releases. \composeproj uses a state-of-the-art projection algorithm -- the one proposed in \cite{day2016publishing} -- to project each $G_t$ into $\tilde{G}_t$, and releases $f(\tilde{G}_t)$ after adding noise proportional to its global sensitivity divided by $\epsilon/T$.

There are two other natural approaches. The first is to compute $f(\calG)$ and add noise proportional to its $D$-bounded global sensitivity divided by $\epsilon$; the second is to use the projection algorithm to obtain a sequence of projected graphs $\tilde{\calG} = (\tilde{G}_1,\dots,\tilde{G}_T)$, compute $f(\tilde{\calG})$ and add noise proportional to the its global sensitivity divided by $\epsilon$. However, we can show that the utility of either of these approaches is guaranteed to be at most that of \composeD\ and \composeproj; details are omitted due to space constraints.

\subsubsection{Choice of Parameters}
\sensdiff\ and \composeD\ both require an a-priori bound on the graph degree. We set this bound to be the actual maximum degree rounded up to the nearest $5$-th integer.

\composeproj\ requires a projection threshold $\tD$ (or, $\tDin$ and $\tDout$ for directed graphs). This parameter is chosen by parameter tuning -- we pick the parameter value out of a predetermined list that leads to the lowest error. Note that for the sake of fairness, we do not allocate any extra privacy budget to parameter tuning, which would be the case in reality; thus our estimate of the performance of \composeproj\ is {\em{optimistic}}. 

The threshold $\tau$ in the high-degree or high-out-degree nodes experiments is set to be the $90$-th percentile of the (non-private) degree distribution.

\subsection{Datasets}
We use five datasets -- three real and two synthetic. In addition to the standard directed version of each dataset, we also consider an undirected version that is obtained by ignoring the edge directions. A brief summary of these datasets is presented in Table~\ref{tab:data}.

\begin{table*}[h]
\centering
\caption{Summary of common properties of the graph datasets. Max degree refers to the undirected version of the graph while max in-degree and max out-degree refer to the directed version.}
\label{tab:data}
\begin{tabular}{ l || c | c | c | c | c | c}
\hline
					&	\# nodes	& \# edges	&	timespan (year)	&	max degree	&	max in-degree	&	max out-degree	\\ \hline
HIV transmission	&	$1660$	&$456$	&$21$	&$13$	&$12$	&$8$				\\ \hline
Patent citation		&$91614$	&$475427$	&$15$	&$247$	&$211$	&$246$	\\ \hline
Paper citation 		&$9038$	&$5249$	&$24$	&$36$	&$19$	&$36$	\\ \hline
Synthetic \rom{1}	&$1990$	&$665$	&$20$	&$5$		&$1$		&$4$	\\ \hline
Synthetic \rom{2}	&$1088$	&$588$	&$20$	&$6$		&$1$		&$5$	\\ \hline
\end{tabular}
\vspace{-0pt}
\end{table*}

\subsubsection{Real Data Sets}

\paragraph{HIV transmission graph} This is a graph of potential HIV transmissions where a node represents a patient and an edge connects two patients whose viral sequences have high similarity. An edge thus represents a plausible transmission; the graph also has spurious edges that may correspond to a patient transmitting the disease to multiple others within a short period of time. About two-thirds of the patients have an estimated date of infection (EDI) ranging from $1996$ to $2016$, which is taken as the time stamp of the corresponding node. For the remaining nodes, EDI could not be estimated as the patient was admitted long after infection; we set the corresponding time stamps as $1995$.

\paragraph{Patents citation graph} The patents citation graph \cite{data_patents} contains all US patents granted between $1963$ and $1999$, and all citations made by patents from $1975$ to $1999$. A patent $u$'s citing patent $v$ naturally yields a directed edge from $v$ to $u$. We pick all patents under subcategory {\em{Computer Hardware \& Software}} (indexed $22$) to form both a directed and an undirected graph, and publish statistics from years $1985$ to $1999$.

\paragraph{Paper citation graph} This is a graph of articles and their citations from ACL derived by~\cite{data_citation}; each article has a recorded publication date from $1975$ to $2013$. We select the positive citations, where an edge from $v$ to $u$ implies that $u$ endorsed the article $v$, and publish statistics from $1990$ to $2013$. 

\subsubsection{Synthetic Data Sets}

In addition to the real data sets, we consider two synthetic graphs that are generated from two separate disease transmission models.

\def\BA{Barab\'asi--Albert}

\paragraph{Synthetic disease transmission graph \rom{1}}
This is a synthetic graph of disease transmissions based on the \BA\ preferential attachment model. In the \BA\ model, there are $m_0$ initial nodes, followed by a number of nodes that arrive sequentially. A node on arrival connects to $k$ existing nodes, with a higher chance of connecting to nodes with higher degree. 

We make three modifications to this model so that the generated graph is a more realistic disease transmission network. First, we
assume there is a total of $Y$ years with $n$ nodes added per year; each node has a time stamp -- its year of arrival -- and the initial nodes have time stamp $0$. All edges are directed from nodes with lower time stamps to those with higher time stamps.  Second, to model the large number of isolated nodes that exist in real disease transmission networks, we ensure that each new node is isolated with probability proportional to a parameter $P_{\text{isolated}}$. Third, in practice, an infected individual is usually less likely to spread infection as time passes due to treatment or death. We build this property into the model by adding an extra decaying factor to the connection probability, i.e., the probability of a new node's connecting to an existing node $v$ is proportional to $\degree{}{v} \times (\texttt{current\_time} - v.\time + 1)^{-c}$ (or $\outdegree{}{v} \times (\texttt{current\_time} - v.\time + 1)^{-c}$ in directed graph) where $c$ is the decay parameter. For our experiments, we generate a graph with parameters $P_{\text{isolated}} = 0.5$, $k = 1$, $m_0 = 500$, $n=70$, $Y=20$ and $c = 1$.

\paragraph{Synthetic disease transmission graph \rom{2}} This is a synthetic disease transmission graph drawn from the popular SIR model~\cite{SIRmodel} of infection overlaid on an underlying \BA\ social network. In the SIR model, a node or individual as three statuses -- susceptible (S), infectious (I) and recovered (R). The infectious individuals transmit the disease to susceptible individuals through social links with a transmission probability $P_t$. With probability $P_r$, an infectious individual can recover; once recovered, an individual will not get infected again.

We generate an undirected social network $G_{\text{interact}} = (V, E)$ from the \BA\ model, where a node represents a person and an edge a social interaction. We then simulate the transmission process as follows. Initially, every node is susceptible (S) except for $n_0$ randomly picked infectious (I) nodes. At time step $t$, an infectious node changes status to recovered (R) with probability $P_r$; then, each node $u$ that is still infectious infects each one of its neighbors in $G_{\text{interact}}$ with probability $P_i / \degree{G_{\text{interact}}}{u}$. This gives us a transmission graph where a directed edge is a disease transmission. The time stamp of each node is the time when it is infected. We note that any node in the social interaction graph that has never been infected will not appear in the transmission graph. For our experiments, we use the parameters $P_r = 0.1$, $P_i = 0.18$ and a underlying interaction graph $G_{\text{interact}}$ of size $10000$ with attachment parameter $k = 2$.

\subsection{Results}

\begin{table*}[!t]
\setlength\tabcolsep{0pt} %
\resizebox{\textwidth}{!}{
\begin{tabularx}{\textwidth}{c c c c c}
& Directed, high-degree & Undirected, high-degree & Directed, edge & Undirected, edge \\ 
\rotatebox[origin=l]{90}{\quad HIV transmission}
& \includegraphics[width=0.245\textwidth]{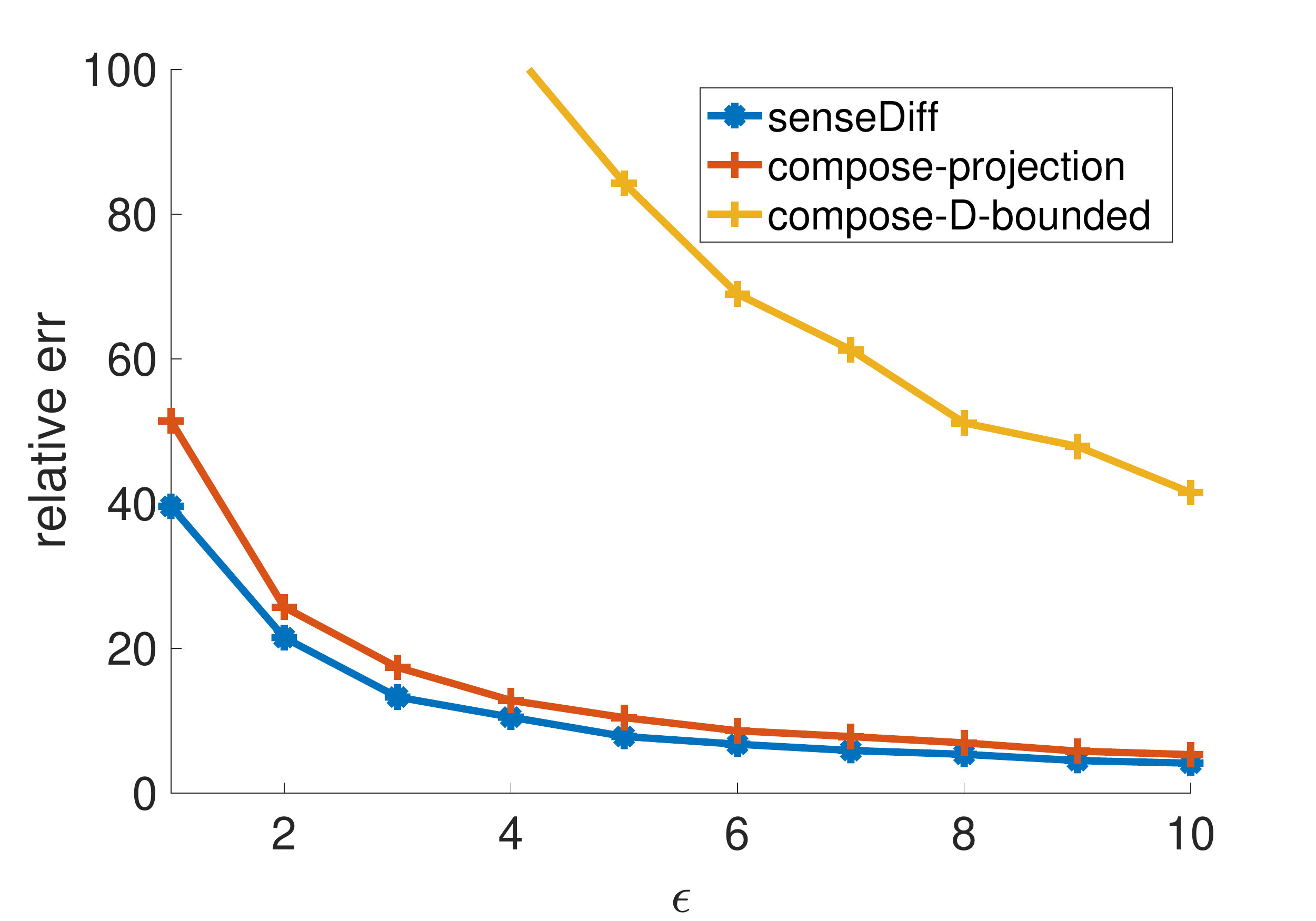}
& \includegraphics[width=0.245\textwidth]{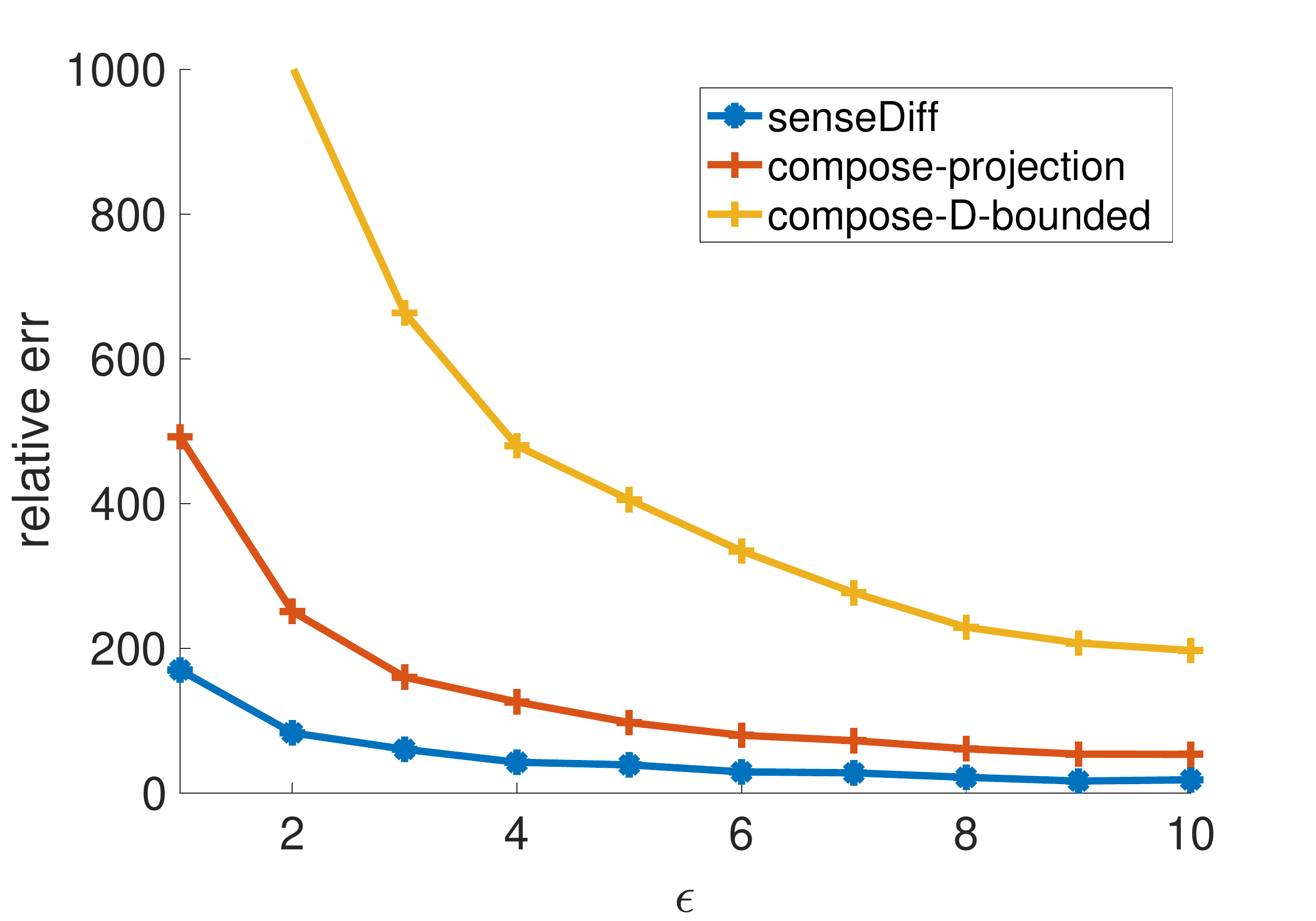}
& \includegraphics[width=0.245\textwidth]{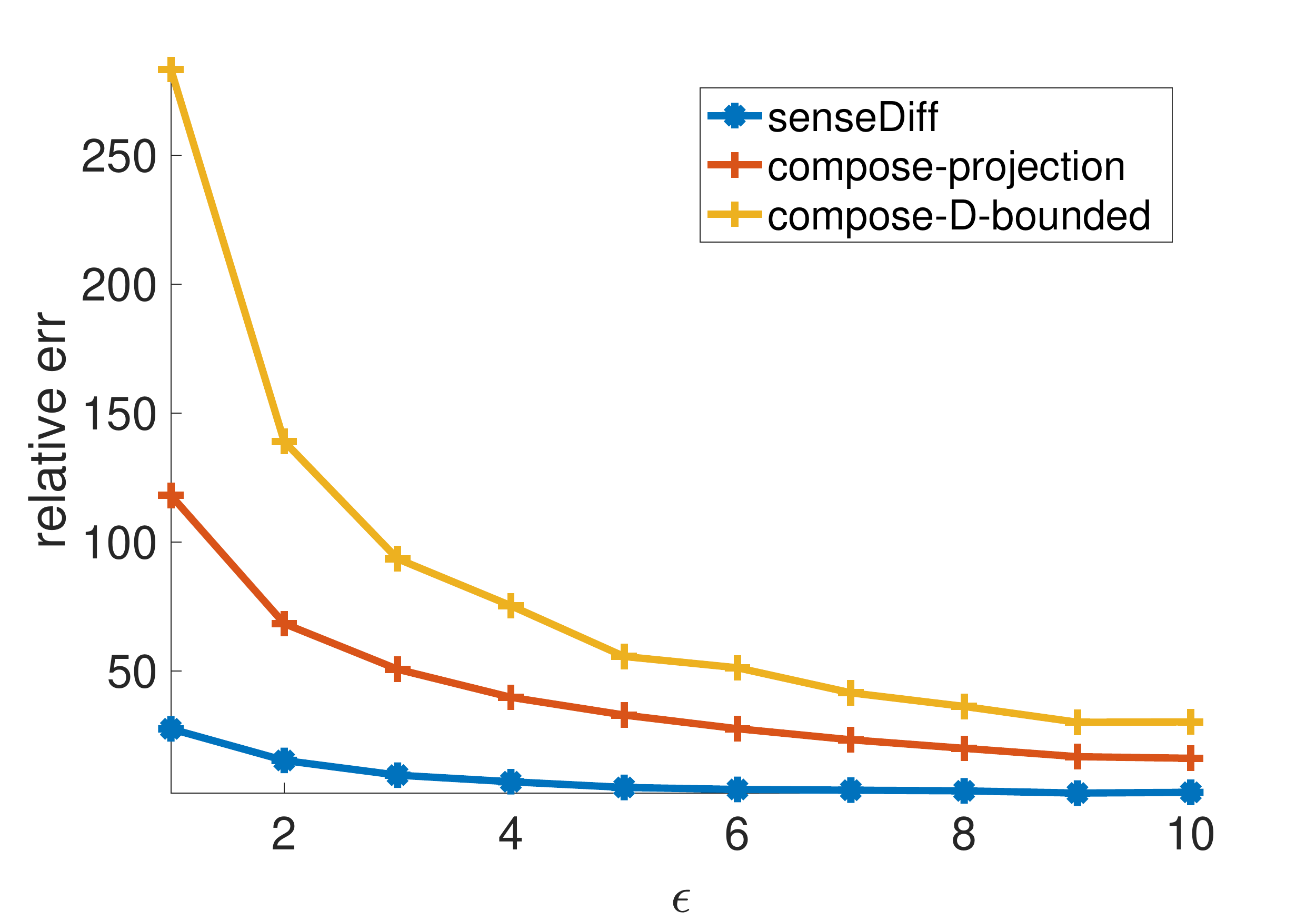}
& \includegraphics[width=0.245\textwidth]{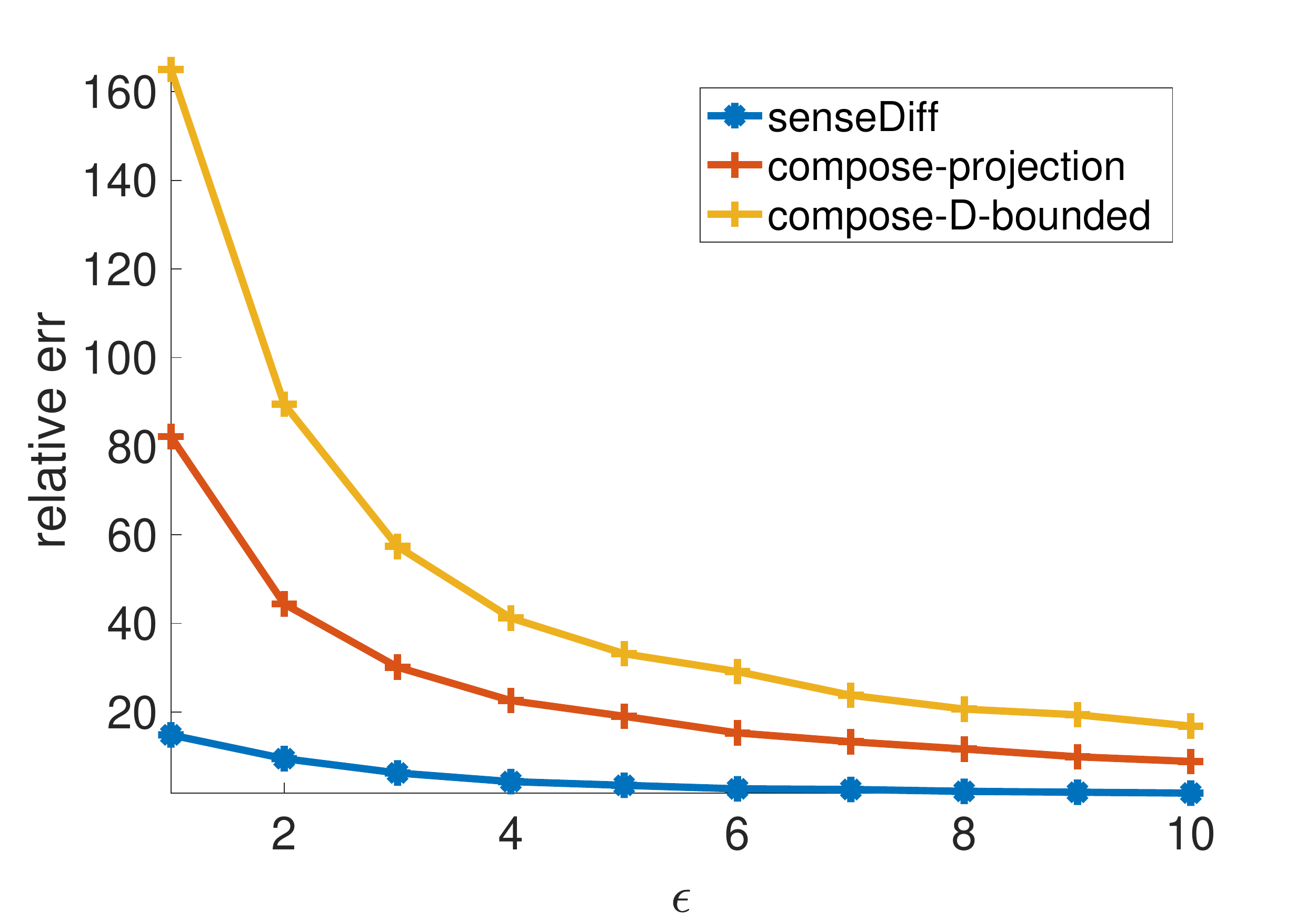}
\\
\rotatebox[origin=l]{90}{\qquad Patent citation}
&\includegraphics[width=0.245\textwidth]{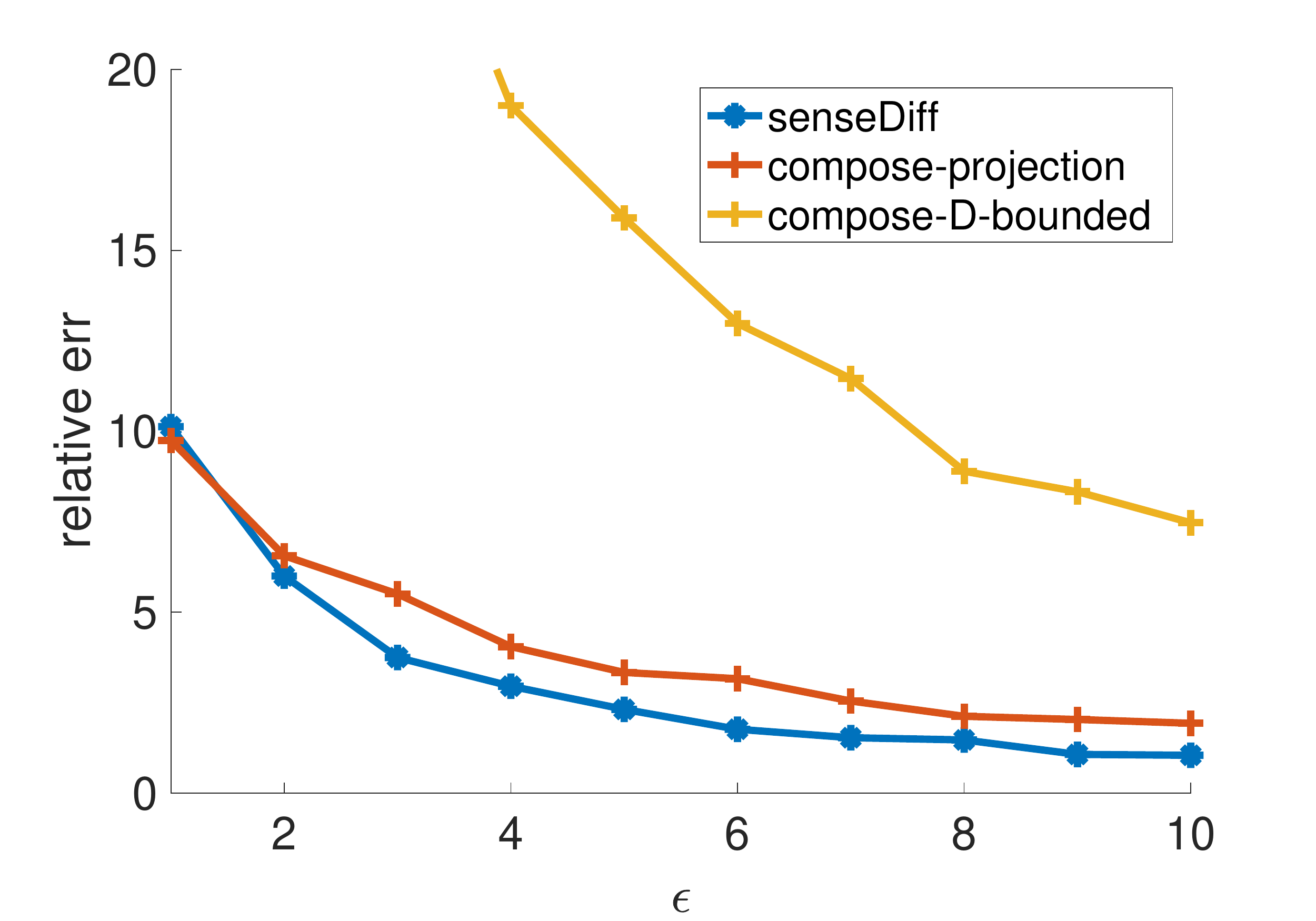}
& \includegraphics[width=0.245\textwidth]{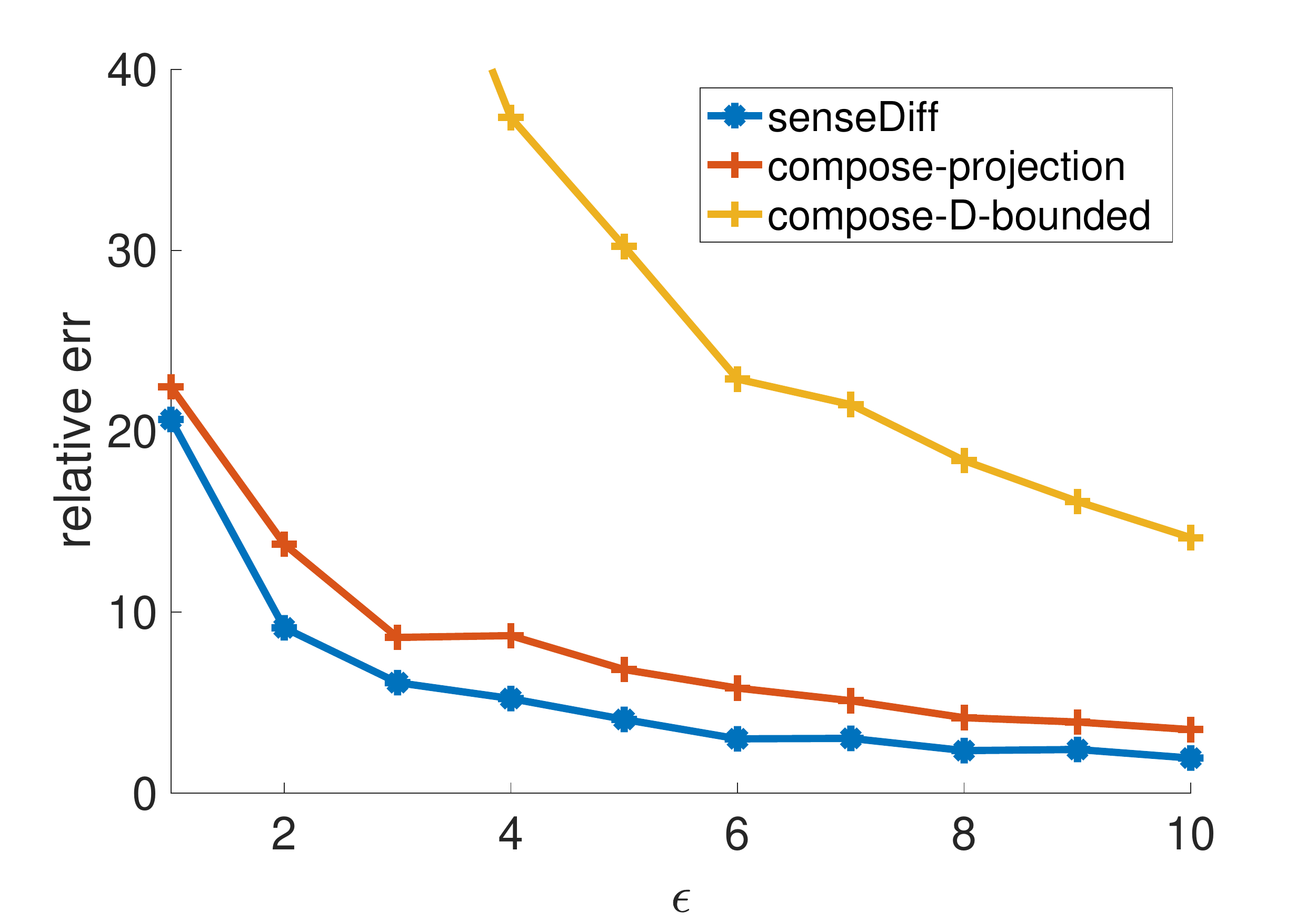}
& \includegraphics[width=0.245\textwidth]{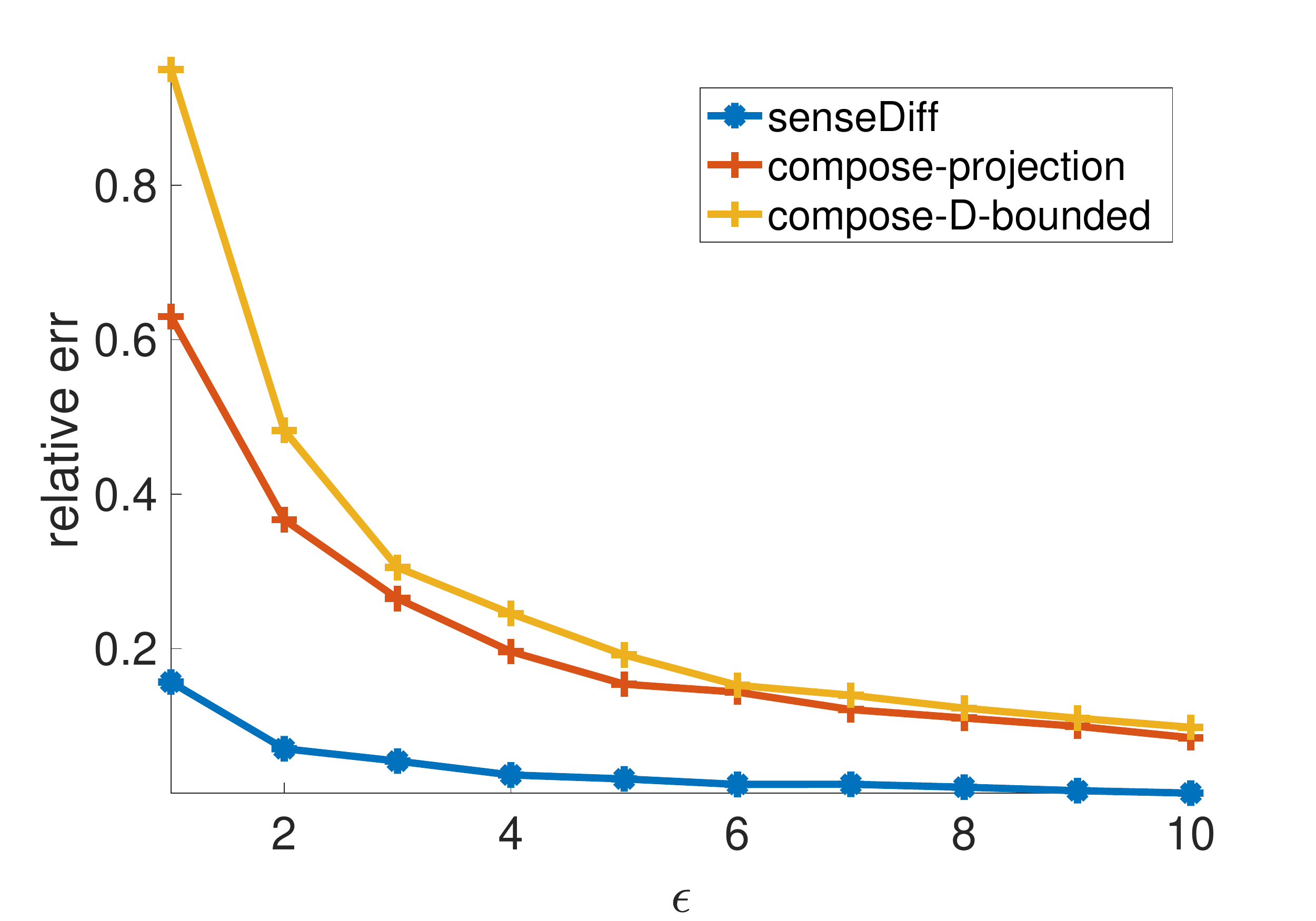}
& \includegraphics[width=0.245\textwidth]{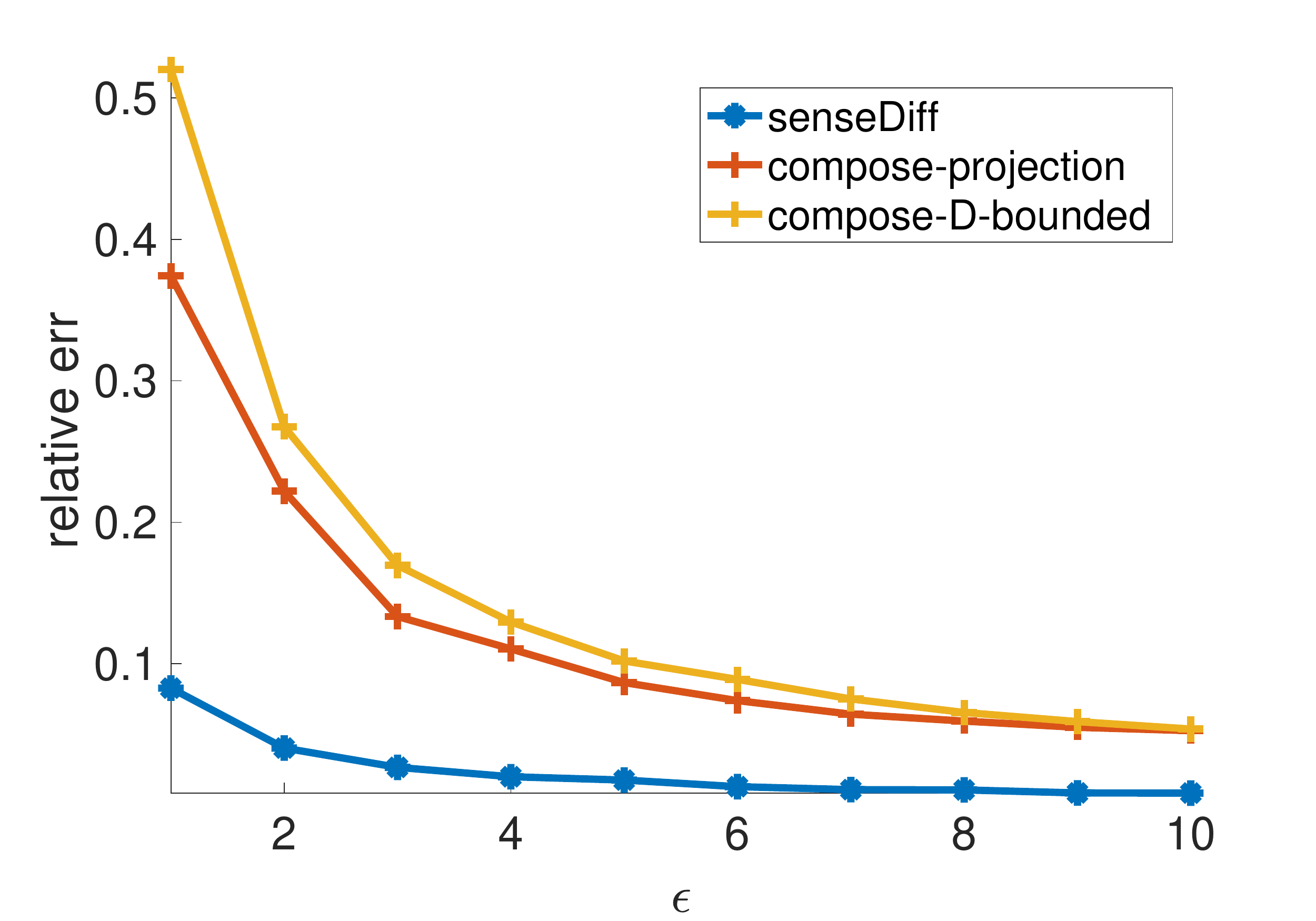}
\\ 
\rotatebox[origin=l]{90}{\qquad Paper citation}
&\includegraphics[width=0.245\textwidth]{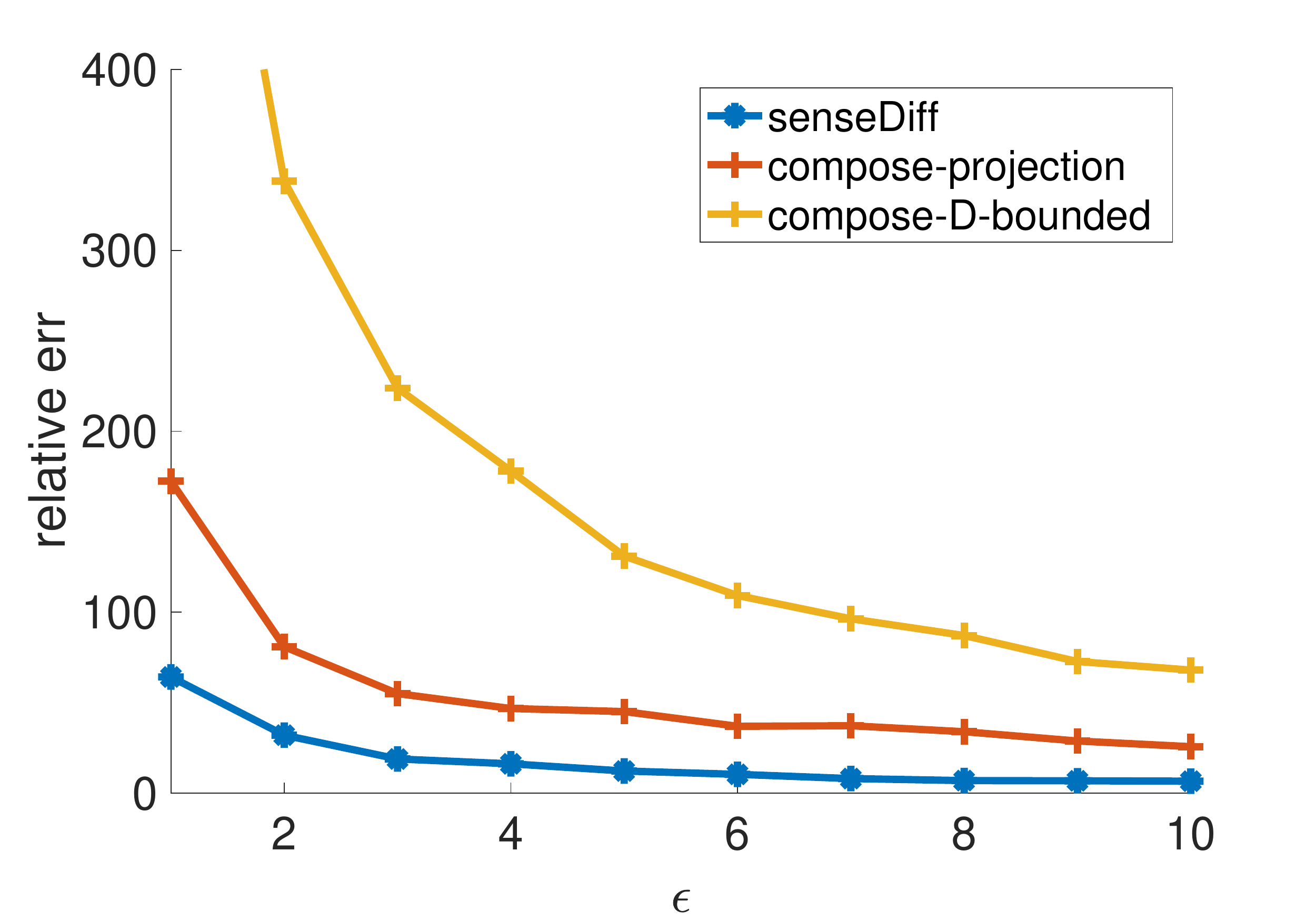}
& \includegraphics[width=0.245\textwidth]{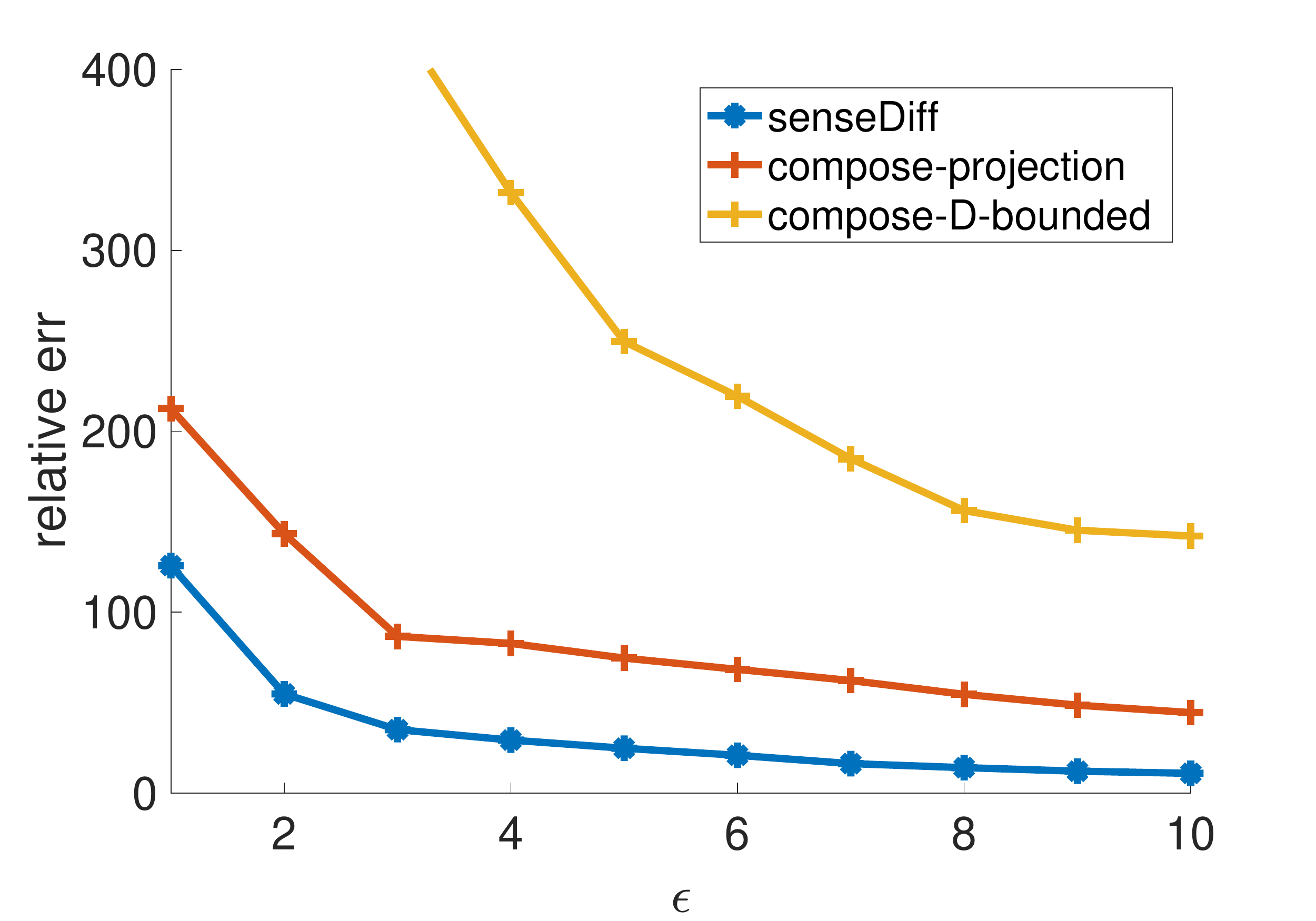}
& \includegraphics[width=0.245\textwidth]{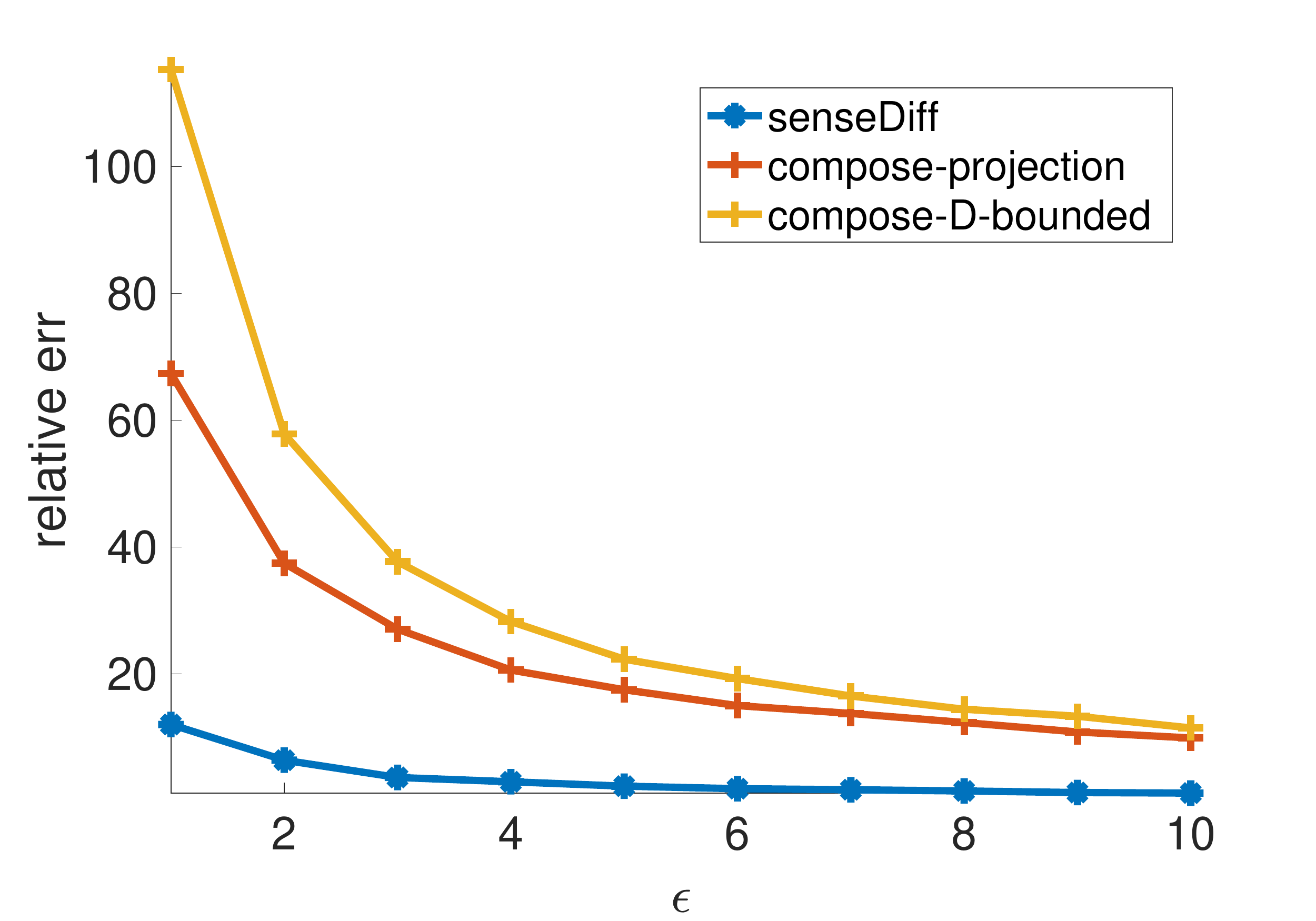}
& \includegraphics[width=0.245\textwidth]{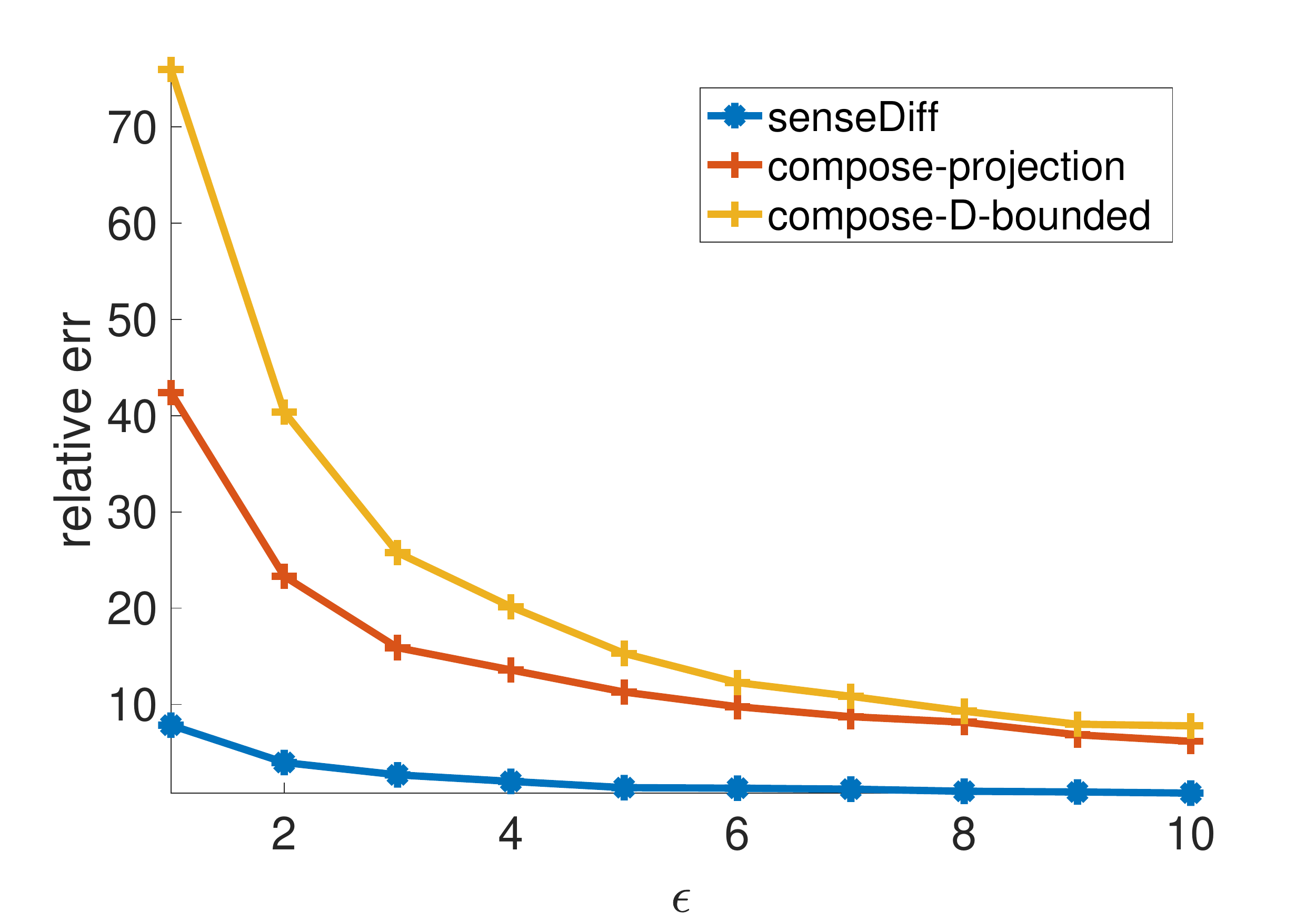}
\\ 
\rotatebox[origin=l]{90}{\qquad \quad Synthetic \rom{1}}
&\includegraphics[width=0.245\textwidth]{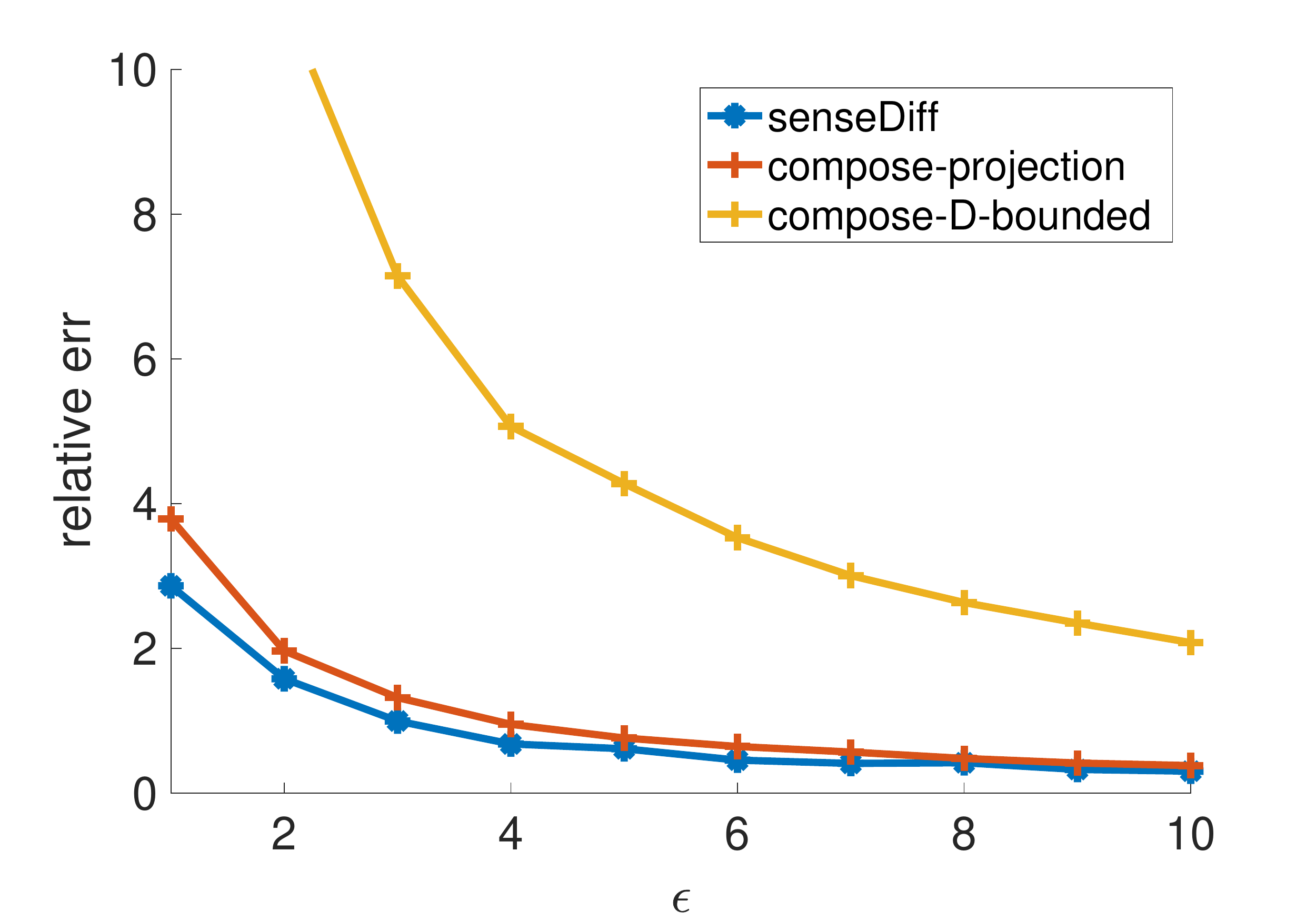}
& \includegraphics[width=0.245\textwidth]{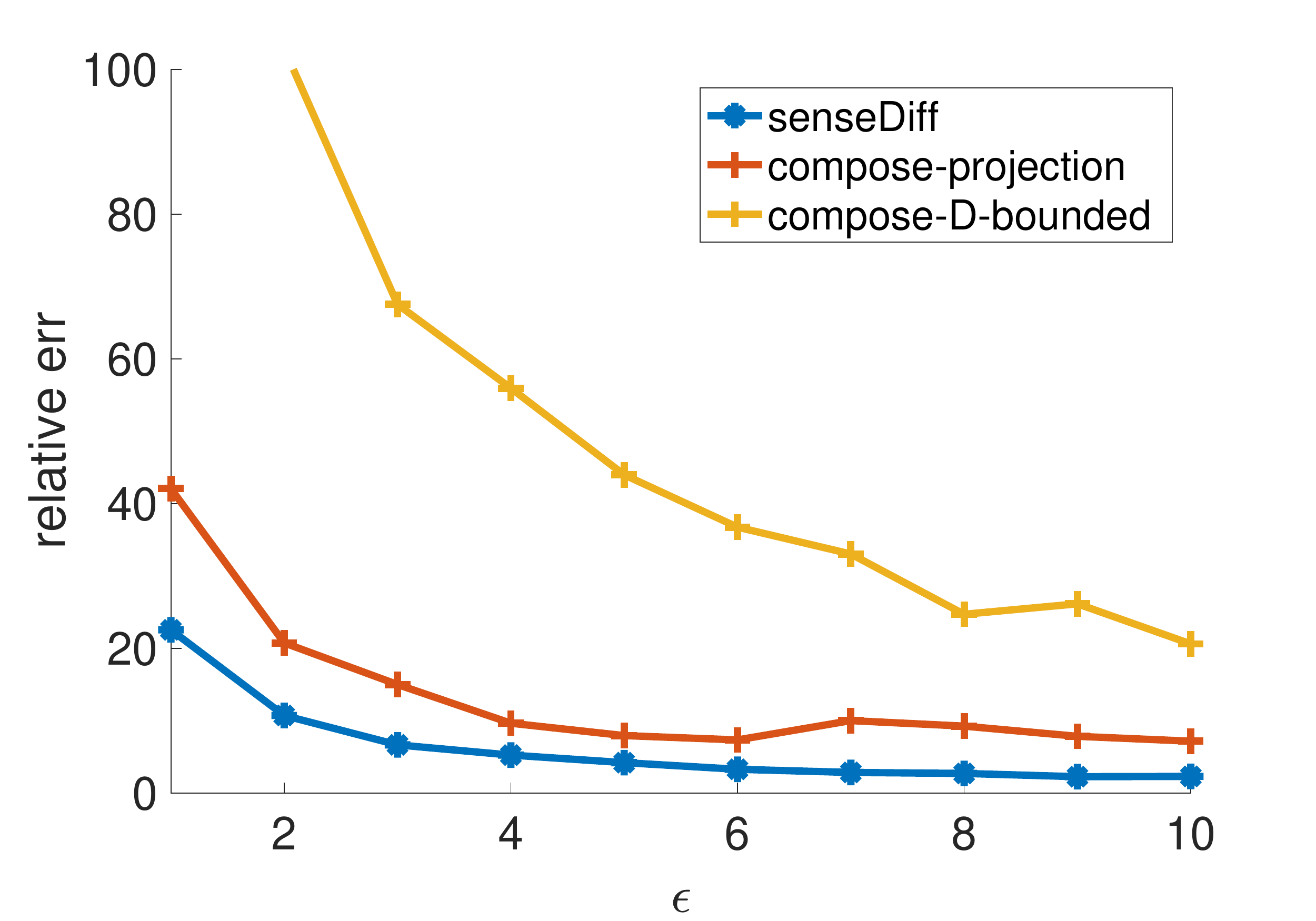}
& \includegraphics[width=0.245\textwidth]{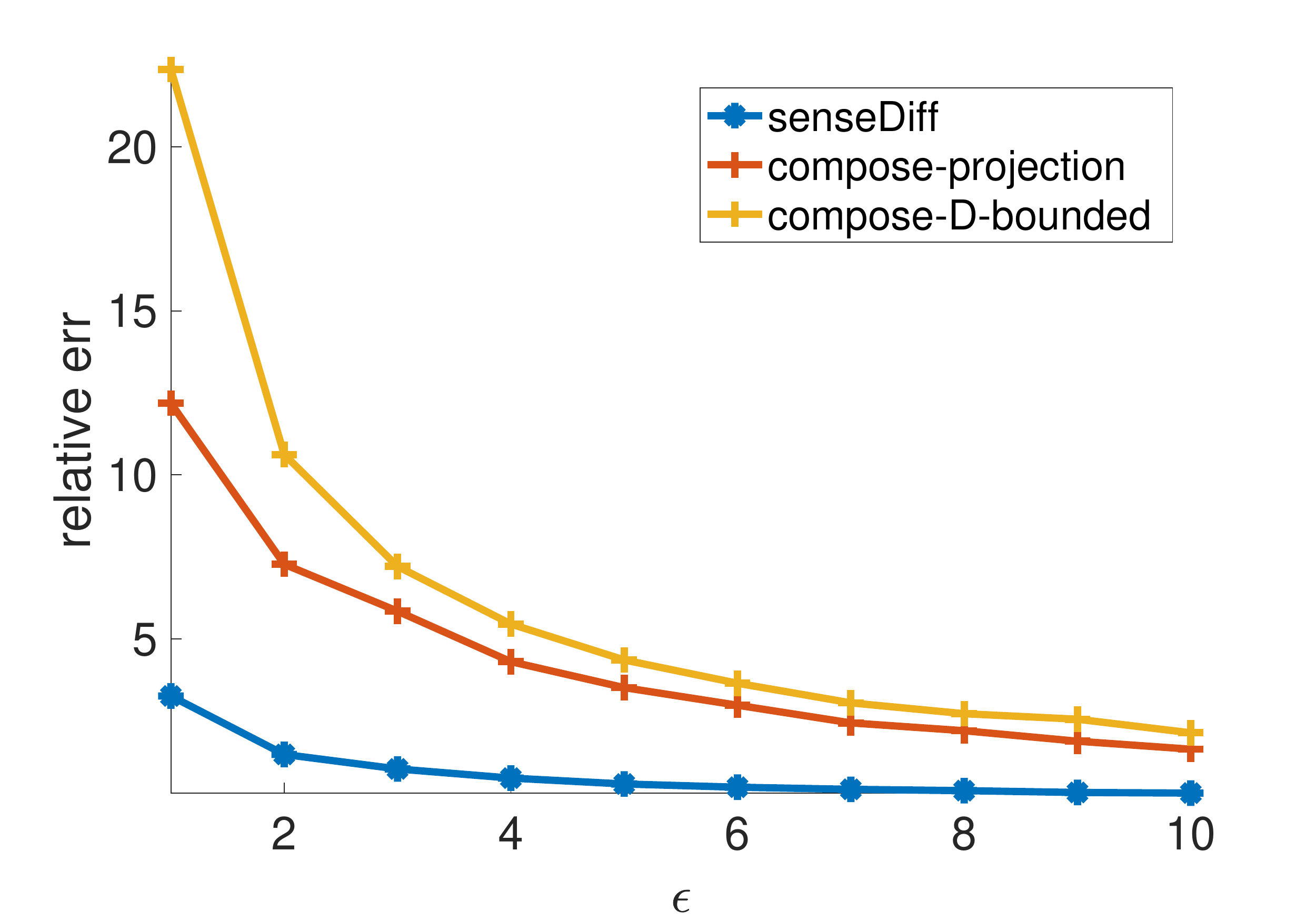}
& \includegraphics[width=0.245\textwidth]{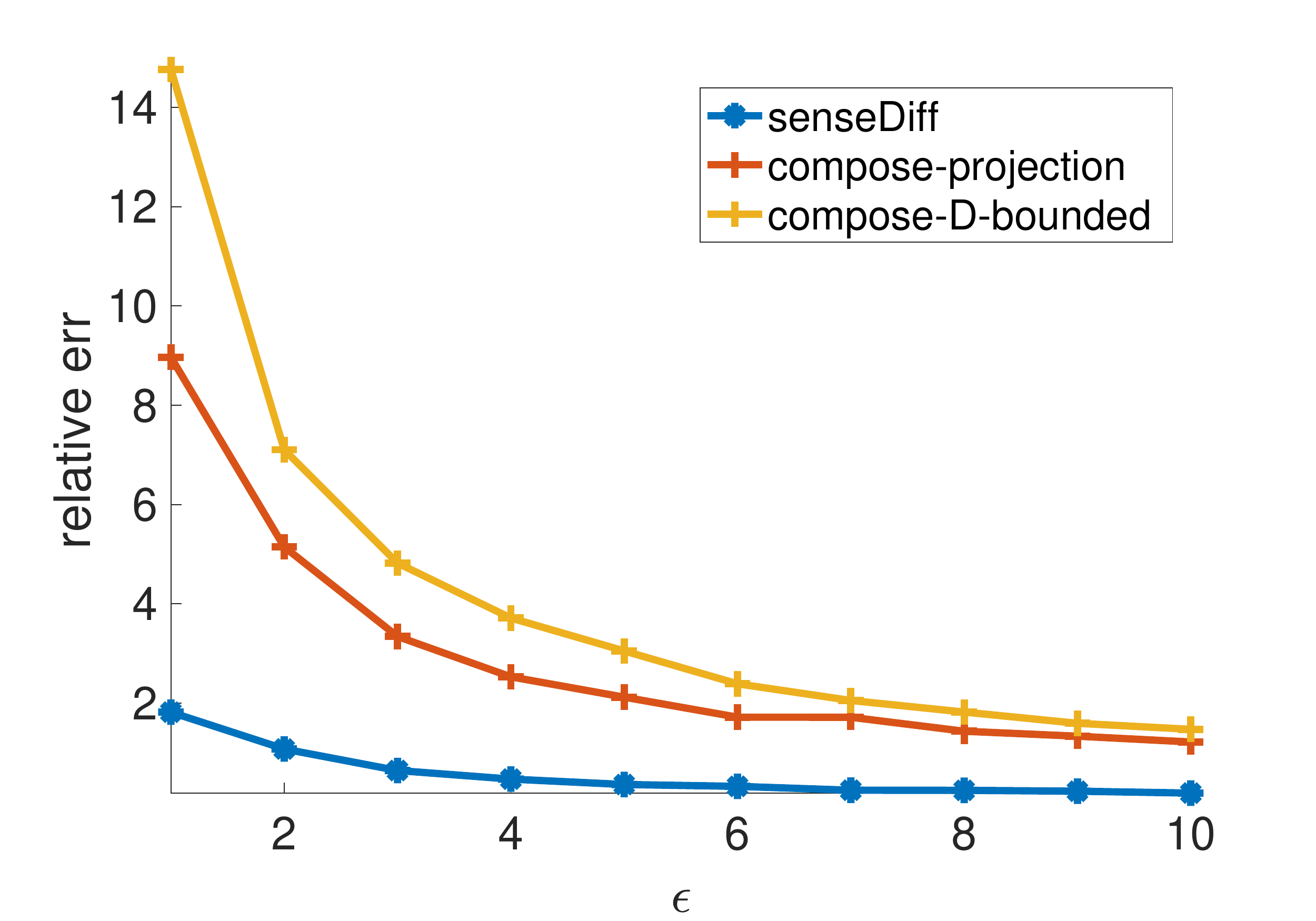}
\\ 
\rotatebox[origin=l]{90}{\qquad \quad Synthetic \rom{2}}
&\includegraphics[width=0.245\textwidth]{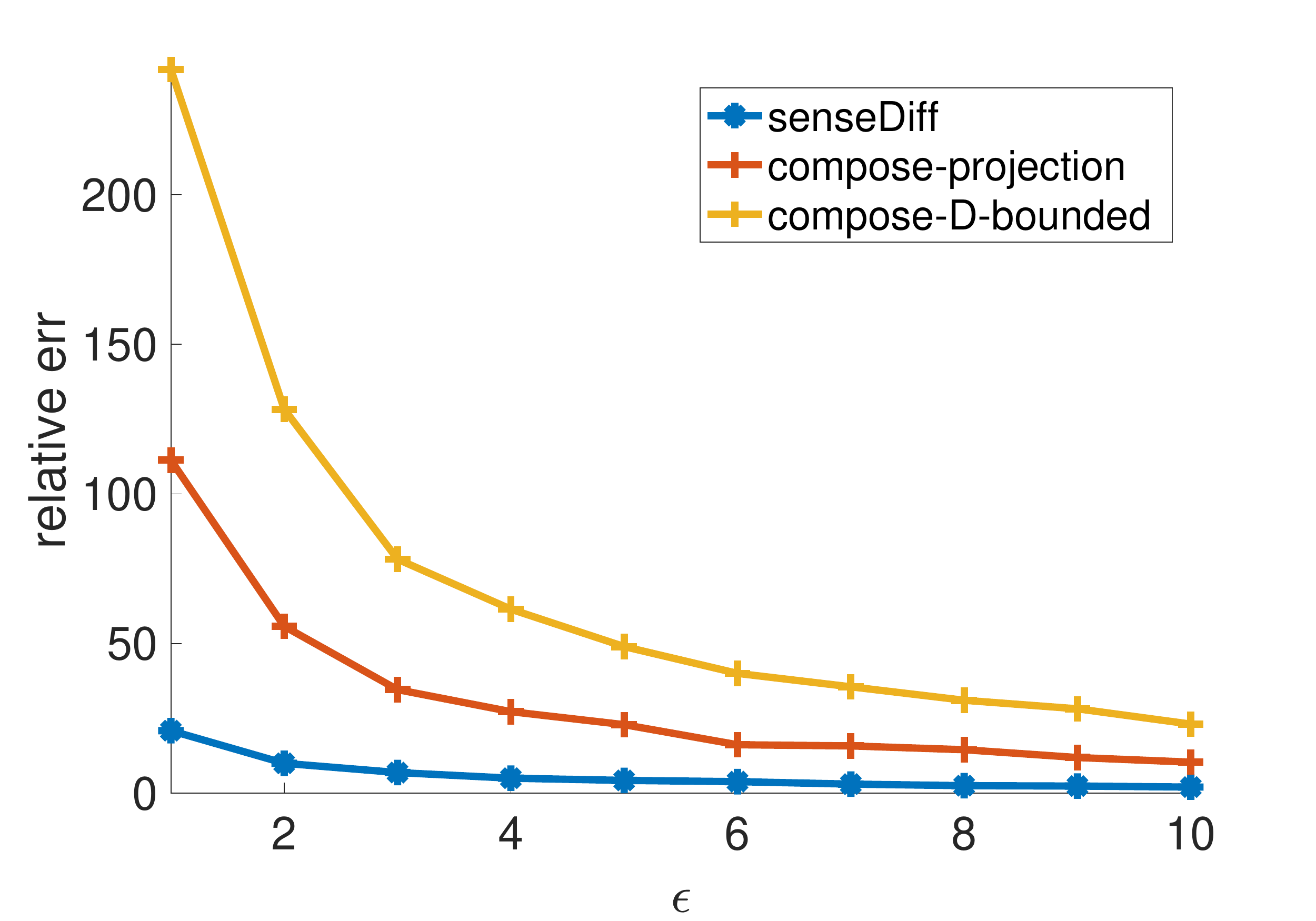}
& \includegraphics[width=0.245\textwidth]{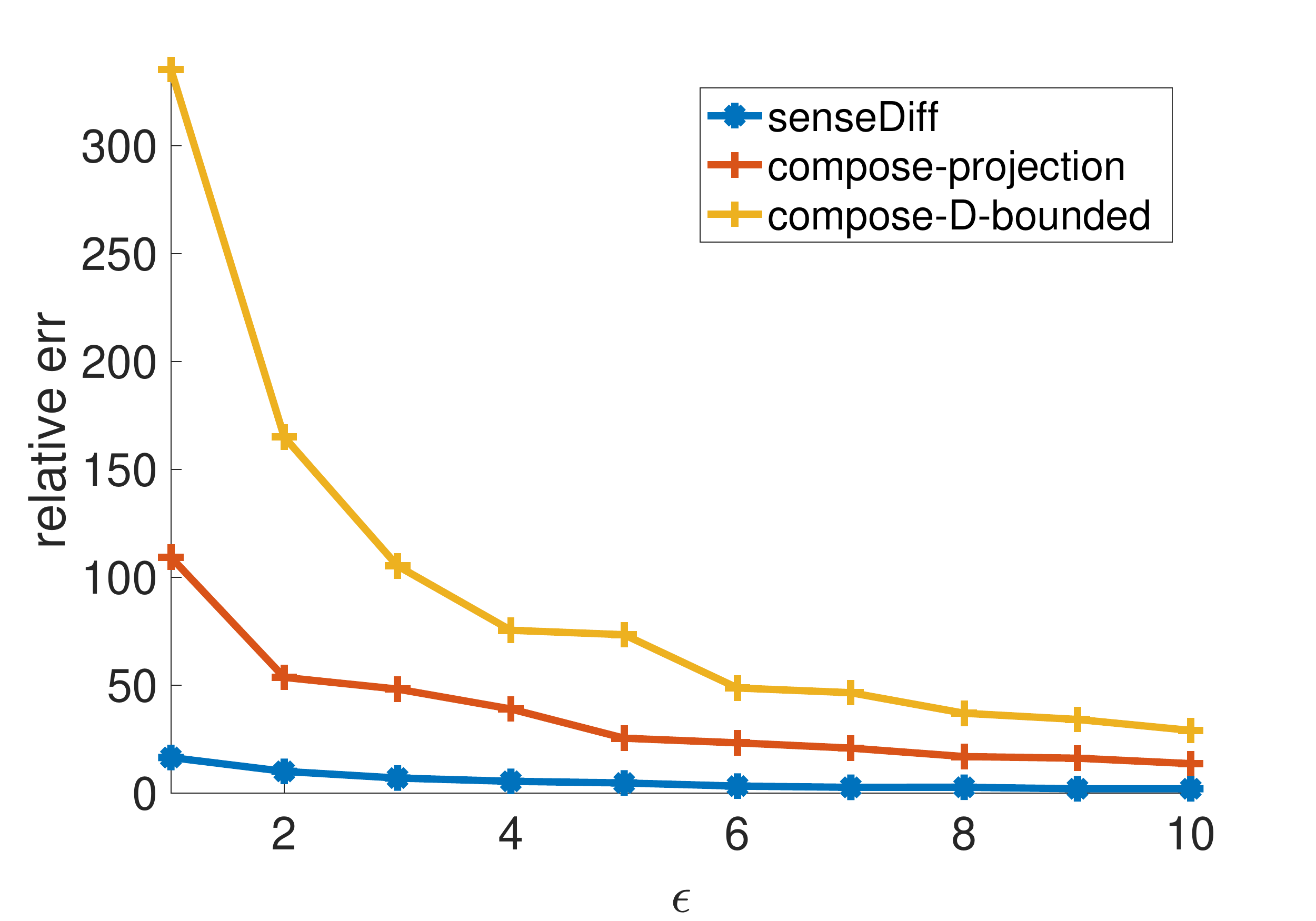}
& \includegraphics[width=0.245\textwidth]{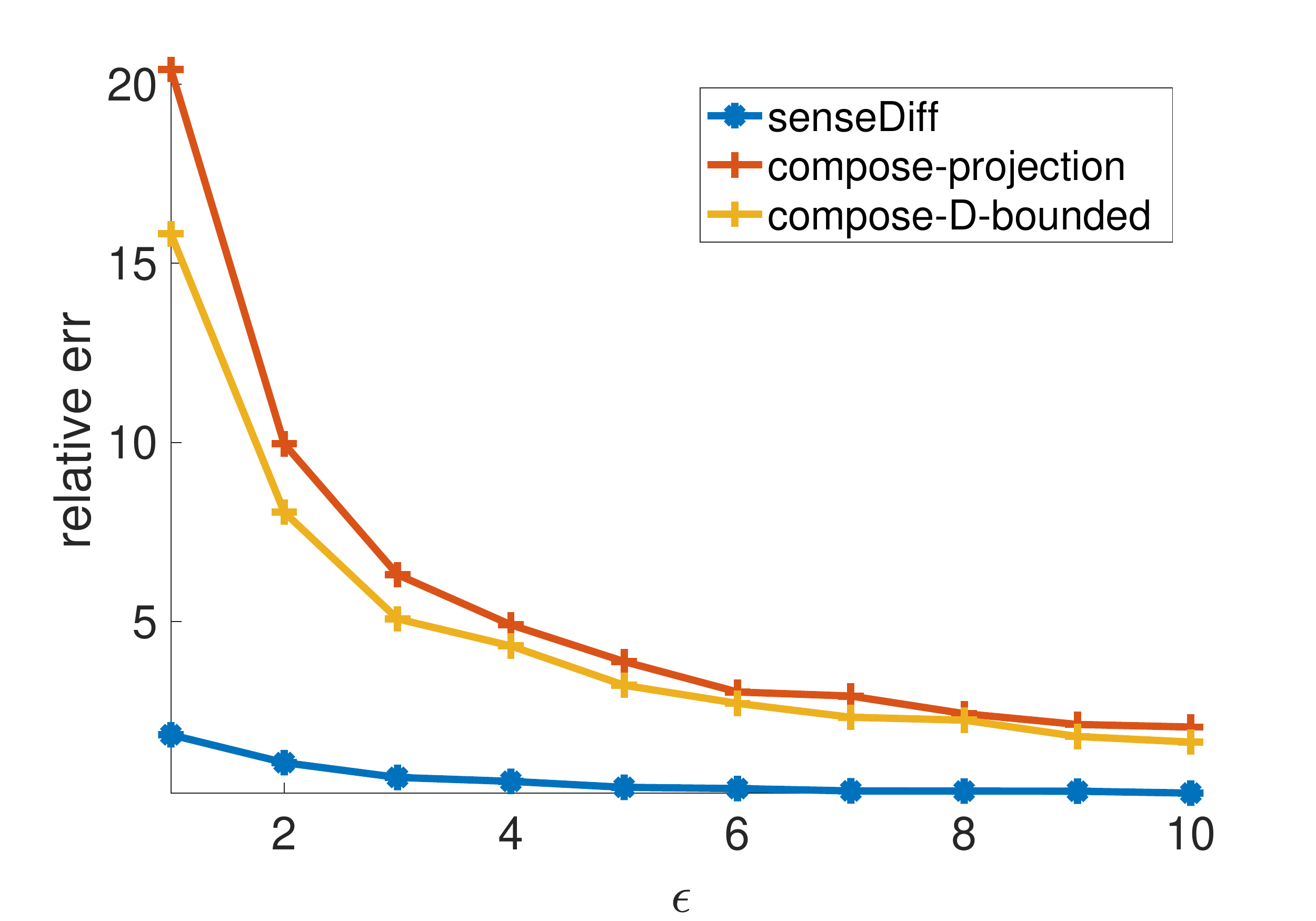}
& \includegraphics[width=0.245\textwidth]{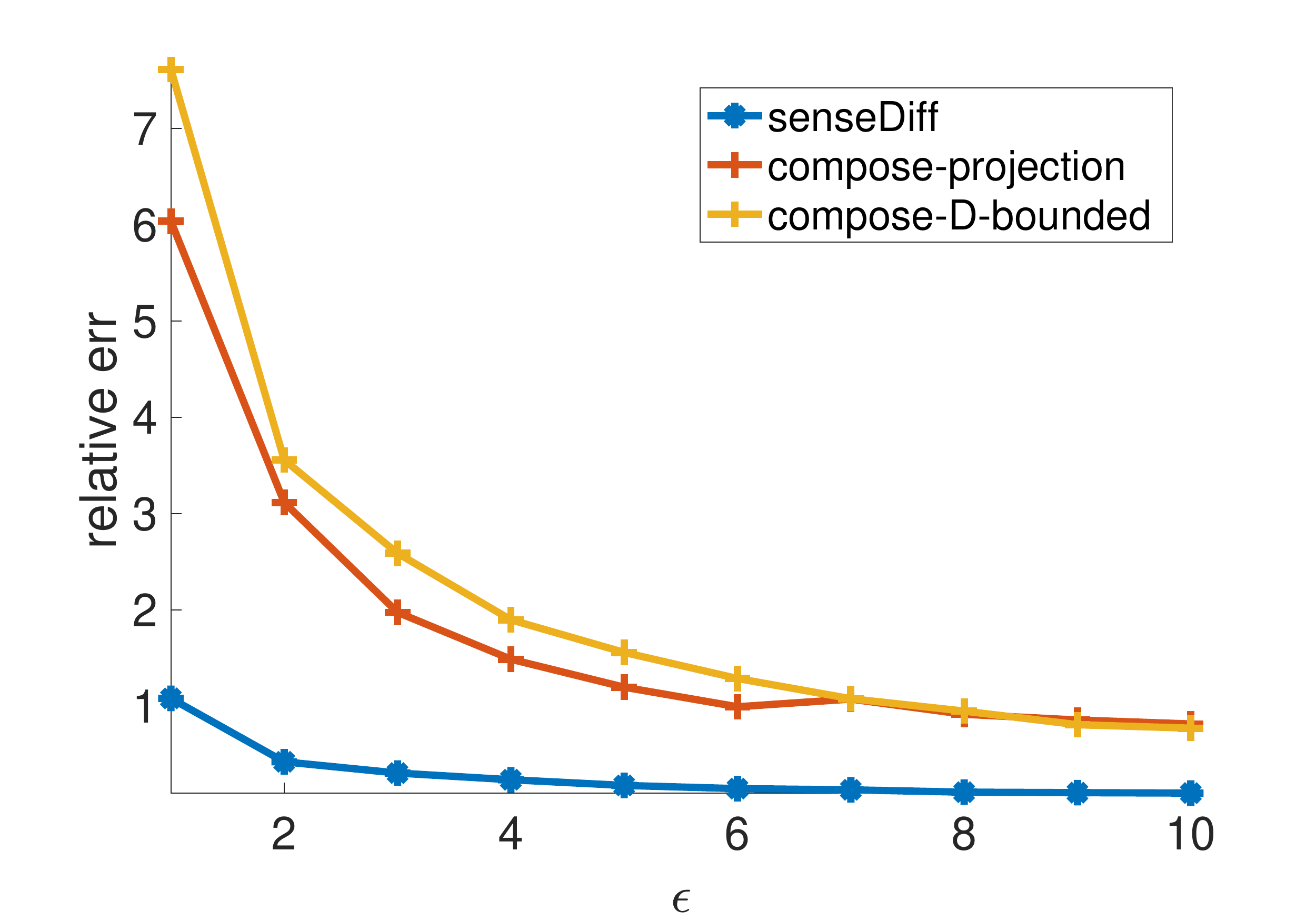}\\
\end{tabularx}}
\captionof{figure}{$L_1$ error vs. privacy parameter $\epsilon$. Publish every $1$ year. Averaged over $100$ runs.}
\label{fig:epsilon}
\end{table*}

\begin{table*}[!t]
\setlength\tabcolsep{0pt} %
\resizebox{\textwidth}{!}{
\begin{tabularx}{\textwidth}{c c c c c}
& Directed, high-degree & Undirected, high-degree & Directed, edge & Undirected, edge \\ 
\rotatebox{90}{\quad HIV transmission}
&\includegraphics[width=0.245\textwidth]{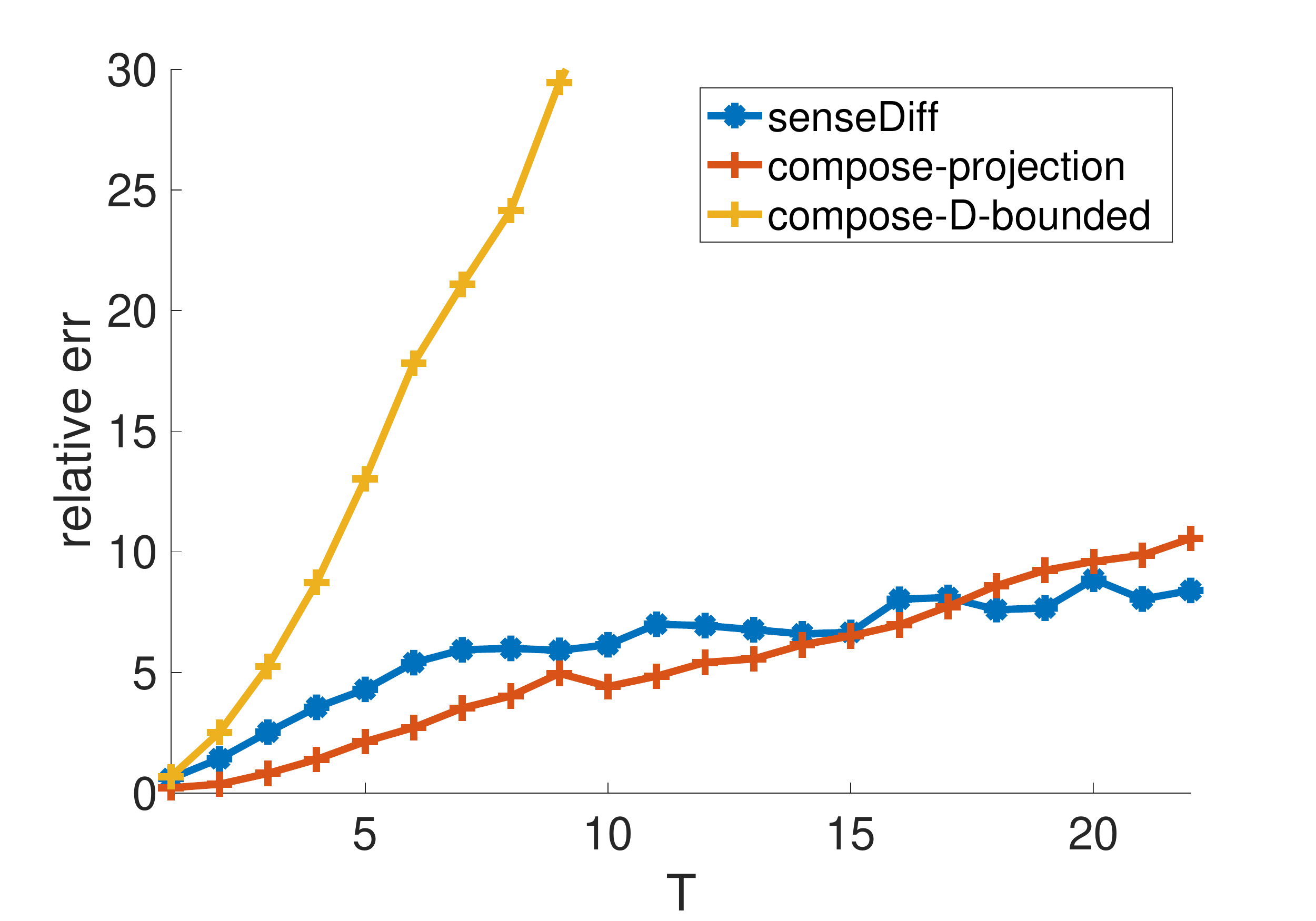}
& \includegraphics[width=0.245\textwidth]{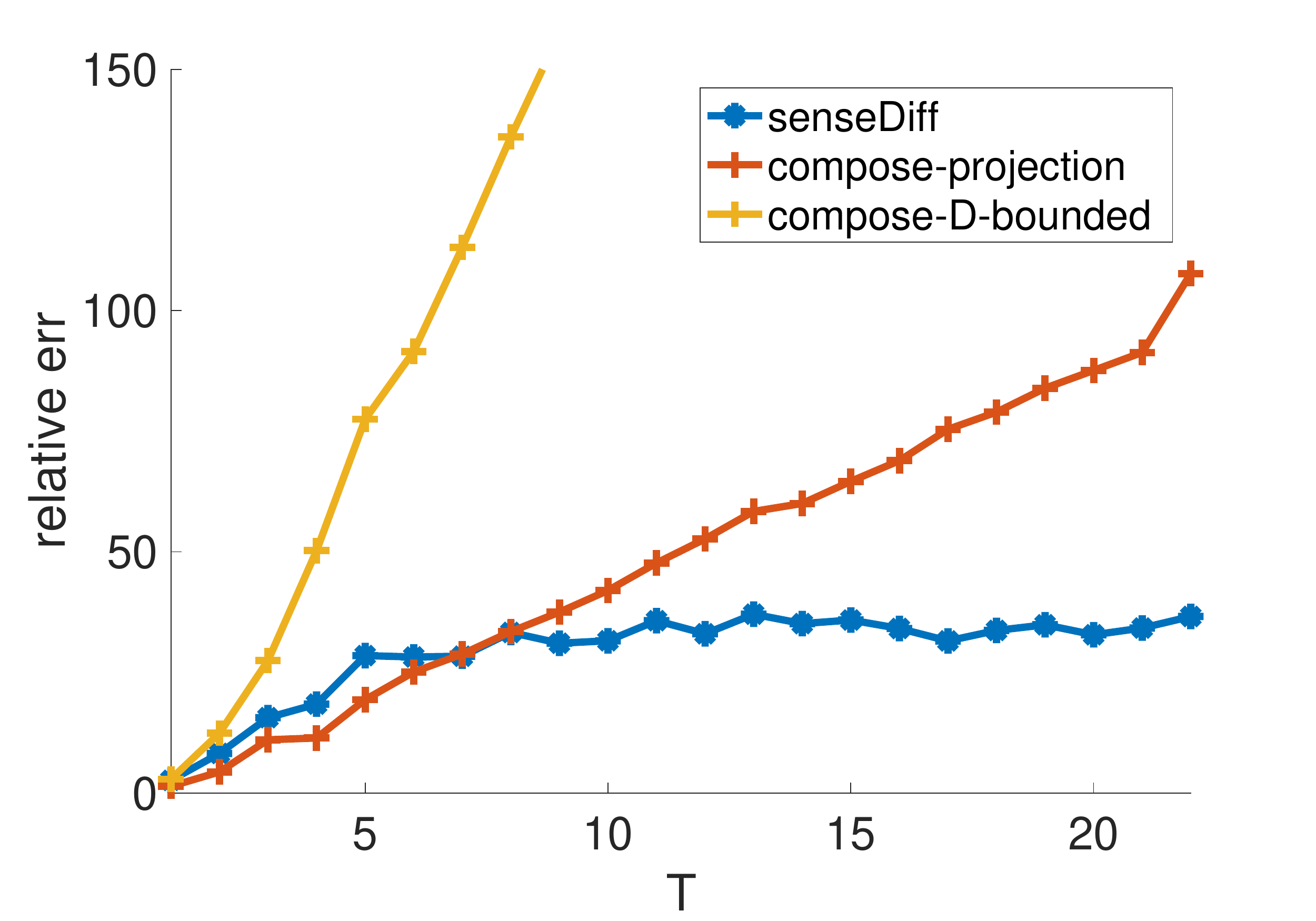}
& \includegraphics[width=0.245\textwidth]{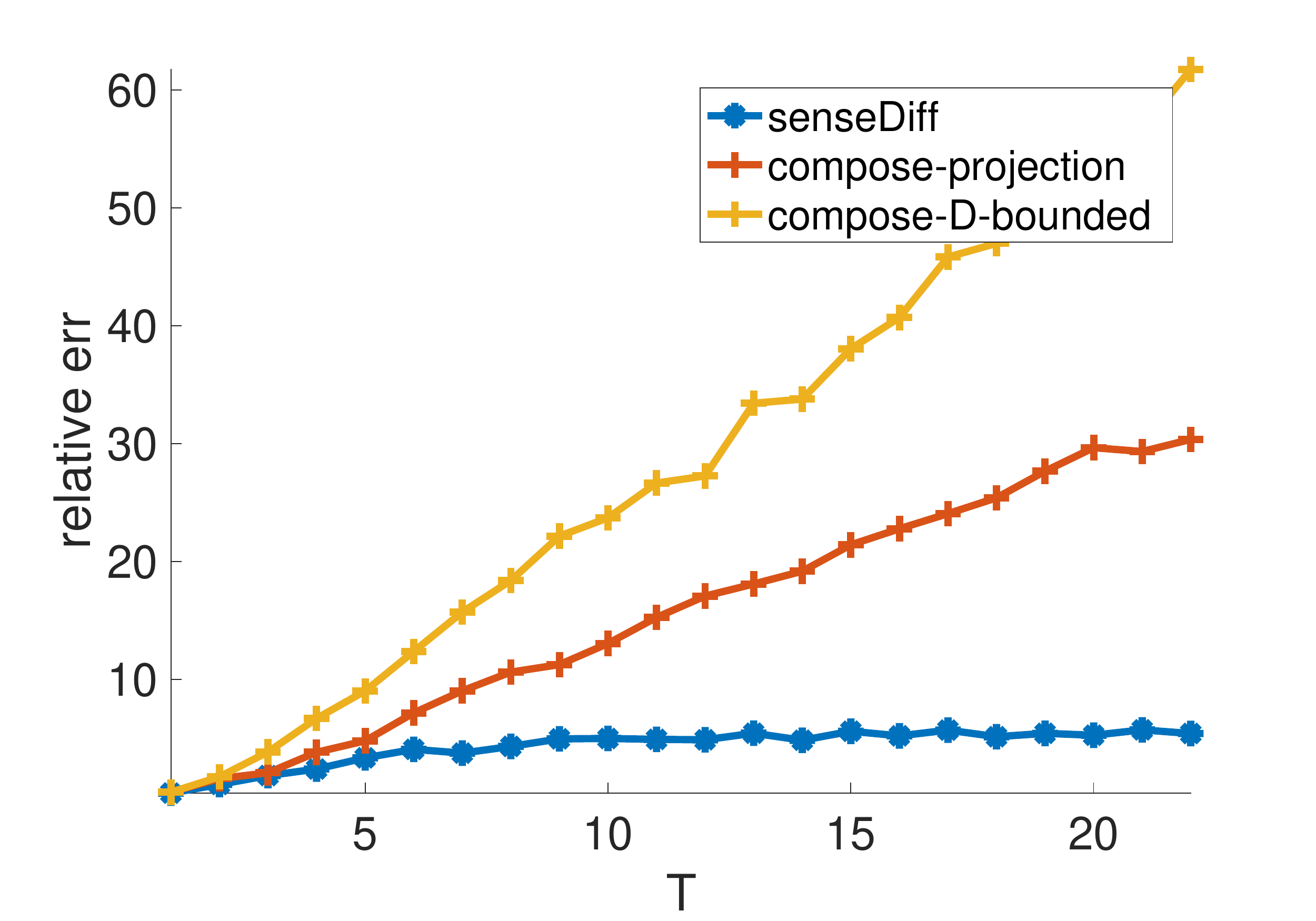}
& \includegraphics[width=0.245\textwidth]{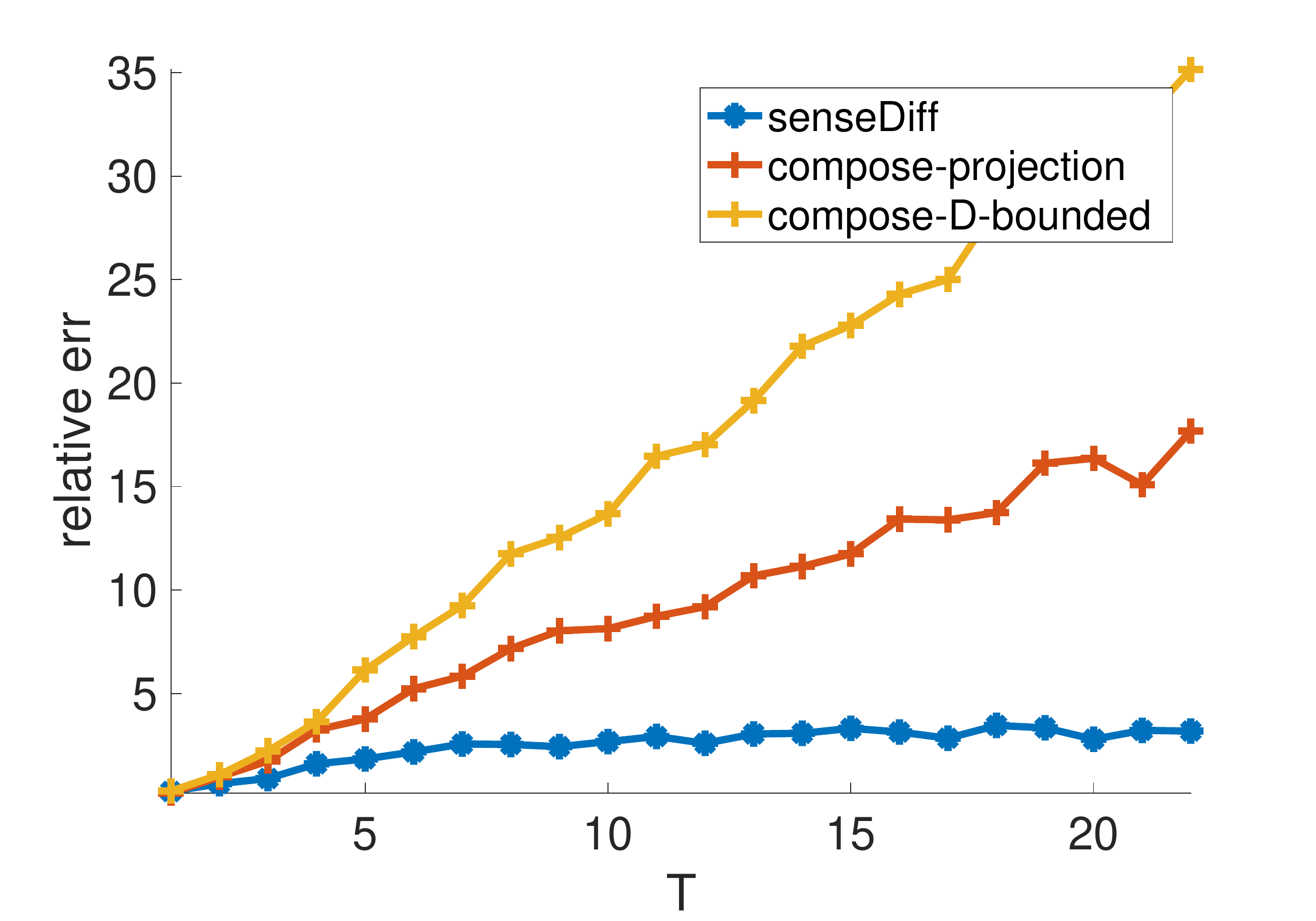}
\\ 
\rotatebox{90}{\qquad Patent citation}
&\includegraphics[width=0.245\textwidth]{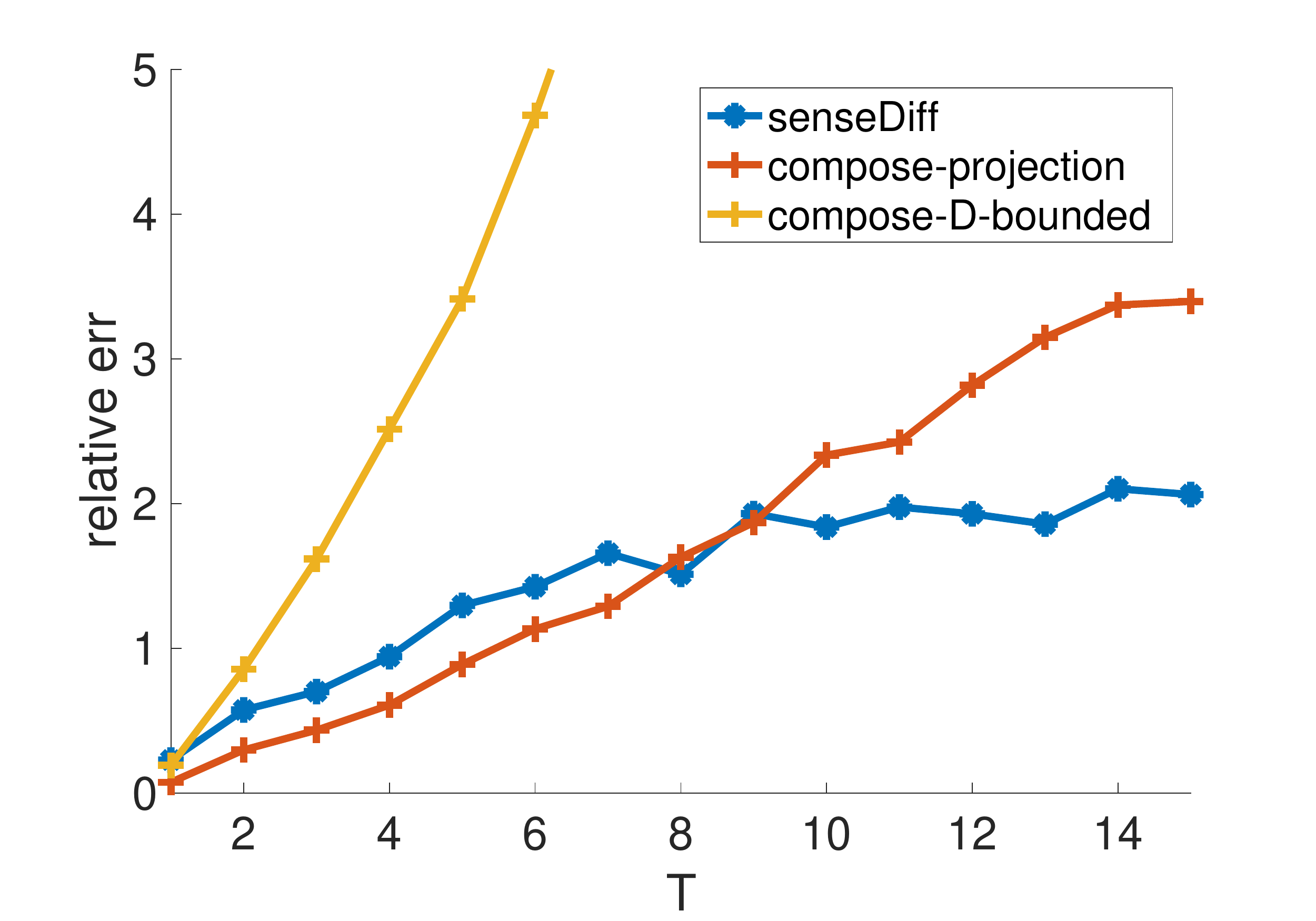}
& \includegraphics[width=0.245\textwidth]{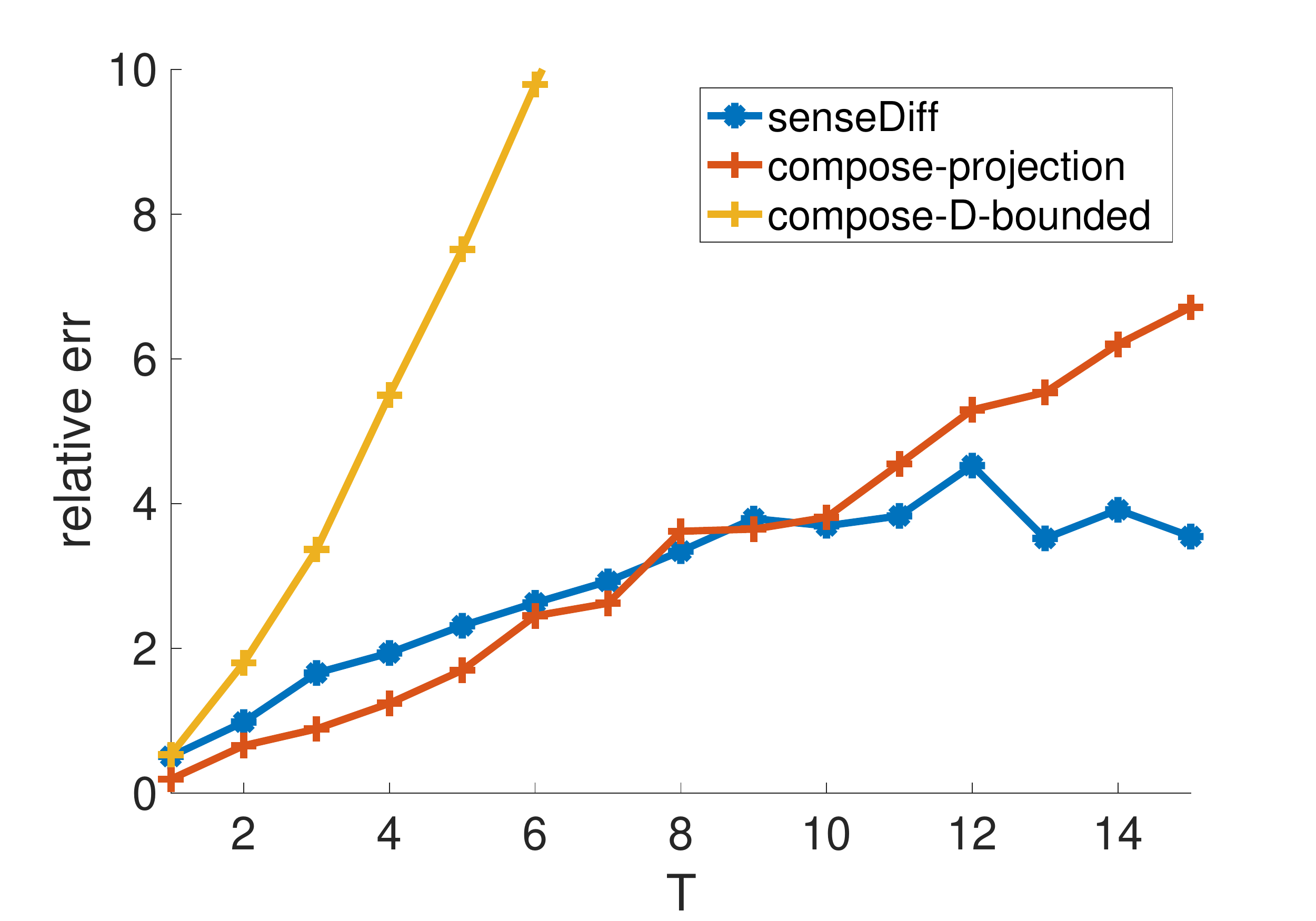}
& \includegraphics[width=0.245\textwidth]{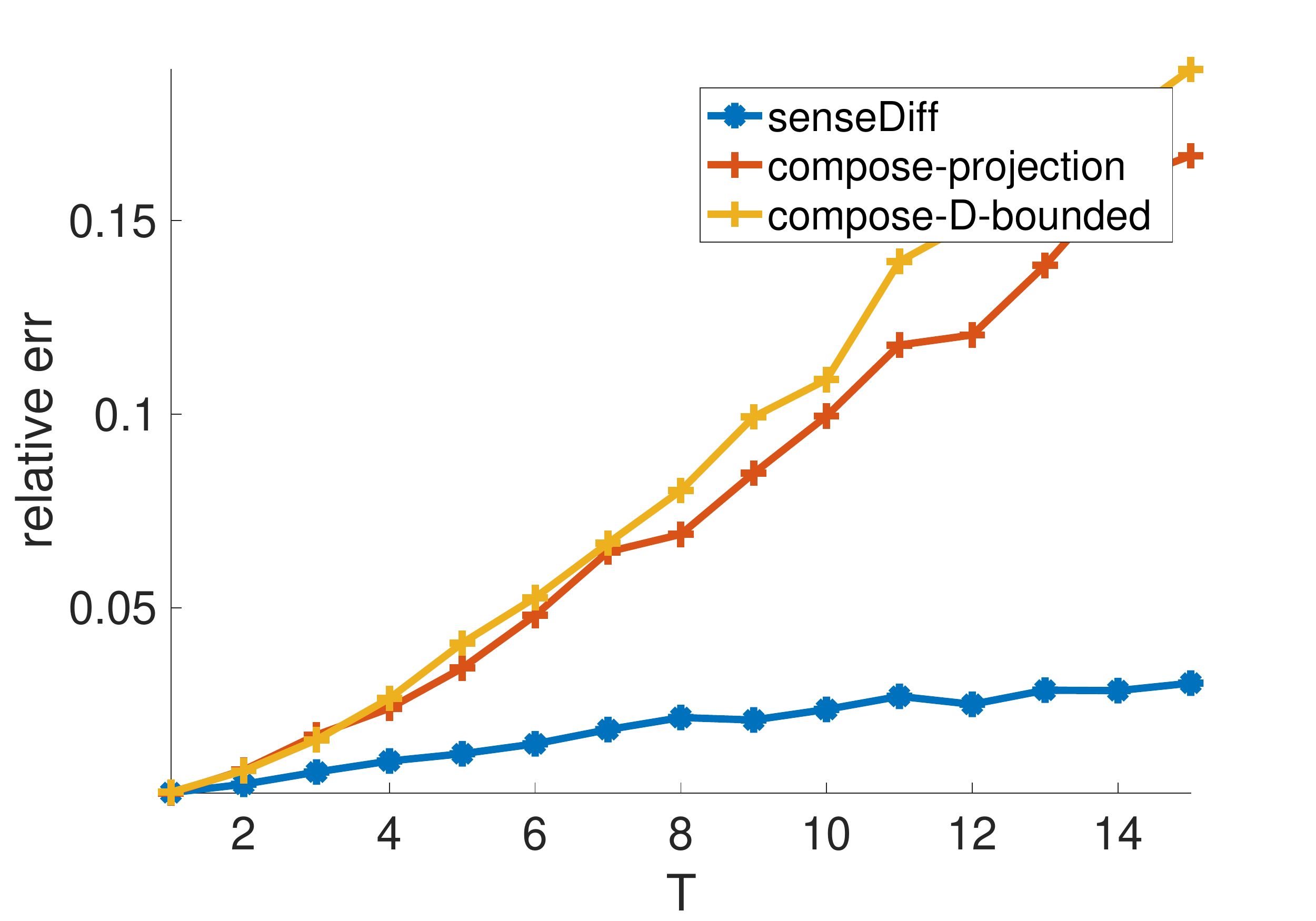}
& \includegraphics[width=0.245\textwidth]{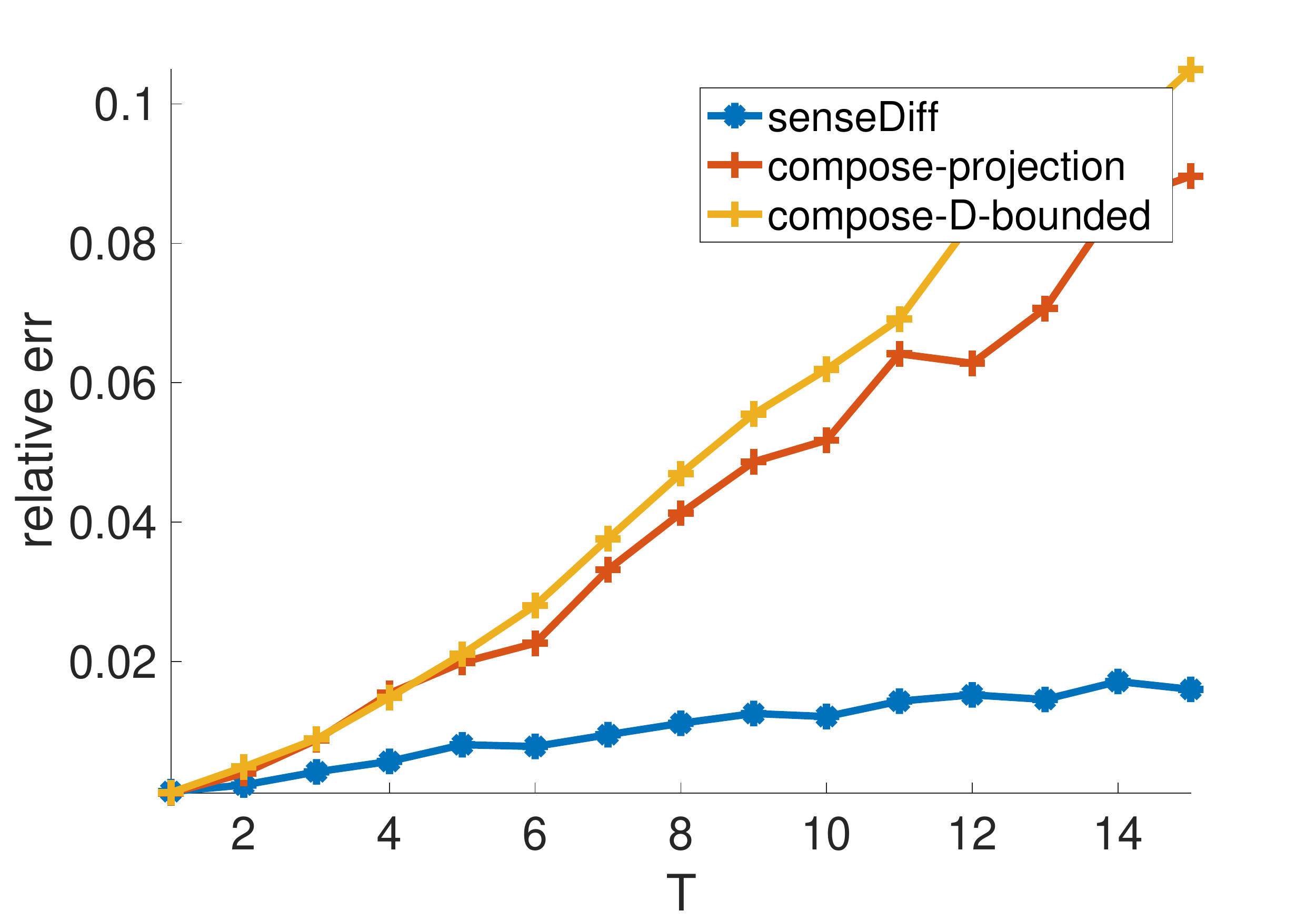}
\\ 
\rotatebox{90}{\qquad Paper citation}
&\includegraphics[width=0.245\textwidth]{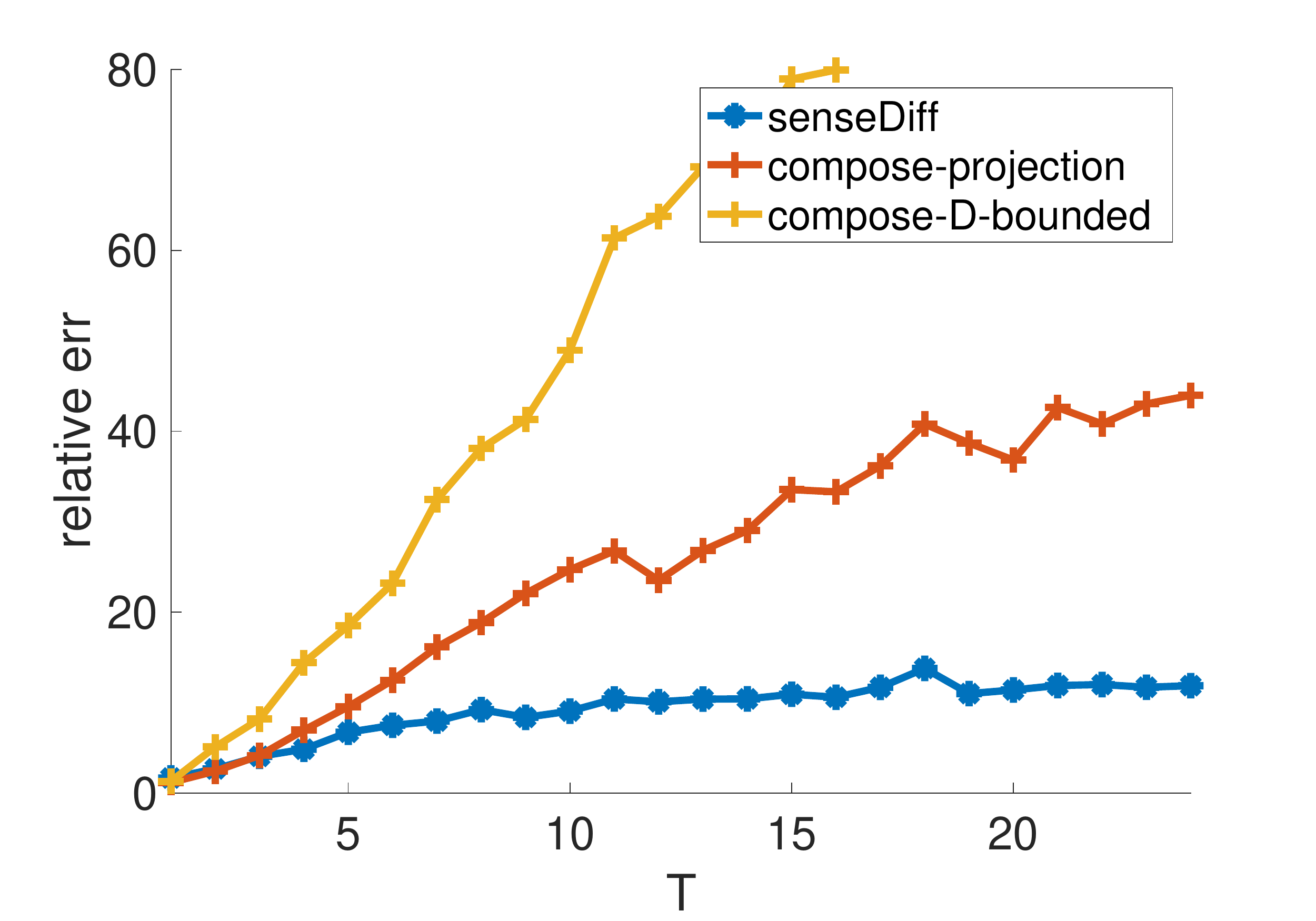}
& \includegraphics[width=0.245\textwidth]{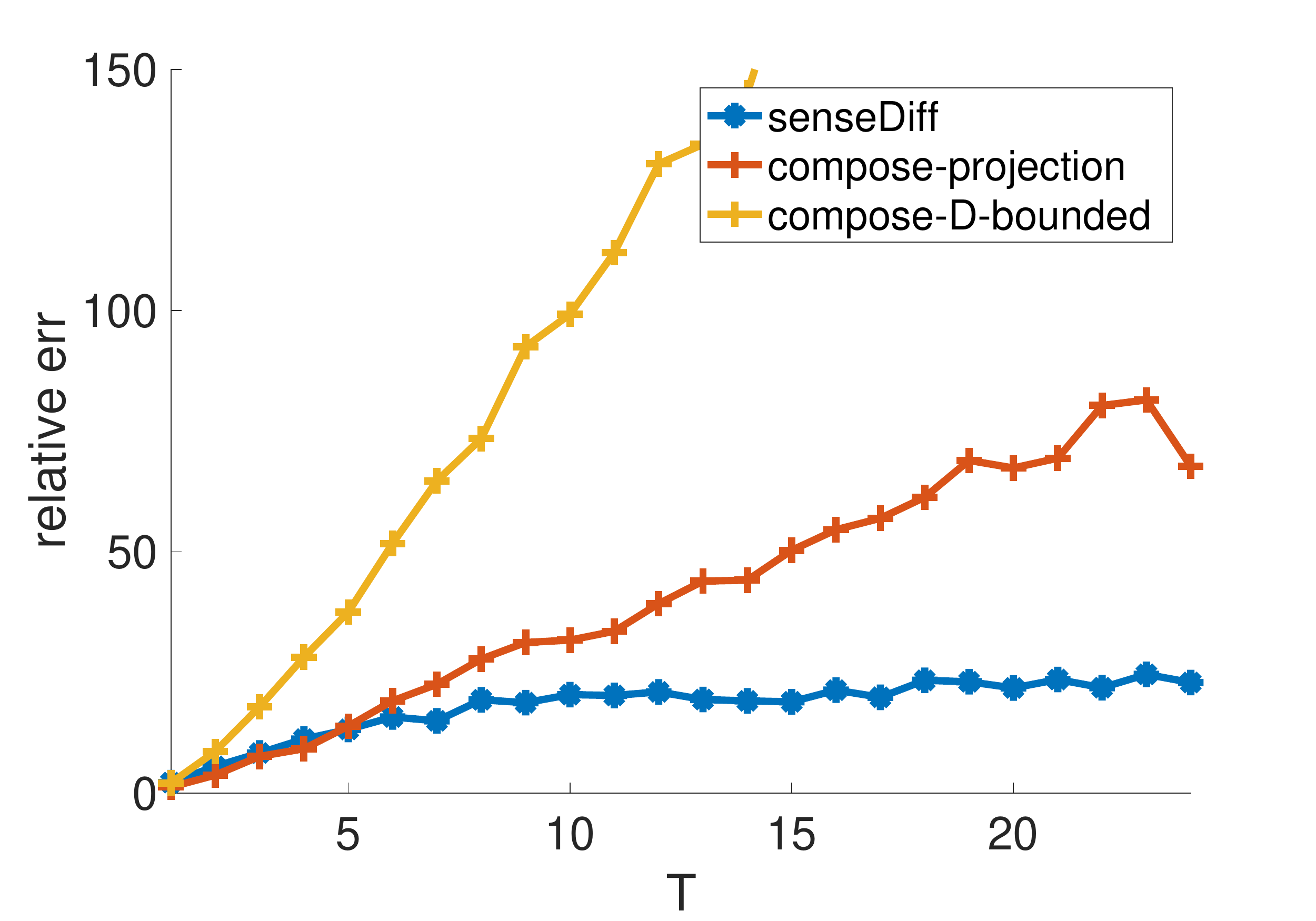}
& \includegraphics[width=0.245\textwidth]{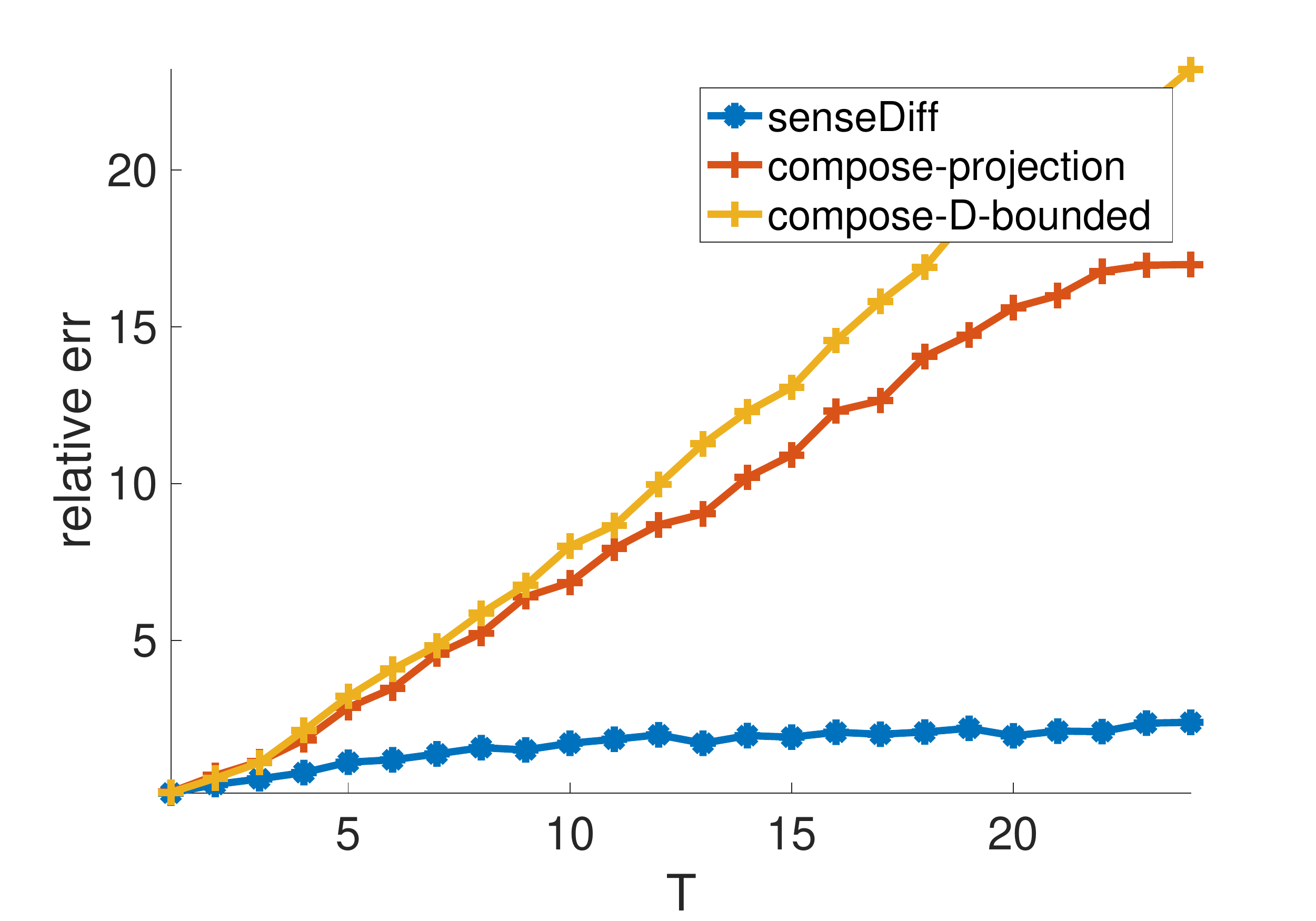}
& \includegraphics[width=0.245\textwidth]{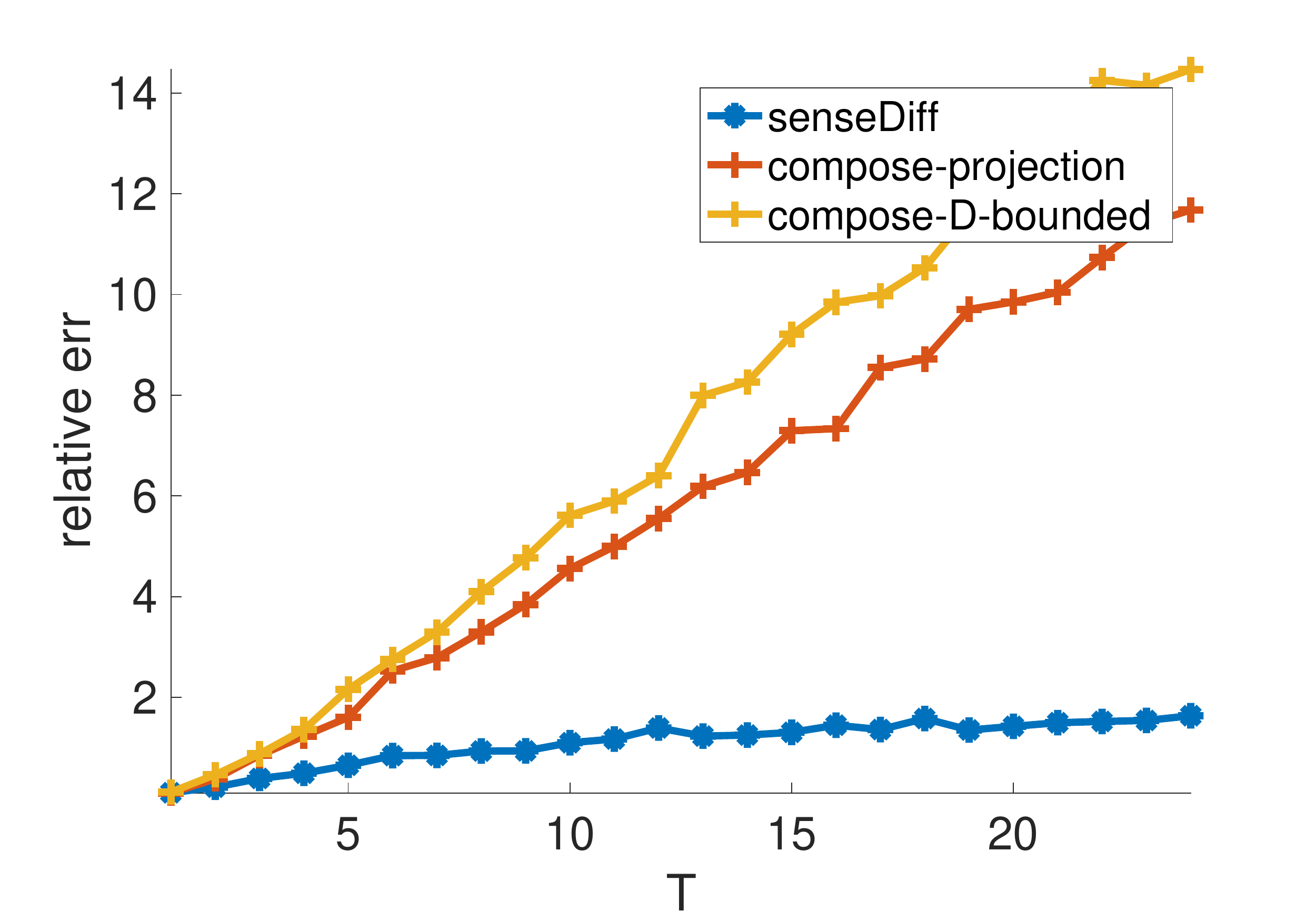}
\\ 
\rotatebox{90}{\qquad \quad Synthetic \rom{1}}
&\includegraphics[width=0.245\textwidth]{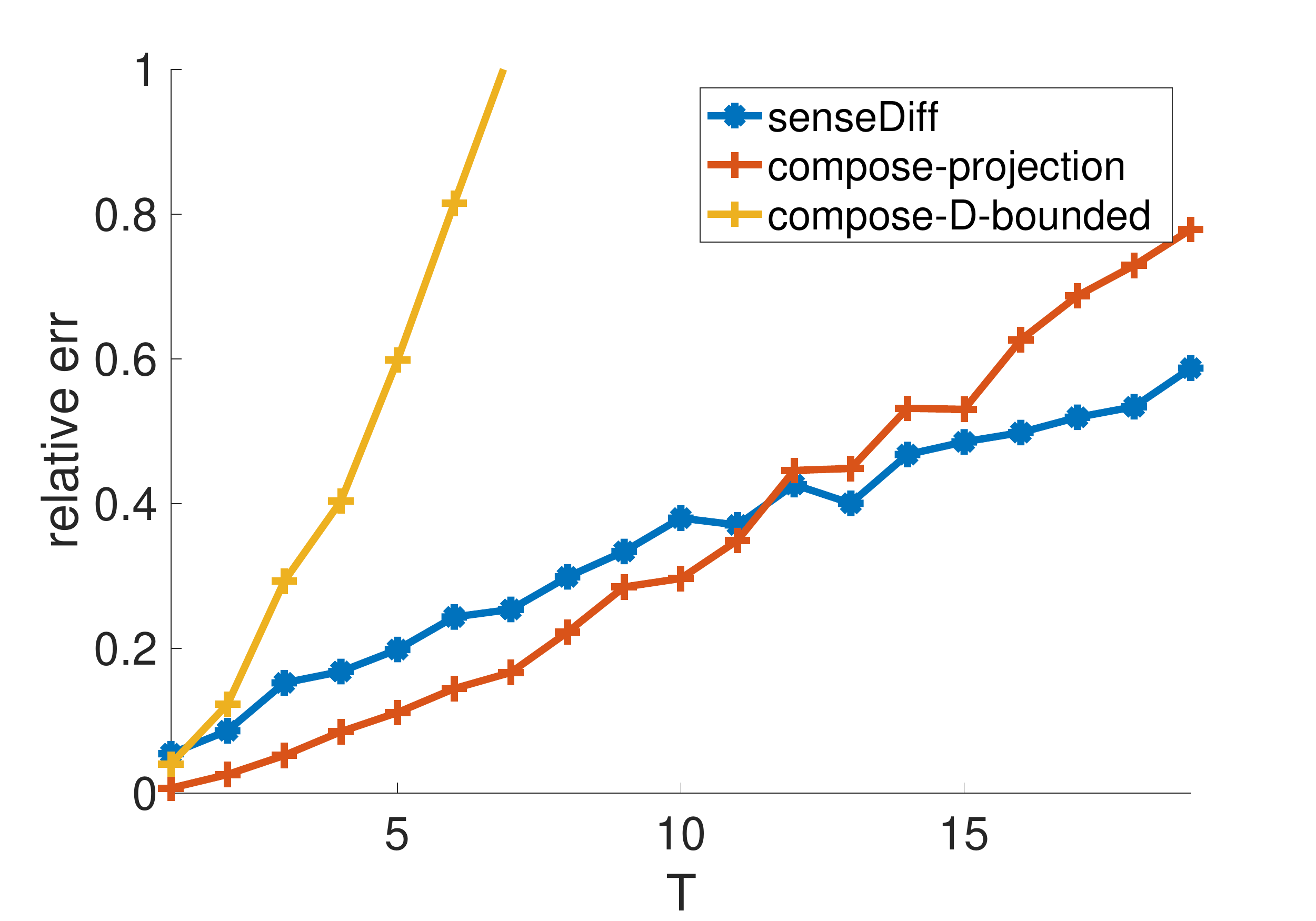}
& \includegraphics[width=0.245\textwidth]{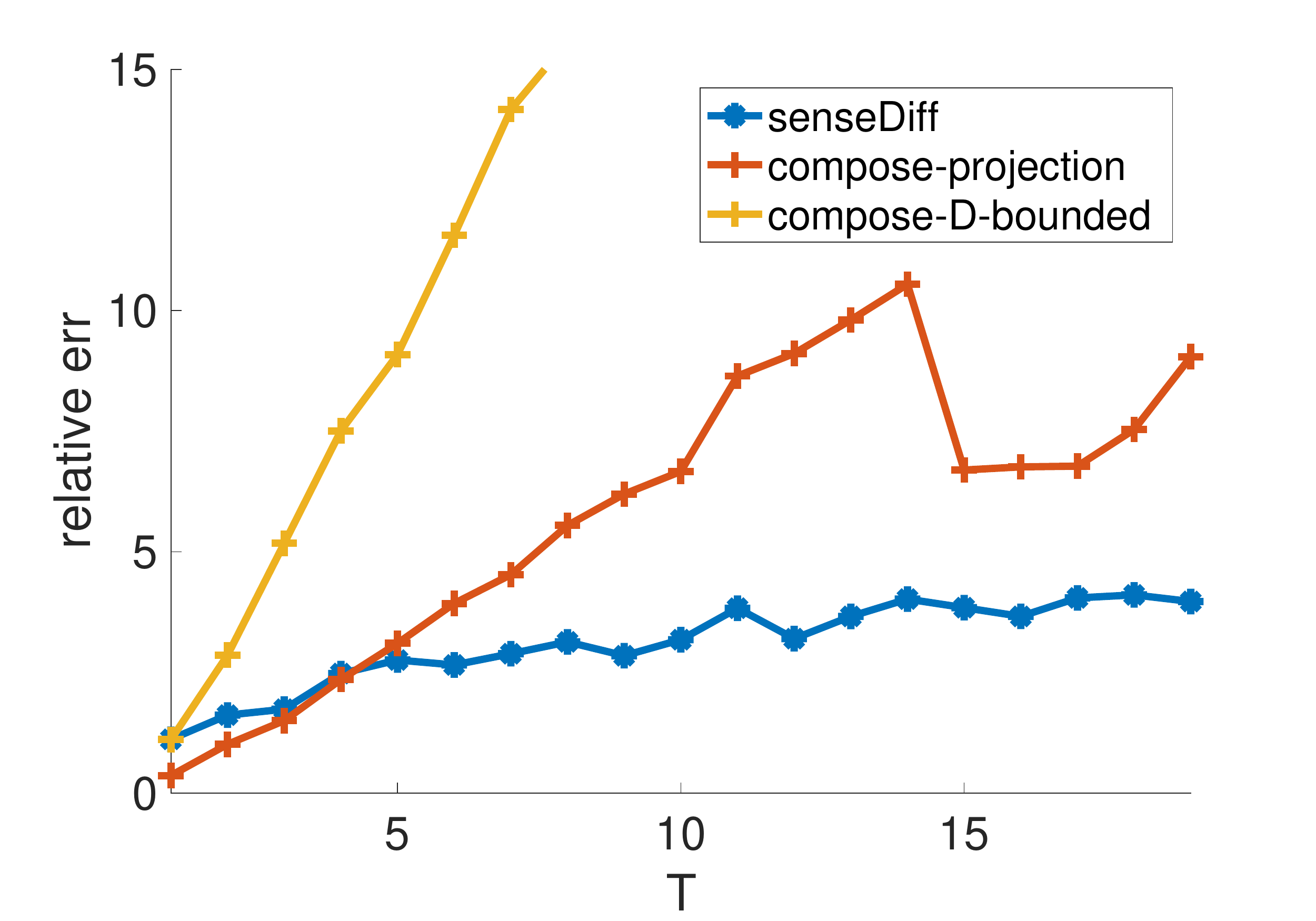}
& \includegraphics[width=0.245\textwidth]{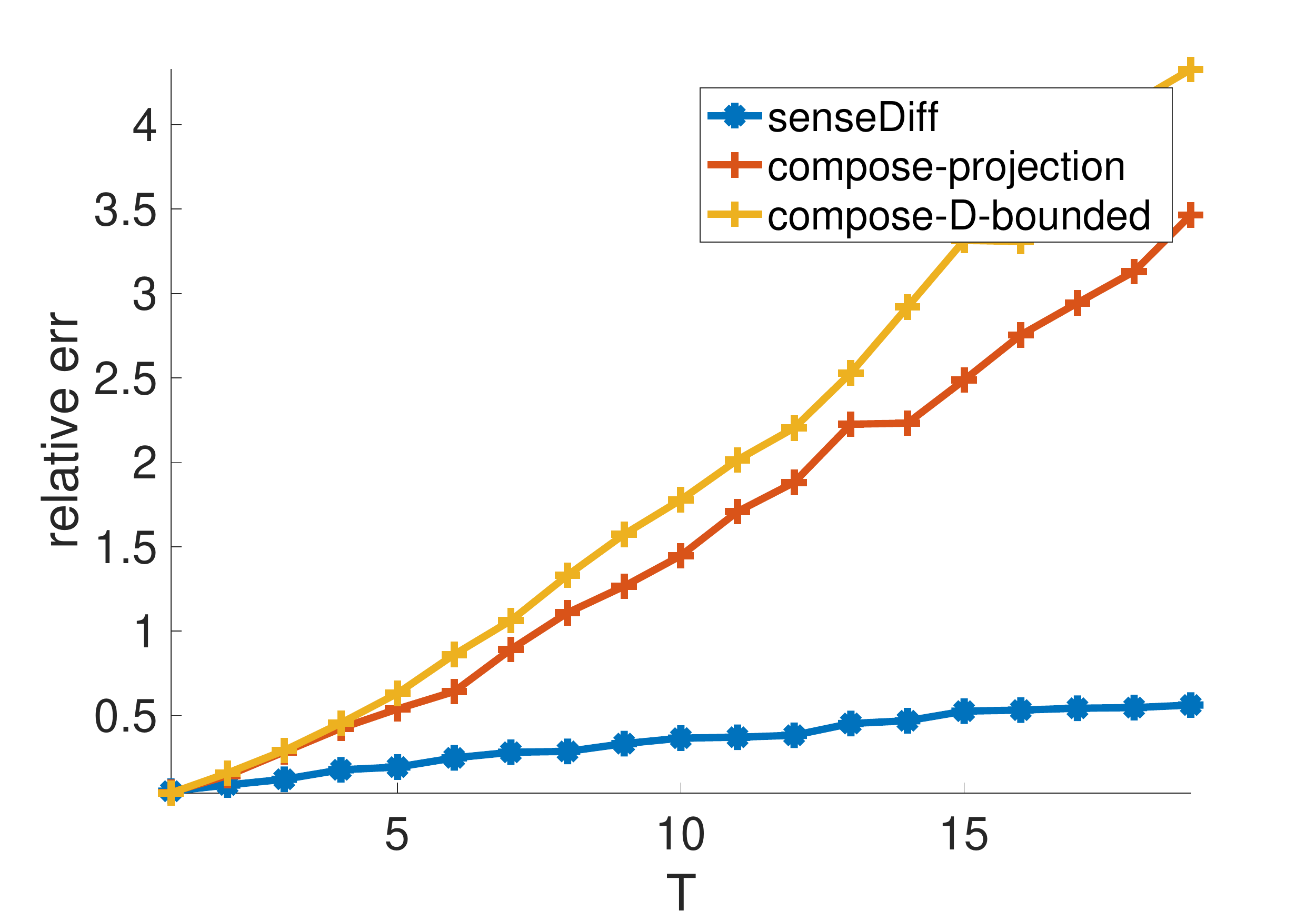}
& \includegraphics[width=0.245\textwidth]{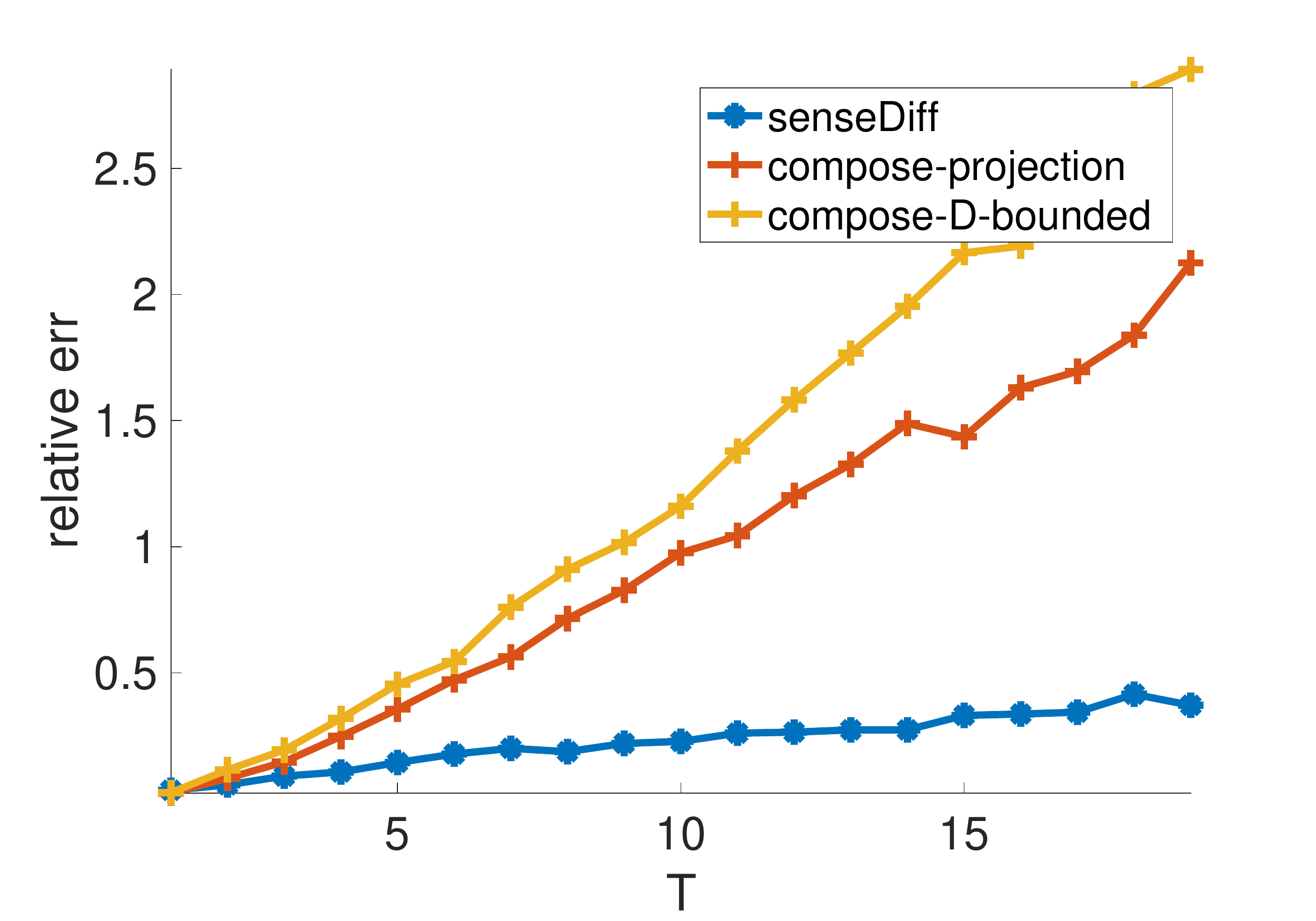}
\\ 
\rotatebox{90}{\qquad \quad Synthetic \rom{2}}
&\includegraphics[width=0.245\textwidth]{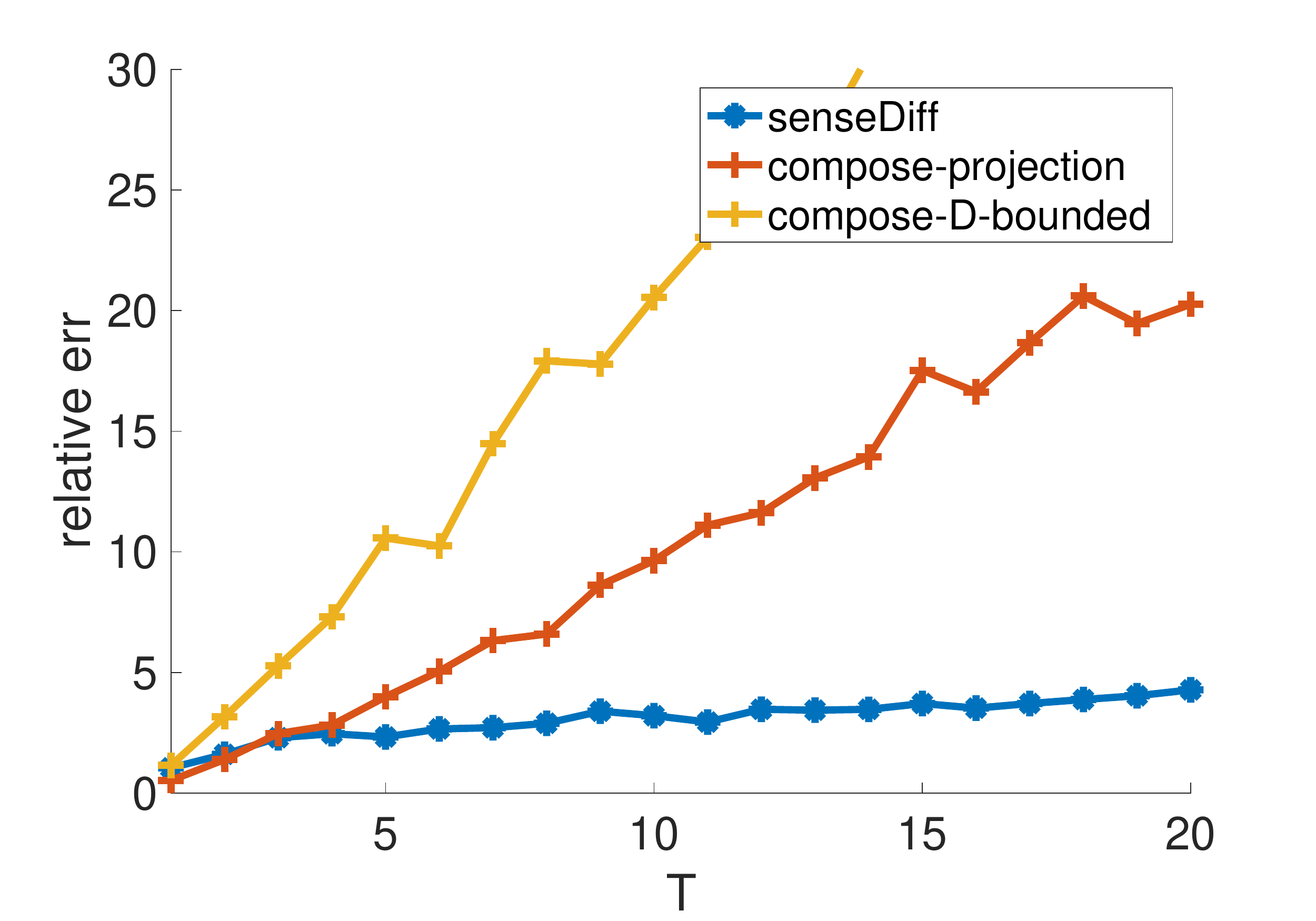}
& \includegraphics[width=0.245\textwidth]{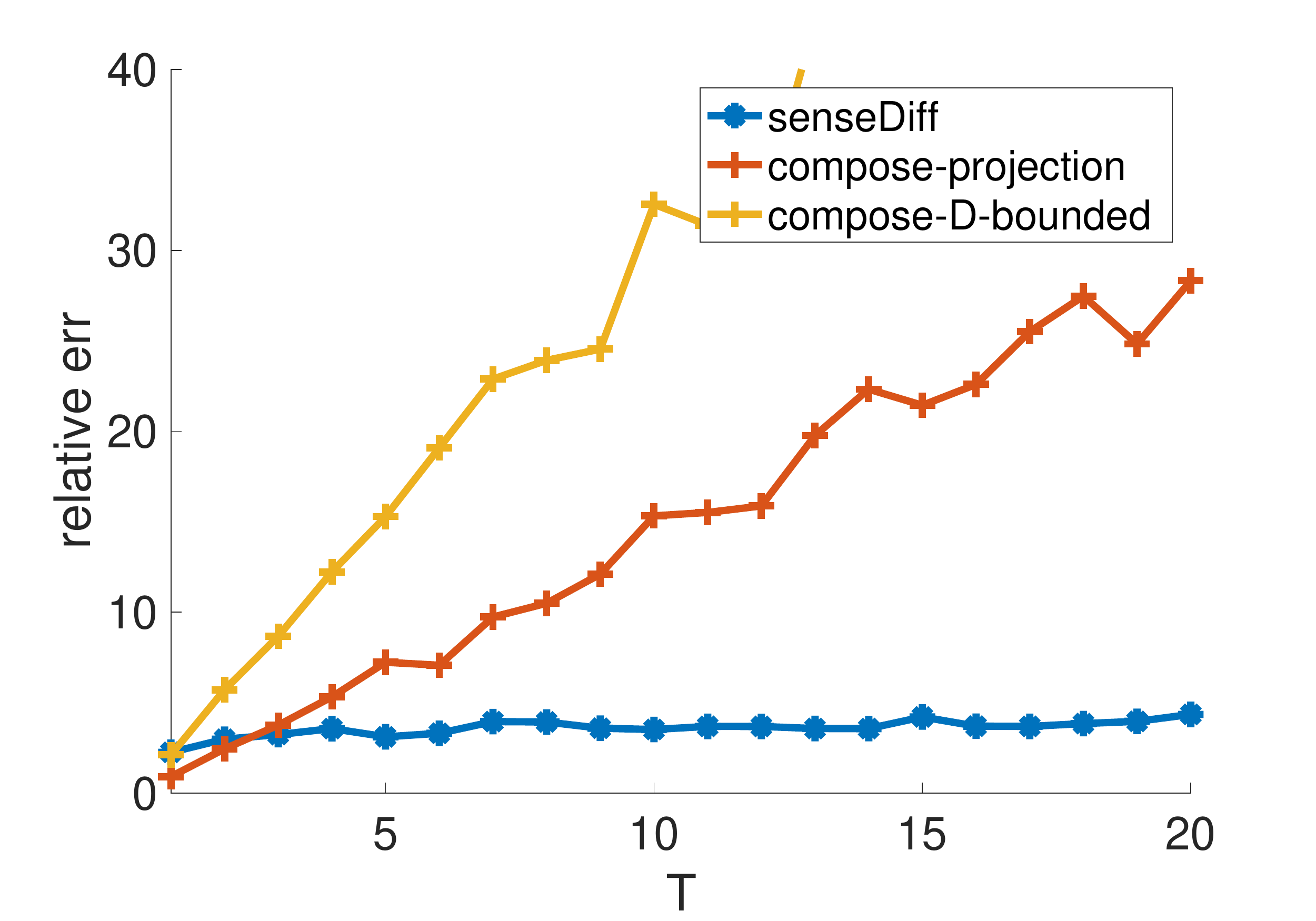}
& \includegraphics[width=0.245\textwidth]{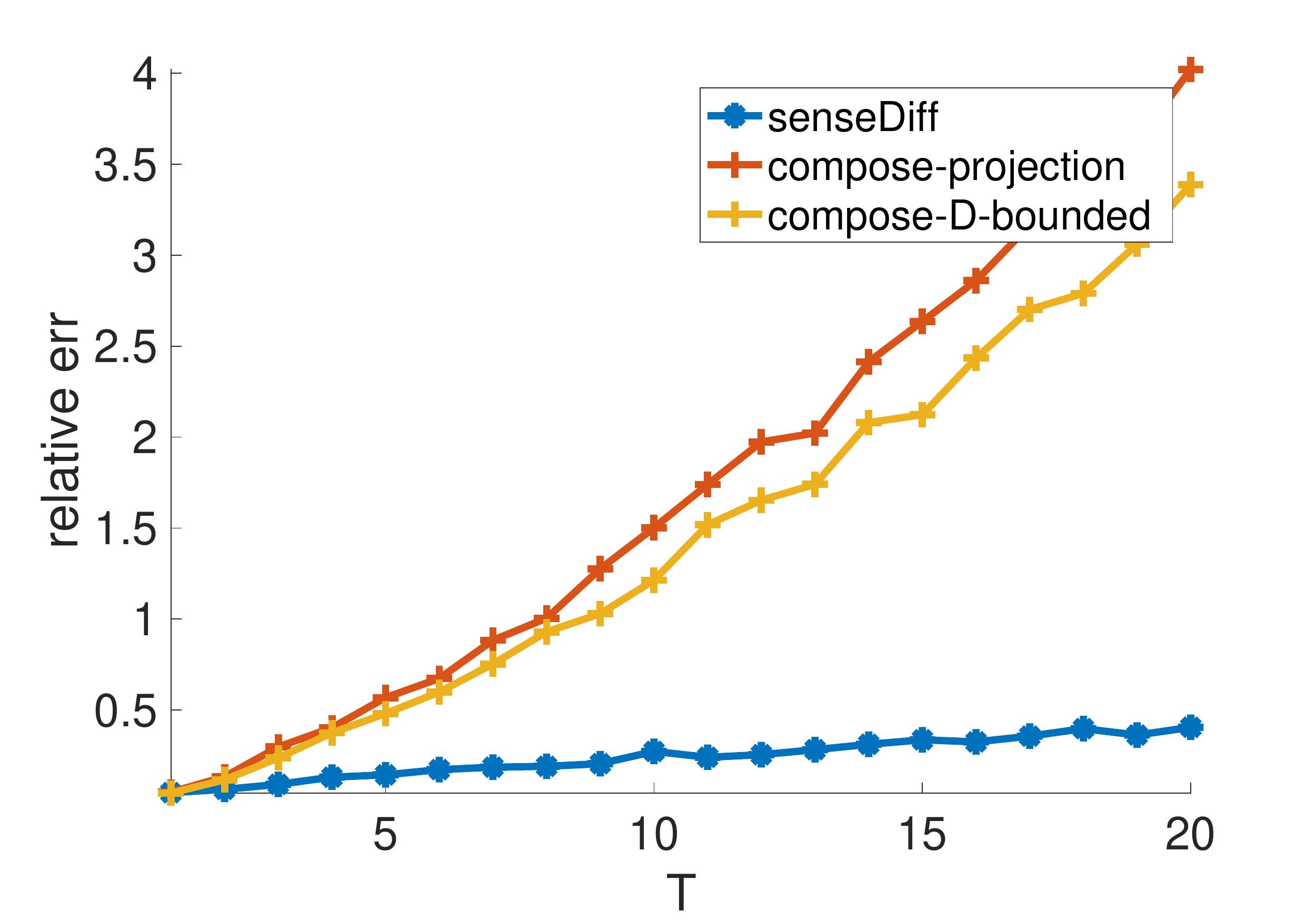}
& \includegraphics[width=0.245\textwidth]{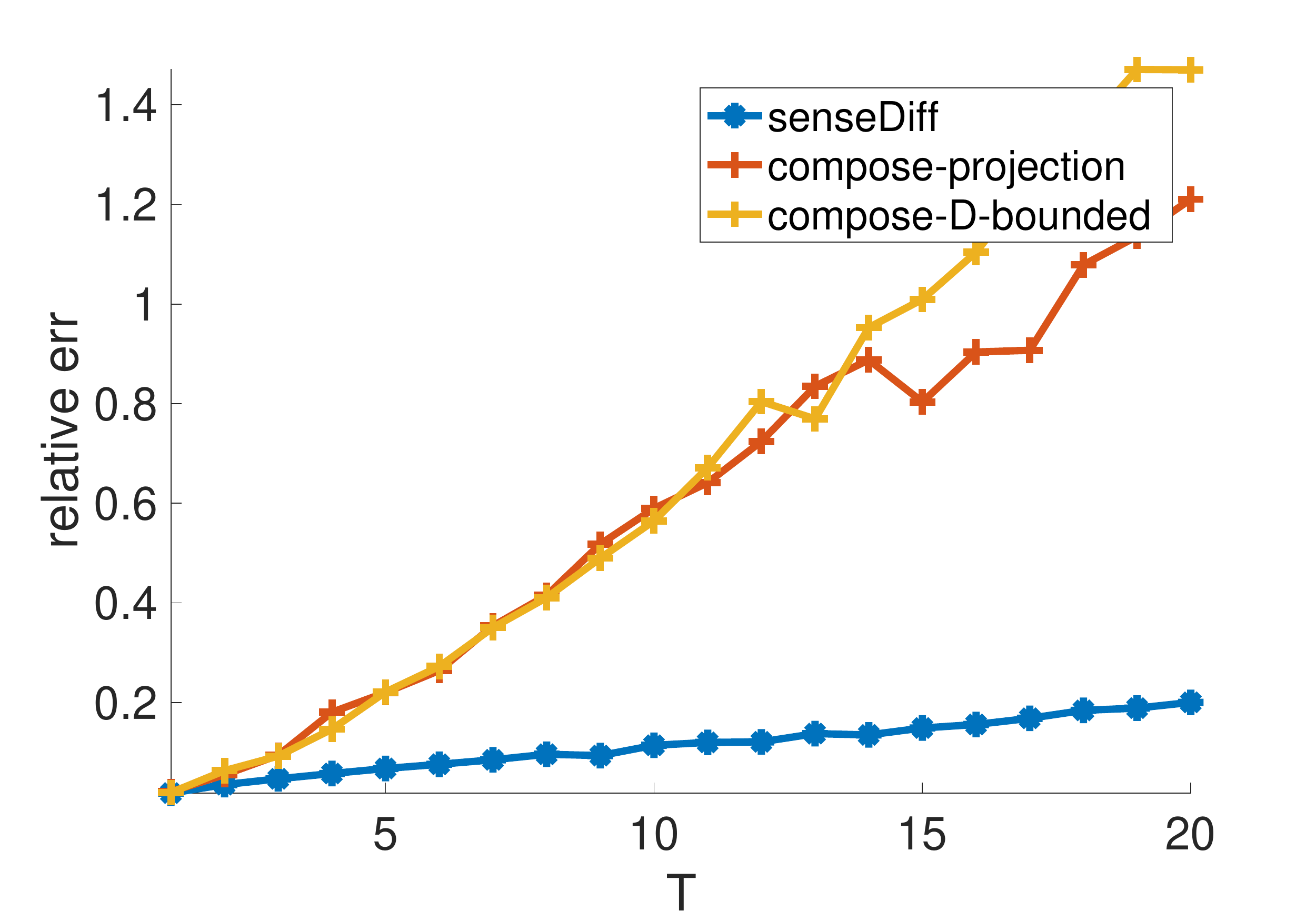}\\
\end{tabularx}}
\captionof{figure}{$L_1$ error vs. number of publications $T$. $\epsilon = 5$. Averaged over $100$ runs.}
\label{fig:time}
\end{table*}

We compare \sensdiff\ with the baseline algorithms \composeD\ and \composeproj\
under varying $\epsilon$ -- the privacy parameter and varying $T$ -- the number
of releases.  We measure utility by the relative $L_1$ error $\sum_{t=1}^T
\abs{\calA(G_t) - f(G_t)}/f(G_t)$, where $f(G_t)$ is the non-private statistic
and $\calA(G_t)$ is the value estimated by an algorithm.  

Figure~\ref{fig:epsilon} shows the privacy-utility tradeoffs of the three
algorithms for the number of high-degree nodes (1st and 2nd columns) and number
of edges (3rd and 4th columns) across all datasets (different rows).  There is
a clear trend of decreasing error as $\epsilon$ increases for all algorithms;
however, \sensdiff\ yields smaller error compared to the other two baselines in
all the cases. We point out that these results are overly optimistic for
\composeproj\ -- as we do not spend any privacy budget tuning the projection
thresholds.

Figure~\ref{fig:time} presents the utility of all algorithms across different datasets (different rows) and statistics (different columns) as a function of the number of releases $T$. We fix $\epsilon$ to be $5$.  We see that the relative error increases as $T$ increases for all algorithms, across all datasets and statistics. 
Similar to the previous experiment, \sensdiff\ achieves lower errors in most cases for the number of high-degree nodes and significantly lower errors in all cases for the number of edges. 
In addition, we observe that the error of \sensdiff\ increases at a smaller rate with increasing number of releases compared to the two baselines. This implies that \sensdiff\ achieves better utility for large $T$, and confirms the theoretical arguments in Section~\ref{sec:main_alg}.

Reconsidering the two questions proposed in the beginning of the section, we conclude that \sensdiff\ offers better utility under a wide range of $\epsilon$ and $T$ compared to both baselines. \composeD\ yields the worst utility across all datasets for both statistics, which is to be expected as it does not take advantage of either projections or additional properties of the graph sequence. \sensdiff\ outperforms \composeproj\ in most cases, and has a more significant advantage for large $T$.

\section{Conclusion}
In summary, we present a general algorithm for continually releasing statistics
of a graph sequence.  Our proposed algorithm exploits the difference sequence
of the statistics, which has lower sensitivity compared to the original
sequence, to achieve improved utility.  We derive the global sensitivity of
the difference sequences for common statistics including degree statistics and
subgraph counts for bounded-degree graphs.  Evaluations on real and synthetic
graphs demonstrate the practical applicability of the proposed algorithm by
showing that it outperforms two natural baselines over a wide range of
parameters.  In particular, the proposed algorithm is much less sensitive with
respect to the number of releases compared with the baselines.  

Our work is thus a step towards privacy-preserving analysis of
online graphs, and has the potential to lead to insights in epidemiology of
stigma-inducing diseases such as HIV, while still preserving the patient
privacy.

\section*{Acknowledgment}
The work was sponsored by NIH under MH100974, AI106039, ONR under N00014-16-1-261, and NSF under IIS 1253942.

\bibliographystyle{ieeetran}
\bibliography{ref}

\appendices
\section{Notations}
We first introduce some notations which are going to be used throughout the Appendix.

Given an undirected graph $G$, let $\edge{G}{v}$ denote the set of edges that are adjacent to node $v$. 
Given a directed graph $G$, let $\inedge{G}{v}$ and $\outedge{G}{v}$ denote the set of in-edges and the set of out-edges adjacent to $v$ respectively.
We may omit $\tau$ and write $\highdegree{}{G}$ or $\highoutdegree{}{G}$ if it is clear from the context.

\section{Proofs of Section~\ref{sec:degree}}

\subsection{Undirected Graph}

\begin{proof}(of Lemma~\ref{lem:high-degree-undirected})
Consider $G = (V, E)$ with $V = \cup_{t=1}^{\infty}\partial{V_{t}}$ and $G = (V', E')$ with $V' = \cup_{t=1}^{\infty}\partial{V'_{t}}$ such that $\partial{V'_{i}} = \partial{V_{i}} \cup \{v^*\}$ and $\partial{V_{j}} = \partial{V'_{j}}$ for $j\neq i$. Let $\seq{\Delta_t}{t=1}{\infty}$ and $\seq{\Delta'_t}{t=1}{\infty}$ be the difference sequences corresponding to $G$ and $G'$. Observe that $\Delta_t = \highdegree{\tau}{G_t} - \highdegree{\tau}{G_{t-1}}$ is the number of nodes in $G$ whose degrees {\em{cross}} the threshold $\tau$ at time $t$, and our goal is to bound $\sum_{t} |\Delta_t - \Delta'_t|$.

First, observe that as new edges become visible, for any node $v$, and for $j > i$, we have $\degree{G_{i}}{v} \leq \degree{G_{j}}{v}$. Thus the degree of $v$ can cross the threshold $\tau$ in at most one time step. If this happens at time step $t$, then $v$'s crossing over increases only the entry $\Delta_t$ by $1$, and changes no other $\Delta_{t'}$ for $t \neq t'$.

Now consider the effect of adding $v^*$ to the graph. Two things can happen as a result of this addition. First, $v^*$ itself may have degree higher than $\tau$, but since its degree can cross $\tau$ at only one time step $t$, this will increase at most one $\Delta_t$ by at most one. 

Second, some of the nodes that are connected to $v^*$ can cross the degree threshold $\tau$ early as a result on this extra connection. Let $v_{\ell}$ be a node connected to $v^*$. If the degree of $v_{\ell}$ crossed the threshold $\tau$ at time $t$ in $G$ and now crosses it at time $t' < t$ in $G'$, then this can change only the two entries $\Delta_t$ and $\Delta_{t'}$, each by at most $1$. Since there are at most $D$ such nodes, and any node that does not connect to $v^*$ crosses $\tau$ at the same time in $G$ and $G'$, the total change $\sum_{t} | \Delta_t - \Delta'_{t}|$ in the difference sequences is at most $2D+1$. 

Next, we show two neighboring graphs $G$ and $G'$ where the $L_1$-distance between the two difference sequences is exactly $2D + 1$ for $\tau < D$.
For any $D$ and $\tau$, we can construct the following two graphs.
Let $\partial{V_1} = \{u_1,\dots,u_{\tau - 1}\} \cup \{v_1,\dots, v_D\}$, $\partial{V_2} = \{u_{\tau}\}$ and $\partial{V_i} = \emptyset$ for any $i \geq 3$. Let $\partial{V'_1} = \partial{V_1} \cup \{v^*\}$ and $\partial{V'_i} = \partial{V_i}$ for any $i \geq 2$.
Let $E = \{(v_i, u_j):i\in[D], j\in [\tau]\}$ and $E' = E\cup \{(v_i, v^*):i\in[D]\}$.
It is obvious that $\degree{}{u_i} = D$, $\degree{}{v^*} = D$, $\degree{G'}{v_j} = \tau + 1 \leq D$ for all $i\in[D], j\in [\tau]$, which means both $G$ and $G'$ are $D$-bounded.
The difference sequence of $G$ is $(\tau-1, D+1, 0, \dots)$ and that of $G'$ is $(D+\tau, 1, 0, \dots)$, with $L_1$ distance $2D+1$.

We can conclude that the $D$-bounded global sensitivity of the difference sequence is always upper bounded by $2D+1$, and is exactly $2D+1$ for any $\tau < D$.
\end{proof}

\begin{proof}(of Lemma~\ref{lem:high-degree projection})
Consider the following graph $G$.
Let there be $T$ copies of a $(\tD-1)$-star, denoted by $\{S_1,\dots,S_T\}$. For each $t\in[T]$, let the center node of star $S_t$ be $C_t$. 
For every $t\in[T]$, let there be an edge $(C_i, C_{i+1})$. 
And for each $t\in[T]$, let the time stamps of all nodes in $S_t$ be $t$. 

Let $G'$ be a graph which is the same as $G$ except for an additional node $v^*$ and an additional edge $(v^*, C_1)$. Let $v^*$ have time stamp $0$.

Now we consider the projected graphs with projection threshold $\tD$ for $G$ and $G'$.

The edges in $G^{\tD}$ includes all the edges in $S_i$ for all $i$, and $(C_1,C_2)$, $(C_3,C_4)$, $(C_5,C_6)$, \dots. 
We thus have $\seq{\Delta_t}{t=1}{T}  = (0, 2, 0, 2, \dots)$.

The edges in $G'^{\tD}$ includes all the edges in $S_i$ for all $i$, and $(v^*, C_1)$, $(C_2,C_3)$, $(C_4,C_5)$, \dots. 
We thus have $\seq{\Delta_t}{t=1}{T}  = (1, 0, 2, 0, 2 \dots)$.

Therefore, $\sum_{t=1}^{T} \abs{\Delta_t - \Delta'_t} = 2(T-1) + 1 = 2T - 1$ for any $T$. 
Notice that if there are $\tD$ copies of the star sequences, and $v^*$ is connected to the center of the first stars of all the sequences, then $\sum_{t=1}^{T} \abs{\Delta_t - \Delta'_t}$ would become $(2T - 1) \tD$.
\end{proof}

\begin{proof}(of Lemma~\ref{lem:degree-hist undirected})
Consider $G = (V, E)$ with $V = \cup_{t=1}^{T}\partial{V_{t}}$ and $G' = (V', E')$ with $V' = \cup_{t=1}^{T}\partial{V'_{t}}$ with $\partial{V'_{i}} = \partial{V_{i}} \cup \{v^*\}$ and $\partial{V_{j}} = \partial{V'_{j}}$ for $j\neq i$. 

Moreover, observe that for any $t$, the $t$-th entry in the difference sequence, $\Delta_t = (n_{t, 0}, n_{t, 1}, \ldots, n_{t, D})$, where $n_{t, k}$ is the number of nodes which have degree $k$ for the {\em{first time}} in $G_t$, minus the number of nodes whose degree changes from $k$ to $k+1$ at time $t$. 

If $(\Delta_t)$ and $(\Delta'_t)$ are the difference sequences corresponding to $G$ and $G'$ respectively, then our goal is to bound $\sum_{t} \|\Delta_t - \Delta'_t\|_1$. 

First, observe that as new edges become visible, for any node $v$, and for $j > i$, we have $\degree{G_{i}}{v} \leq \degree{G_{j}}{v}$. Consider any node $v\in V$ with $\degree{G}{v} = d$. Since $\degree{G_{i}}{v}$ is increasing with respect to $i$, it can change at most $d$ times after $v$ enters the graph. Changing from $k$ to $k'$ in time $t$ would decrease $n_{t, k}$ by $1$ and increase $n_{t, k'}$ by $1$, for a total change of $2$, and the entrance of $v$ to the graph with degree $k_0$ at time $t$ would increase $n_{t, k_0}$ by $1$.

Now consider the effect of adding $v^*$ on the difference sequence. Since the $\degree{G'}{v^*} \leq D$, with the analysis above, it at most can increase $\sum_{t} \|\Delta_t - \Delta'_t\|_1$ by $2D+1$ -- at most one for $v^*$'s entrance to the graph, and at most $2D$ for degree increases.

Additionally, since  $\degree{G'}{v^*} \leq D$, there are at most $D$ nodes connected to $v^*$; let us call them $u_1, \ldots, u_D$. Consider a particular $u_k$, which now has an extra edge to $v^*$. Observe that adding $v^*$ cannot cause $u_k$ to enter the graph at a different time; however, it can cause $u_k$ to change its degree from $j$ to $j+1$ for any $j \in \{0, \ldots, D-1\}$ at an earlier time step, which may cause $\sum_{t} |\Delta_t - \Delta'_t|_1$ to increase by at most $4D$. Since there are $D$ such nodes, the total change due to the addition of $v^*$ is at most $4D^2 + 2D + 1$. The lemma follows.
\end{proof}

\subsection{Directed Graph}
\begin{proof}(of Lemma~\ref{lem:high-degree-directed})
Consider $G = (V, E)$ with $V = \cup_{t=1}^{T}\partial{V_{t}}$ and $G = (V', E')$ with $V' = \cup_{t=1}^{T}\partial{V'_{t}}$ with $\partial{V'_{i}} = \partial{V_{i}} \cup \{v^*\}$ and $\partial{V_{j}} = \partial{V'_{j}}$ for $j\neq i$. Let $(\Delta_t)$ and $(\Delta'_t)$ be the difference sequences corresponding to the graphs $G$ and $G'$ and $f = \highoutdegree{\tau}{\cdot}$. Observe that $\Delta_t = \highoutdegree{\tau}{G_t} - \highoutdegree{\tau}{G_{t-1}}$ is the number of nodes in $G$ whose out-degrees cross the threshold $\tau$ at time step $t$, and our goal is to bound $\sum_{t} |\Delta_t - \Delta'_t|$.

First, observe that as new edges become visible, for any node $v$, and for $j > i$, we have $\outdegree{G_{i}}{v} \leq \outdegree{G_{j}}{v}$. Thus the out-degree of $v$ can cross the threshold $\tau$ in at most one time step. If this happens at time step $t$, then $v$'s crossing over increases only the entry $\Delta_t$ by $1$, and changes no other entries.

Now consider the effect of adding $v^*$ to the graph. Two things can happen as a result of this addition. First, $v^*$ itself may have out-degree higher than $\tau$, but since its degree can cross $\tau$ at only one time step $t$, this will increase at most one $|\Delta_t-\Delta'_t|$ by at most one. 

Second, some of the nodes that are connected to $v^*$ can cross the degree threshold $\tau$ early as a result on this extra connection. Let $v_{\ell}$ be a node that points to $v^*$. If the out-degree of $v_{\ell}$ crossed the threshold $\tau$ at time $t$ in $G$ and now crosses it at time $t' < t$ in $G'$, then this can change only the two entries $\Delta_t$ and $\Delta_{t'}$, each by at most $1$. Since there are at most $\Din$ such nodes, and any node that does not point to $v^*$ crosses $\tau$ at the same time in $G$ and $G'$, the total change $\sum_{t} | \Delta_t - \Delta'_{t}|$ in the difference sequences is at most $2\Din+1$. 

Now we show the lower bound by constructing $G$ and $G'$ with $L_1$ distance $2\Din+1$ between their difference sequences.
For any $\Din$ and $\tau$, we can construct the following two graphs.
Let $\partial{V_1} = \{u_1,\dots,u_{\tau - 1}\} \cup \{v_1,\dots, v_{\Din}\}$, $\partial{V_2} = \{u_{\tau}\}$, $\partial{V_3} = \{w_1, \dots, w_{\tau}\}$ and $\partial{V_i} = \emptyset$ for any $i \geq 4$. Let $\partial{V'_1} = \partial{V_1} \cup \{v^*\}$ and $\partial{V'_i} = \partial{V_i}$ for any $i \geq 2$.
Let $E = \{(v_i, u_j):i\in[\Din], j\in [\tau]\}$ and $E' = E\cup \{(v_i, v^*):i\in[\Din]\} \cup \{(v^*, w_k):k\in[\tau]\}$.
It is obvious that $\indegree{}{u_i} = \Din$, $\indegree{}{v^*} = \Din$, $\indegree{}{v_j} = 0$, $\indegree{G'}{w_k} = 1$ for all $i\in[\Din], j, k\in [\tau]$, which means both $G$ and $G'$ are $\Din$-in-bounded.
The difference sequence of $G$ is $(0, \Din, 0, \dots)$ and that of $G'$ is $(\Din, 0, 1, 0, \dots)$, with $L_1$ distance $2\Din+1$.

We can conclude that the $\Din$-in-bounded global sensitivity of the difference sequence is $2\Din+1$.
\end{proof}

\begin{proof}(of Lemma~\ref{lem:out-degree-hist directed})
Consider $G = (V, E)$ with $V = \cup_{t=1}^{\infty}\partial{V_{t}}$ and $G' = (V', E')$ with $V' = \cup_{t=1}^{\infty}\partial{V'_{t}}$ with $\partial{V'_{i}} = \partial{V_{i}} \cup \{v^*\}$ and $\partial{V_{j}} = \partial{V'_{j}}$ for $j\neq i$. 

Moreover, observe that for any $t$, the $t$-th entry in the difference sequence, $\Delta_t = (n_{t, 0}, n_{t, 1}, \ldots, n_{t, D})$, where $n_{t, k}$ is
the number of nodes which have out-degree $k$ for the {\em{first time}}
subtract 
the number of nodes whose out-degree changes from $k$ to $k+1$ in $G_t$. If $(\Delta_t)$ and $(\Delta'_t)$ are the difference sequences corresponding to $G$ and $G'$ respectively, then our goal is to bound $\sum_{t} \|\Delta_t - \Delta'_t\|_1$. 

First, observe that as new edges become visible, for any node $v$, and for $j > i$, we have $\outdegree{G_{i}}{v} \leq \outdegree{G_{j}}{v}$. Consider any node $v\in V$ with $\outdegree{G}{v} = d$. Since $\outdegree{G_{i}}{v}$ is increasing with respect to $i$, it can change at most $d$ times after $v$ enters the graph. Changing from $k$ to $k'$ in time $t$ would decrease $n_{t, k}$ by $1$ and increase $n_{t, k'}$ by $1$, and the entering of $v$ to the graph with out-degree $k_0$ at time $t$ would increase $n_{t, k_0}$ by $1$.

Now consider the effect of adding $v^*$ on the difference sequence. Since the $\outdegree{G'}{v^*} \leq \Dout$, with the analysis above, it at most can increase $\sum_{t} \|\Delta_t - \Delta'_t\|_1$ by $2\Dout+1$ -- at most one for $v^*$'s entrance to the graph, and at most $2\Dout$ for out-degree increases.

Additionally, since  $\indegree{G'}{v^*} \leq \Din$, there are at most $\Din$ nodes that point to $v^*$; let us call them $u_1, \ldots, u_D$. Consider a particular $u_k$, which now has an extra edge to $v^*$. Observe that adding $v^*$ cannot cause $u_k$ to enter the graph at a different time; however, it can cause $u_k$ to change its out-degree from $j$ to $j+1$ for any $j \in \{0, \ldots, D-1\}$ at an earlier time step, which may cause $\sum_{t} |\Delta_t - \Delta'_t|_1$ to increase by $4\Dout$. Since there are at most $\Din$ such nodes, the total change due to the addition of $v^*$ is at most $4\Dout\Din + 2\Dout + 1$. The lemma follows.
\end{proof}

\section{Proofs of Section~\ref{sec:subgraph}}

First, let $\S{G}{v}$ denote the number of copies of $S$ in $G$ that include $v$ as one of its nodes.

\subsection{Undirected Graph}
\begin{proof}(of Lemma~\ref{lem:subgraph undirected})
The proof of Lemma~\ref{lem:subgraph directed} applies.
\end{proof}

\begin{proof}(of Lemma~\ref{lem:subgraph undirected patterns})
Consider $G = (V, E)$ and $G' = (V', E')$ where $V' = V \cup \{v^*\}$ and $E' = E \cup {\Delta E}$ with ${\Delta E}$ consists of all edges adjacent to $v^*$. By the requirement, $|{\Delta E}| \leq D$. 

For any subgraph $S$, let $\calS$ denote the copies of $S$ in $G$ and $\calS'$ denote that in $G'$. Notice that $\calS \subseteq \calS'$, since $G$ is a subgraph of $G'$ and thus any subgraph that appears in $G$ would appear in $G'$. Moreover, any subgraph $\bar{S} \in \calS' \backslash \calS$ contain $v^*$, since otherwise it would appear in $\calS$. 

We now consider each of the subgraphs.

\begin{enumerate}
\item
Every edge $(v^*, v)$ can form one copy of $S$. There are in total at most $D$ such edges.

\item
Every pair of edges $(v^*, v_1) \in {\Delta E}$ and $(v^*, v_2) \in {\Delta E}$ can form one copy of $S$ if there $(v_1, v_2) \in E$.
There are in total at most ${D \choose 2}$ such pairs, and thus $S_{+} = {D \choose 2}$.

\item
If $k > D$, then there cannot be any copy of $S$ in the graph, and thus $S_{+} = 0$. Now we consider $k \leq D$.
There are two types of stars in $ \calS' \backslash \calS$, one with $v^*$ as the center of the star, the other with $v^*$ as a non-central node.
Every $k$ edges adjacent to $v^*$ can form one star with $v^*$ as its center; there are in total at most ${D \choose k}$ such stars.
Every edge $(v, v^*)$ of $v^*$ can participate in at most ${D-1 \choose k - 1}$ stars. This is because in $G$, there are in total ${\degree{}{v} \choose k}$ stars that are centered at $v$ and in $G'$, the value becomes ${\degree{}{v}+1 \choose k}$. Because of the degree bound, we have ${\degree{}{v}+1 \choose k} - {\degree{}{v} \choose k} = {\degree{}{v} \choose k - 1} \leq {D-1 \choose k - 1}$, where the last step follows because $\degree{}{v}+1 \leq D$. There are at most $D$ in-edges of $v^*$, which in total can form at most $D {D-1 \choose k - 1}$ stars.
Therefore, $S_{+} = {D \choose k} + D {D-1 \choose k - 1}$ for $k \leq D$.
\end{enumerate}
\end{proof}

\subsection{Directed Graph}
\begin{proof}(of Lemma~\ref{lem:subgraph directed})
Consider $G = (V, E)$ with $V = \cup_{t=1}^{\infty}\partial{V_{t}}$ and $G = (V', E')$ with $V' = \cup_{t=1}^{\infty}\partial{V'_{t}}$ with $\partial{V'_{i}} = \partial{V_{i}} \cup \{v^*\}$ and $\partial{V_{j}} = \partial{V'_{j}}$ for $j\neq i$. 

Let $(\Delta_t)$ denote the difference sequence for subgraph $S$ in $G$ and $(\Delta'_t)$ for that in $G'$.
Recall that as defined in Section~\ref{sec:notation}, for a subgraph pattern $S$, $\S{}{G}$ is the total number of $S$ in graph $G$.

First, observe that for any graph $G$, as more nodes (and corresponding edges) become visible, the total number of $S$ in $G$ can only increase, i.e., $\S{}{G_i} \leq \S{}{G_j}$ for any $i < j$. 
So all elements $\Delta_t$ (and $\Delta_t$) in the difference sequence are non-negative. Moreover, the sum of this sequence equals to $\S{}{G}$ (and $\S{}{G'}$ for $G$).

Consider the additional node $v^*$. By requirement, if there are at most $\Din$ in-edges and at most $\Dout$ out-edges to $v^*$, $\S{}{G'} - \S{}{G}$ is at most $S_{+}$.

Second, let us compare $\Delta_t$ and $\Delta'_t$. 
For any $t$, let $\calS_t$ denote the copies of $S$ that appears at time $t$ for the first time in $G$, and $\calS'_t$ denote that for $G'$. 
Any $S_t \in \calS_t$ consists of some nodes in $\partial{V_t}$ and some nodes in $V_{t-1}$. Consider the appearance of $S_t$ in $G'$. $\partial{V_t} \subseteq \partial{V'_t}$ and $V_{t-1} \subseteq V'_{t-1}$ implies that $S_t$ must appear in $G'$ at some $t' \leq t$; while $\partial{V_t} \cap V'_{t-1} = \emptyset$ implies that $S_t$ appears at exactly $t$. Therefore, $\calS_t \subseteq \calS'_t$ and $\Delta'_t \geq \Delta_t$ for any $t$.

Therefore,
$\sum_t |\Delta_t - \Delta'_t| = \sum_t (\Delta'_t - \Delta_t) = \sum_t \Delta'_t - \sum_t \Delta_t = \S{}{G'} - \S{}{G} \leq S_{+}$, and the lemma follows.
\end{proof}

\begin{proof}(of Lemma~\ref{lem:subgraph directed patterns})
Consider $G = (V, E)$ and $G' = (V', E')$ where $V' = V \cup \{v^*\}$ and $E' = E \cup E_1 \cup E_2$ with $E_1$ consists of all in-edges of $v^*$, $E_2$ consists of all out-edges of $v^*$. By the requirement, $|E_1| \leq \Din$ and $|E_2| \leq \Dout$. 

For any subgraph $S$, let $\calS$ denote the copies of $S$ in $G$ and $\calS'$ denote that in $G'$. Notice that $\calS \subseteq \calS'$, since $G$ is a subgraph of $G'$ and thus any subgraph that appears in $G$ would appear in $G'$. Moreover, any subgraph $\bar{S} \in \calS' \backslash \calS$ contain $v^*$, since otherwise it would appear in $\calS$. 

We now consider each of the subgraphs.

\begin{enumerate}
\item
Every edge $(v^*, v)$ or $(v^*, v)$ can form one copy of $S$. There are in total at most $\Din+\Dout$ such edges.

\item 
Consider any $\bar{S} \in \calS' \backslash \calS$. According to the definition of $S$, $\bar{S}$ must contain one in-edge and one out-edge of $v^*$. Notice that every pair of edges $(v_1, v^*) \in E_1$ and $(v^*, v_2) \in E_2$ can form one copy of $S$ if there $(v_2, v_1) \in E$. There are in total at most $\Din \Dout$ such pairs, and thus $S_{+} = \Din \Dout$.

\item
Every pair of edges $(v_1, v^*) \in E_1$ and $(v^*, v_2) \in E_2$ can form one copy of $S$ if there $(v_1, v_2) \in E$;
every pair of edges $(v_1, v^*) \in E_1$ and $(v'_1, v^*) \in E_1$ can form one copy of $S$ if there $(v_1, v'_1) \in E$ or $(v'_1, v_1) \in E$,
every pair of edges $(v^*, v_2) \in E_2$ and $(v^*, v'_2) \in E_2$ can form one copy of $S$ if there $(v_2, v'_2) \in E$ or $(v'_2, v_2) \in E$.
There are in total at most ${\Din + \Dout \choose 2}$ such pairs, and thus $S_{+} = {\Din + \Dout \choose 2}$.

\item
If $k > \Dout$, then there cannot be any copy of $S$ in the graph, and thus $S_{+} = 0$. Now we consider $k \leq \Dout$.
There are two types in $ \calS' \backslash \calS$, one with $v^*$ as the center of the star, the other with $v^*$ as a non-central node.
Every $k$ out-edges of $v^*$ can form one star with $v^*$ as its center; there are in total at most ${\Dout \choose k}$ such stars.
Every in-edge $(v, v^*)$ of $v^*$ can form at most ${\Dout-1 \choose k - 1}$ stars. This is because in $G$, there are in total ${\outdegree{}{v} \choose k}$ stars that are centered at $v$ and in $G'$, the value becomes ${\outdegree{}{v}+1 \choose k}$. Because of the degree bound, we have ${\outdegree{}{v}+1 \choose k} - {\outdegree{}{v} \choose k} = {\outdegree{}{v} \choose k - 1} \leq {\Dout-1 \choose k - 1}$, where the last step follows because $\outdegree{}{v}+1 \leq \Dout$. There are at most $\Din$ in-edges of $v^*$, which in total can form at most $\Din {\Dout-1 \choose k - 1}$ stars.
Therefore, $S_{+} = {\Dout \choose k} + \Din {\Dout-1 \choose k - 1}$ for $k \leq \Dout$. 

\item 
If we reverse the graph by converting every edge $(u,v)$ to $(v,u)$, then the problem reduces to the above problem. The same result applies with $\Din$ and $\Dout$ flipped, since the reversed graph is $(\Dout, \Din)$-bounded.
\end{enumerate}
\end{proof}

\section{Baseline Algorithms and Analyses}\label{sec:appendix_baseline}
In this section, we describes and analyze some baseline algorithms that are used in our experiments to publish the number of high-degree nodes. Since some of them use an extension of the projection algorithm in \cite{day2016publishing}, we would formally describe the extensions of the projection algorithm and then describe our baseline algorithms with their sensitivity analysis.

\subsection{Extension of the Projection Algorithm in \cite{day2016publishing} to Directed and Online Graphs}
\cite{day2016publishing} provides a projection algorithm for undirected graph. In this section, we show that it can be adapted to directed graph and graph sequence. We describe the adapted algorithm in this section, and provide analysis in the next section for using it to publish the number of high-degree nodes.

The idea of the algorithm in \cite{day2016publishing} is to consider each edge one by one, and add it to the projected graph only if both ends of the edge have degree lower than the projection threshold $\tD$.
The algorithm therefore needs an ordering of all edges, and the privacy proof in \cite{day2016publishing} requires the ordering to be consistent under $G$, $G'$, i.e., two edges, if appears in both $G$ and $G'$, would have same relative order.

The algorithm can be naturally adapted to the directed graph setting, by having two projection thresholds, $\tDin$ for in-degree and $\tDout$ for out-degree.
It can also be naturally adapted to the online graph setting, where we have a sequence of graphs $(G_1,G_2,\dots,)$, and would like to publish a sequence of projected graphs $(G_1^{\tD}, G_2^{\tD}, \dots)$. Here we can consider all edges in $\partial{E_1}$ one by one and publish $G_1^{\tD}$, and then consider edges in $\partial{E_2}$ and publish $G_2^{\tD}$ etc. More formally, we require the ordering used in the algorithm to be consistent with their order of appearance in the graph sequence.

For a graph $G$, let $\Lambda$ be the ordering, represented by a sequence of all edges, such that $e_1$'s appearing before $e_2$ means that $e_1$ is considered before $e_2$ by the algorithm.
For a graph sequence, let there be an ordering sequence $(\Lambda_1, \dots, \Lambda_T)$, where each $\Lambda_i$ is an ordered sequence of all edges in $\partial{E_i}$. All edges in $\Lambda_i$ are considered before edges in $\Lambda_j$ for $i < j$, and edge $e_1$ is considered before $e_2$ if $e_1$ appears before $e_2$ in some $\Lambda_i$.
Corresponding to $\Lambda$ or $(\Lambda_1,\dots,\Lambda_T)$, we can define a function $\lambda$ that maps an edge to a unique value, such that $\lambda(e_1) < \lambda(e_2)$ means $e_1$ appears before $e_2$ in $\Lambda$.

Now, given the time stamps of all nodes, we can define the time stamp of an edge $e = (v_1, v_2)$ as $e.\time = \max\{v_1.\time, v_2.\time\}$.
For a graph sequence, we want $(\Lambda_1, \dots, \Lambda_T)$ to be consistent with the time stamps of the edges in $G$, i.e., for two edges $e$, $e'$ with $e.\time < e'.\time$, either $e \in \Lambda_i$, $e'\in \Lambda_j$ with $i < j$, or $e$ and $e'$ both appear in some $\Lambda_i$ and $e$ appears before $e'$ in $\Lambda_i$.
Obviously, such ordering (sequence) is consistent under $G$ and $G'$.

Now we formally state the algorithms. Algorithm~\ref{alg:DLL_oneshot} is for projection of one undirected or directed graph, and Algorithm~\ref{alg:DLL_seq} is for an undirected or directed graph sequence. We note that Algorithm~\ref{alg:DLL_oneshot} with undirected graph is the original projection algorithm proposed in \cite{day2016publishing}.

\begin{algorithm}[h!]
\caption{Projection algorithm for one undirected / directed graph (Graph $G=(V,E)$, edge ordering $\Lambda$, projection parameter $\tD$ (for undirected graph) or $\tDin$, $\tDout$ (for directed graph))}
\label{alg:DLL_oneshot}
\begin{algorithmic}
\State{$\tilde{V} = V$, $\tilde{E} = \emptyset$.}
\For{$(u, v) \in \Lambda$}
	\If{$\degree{(\tilde{V},\tilde{E})}{u} < \tD$ {\bf and} $\degree{(\tilde{V},\tilde{E})}{v} < \tD$ (or for directed graph, $\outdegree{(\tilde{V},\tilde{E})}{u} < \tDout$ {\bf and} $\indegree{(\tilde{V},\tilde{E})}{v} < \tDin$)}
		\State{$\tilde{E} = \tilde{E} \cup \{(u, v)\}$.}
	\EndIf
\EndFor
\State{\textbf{return}$\tilde{G} = (\tilde{V},\tilde{E})$}
\end{algorithmic}
\end{algorithm}

\begin{algorithm}[h!]
\caption{Projection algorithm for undirected / directed graph sequence (Graph $G=(V,E)$ and sequence $\calG=(G_1,\dots,G_m)$, edge ordering sequence $(\Lambda_1, \dots, \Lambda_m)$, projection parameter $\tD$ (for undirected graph) or $\tDin$, $\tDout$ (for directed graph))}
\label{alg:DLL_seq}
\begin{algorithmic}
\State{$\tilde{E} = \emptyset$.}
\For{$i = 1, \dots, m$}
	\State{$\tilde{V} = V_i$}
	\For{$(u, v) \in \Lambda_i$}
		\If{$\degree{(\tilde{V},\tilde{E})}{u} < \tD$ {\bf and} $\degree{(\tilde{V},\tilde{E})}{v} < \tD$ (or for directed graph, $\outdegree{(\tilde{V},\tilde{E})}{u} < \tDout$ {\bf and} $\indegree{(\tilde{V},\tilde{E})}{v} < \tDin$)}
			\State{$\tilde{E} = \tilde{E} \cup \{(u, v)\}$}
		\EndIf
	\EndFor
	\State{$\tilde{G}_i = (\tilde{V},\tilde{E})$}
\EndFor
\State{\textbf{return}$(\tilde{G}_1,\dots,\tilde{G}_m)$}
\end{algorithmic}
\end{algorithm}

\subsection{Baselines for Publishing the Number of High-Degree Nodes}
In this section, we state the baseline algorithms for publishing the number of high-degree nodes of a graph sequence and show their sensitivity. All algorithms are based on the global sensitivity mechanism introduced in Section~\ref{sec:dp}, namely, to publish some $f(G)$, we compute the global sensitivity or the degree-bounded global sensitivity $\GS{}{f}$ of $f(G)$, and add noise proportional to $\GS{}{f}/\epsilon$.

As has been mentioned in Section~\ref{sec:main_alg}, there are two baseline algorithms for high-degree nodes count. Combining with the algorithm proposed in \cite{day2016publishing}, there are four baseline algorithms for both undirected and directed graph sequences.

For an undirected graph sequence $(G_1,\dots,G_T)$,
\begin{enumerate}
\item \composeD: For each $i \in [T]$, run the global sensitivity algorithm to publish $\highdegree{\tau}{G_i}$ with privacy parameter $\epsilon / T$ and $D$-bounded global sensitivity $D+1$ (Lemma~\ref{lem:baseline_undirected_one_D}).
\item \composeproj: For each $i \in [T]$, run Algorithm~\ref{alg:DLL_oneshot} on $G_i$ with projection parameter $\tD$ to get $G_i^{\tD}$. Run the global sensitivity algorithm to publish $\highdegree{\tau}{G_i^{\tD}}$ with privacy parameter $\epsilon / T$ and global sensitivity $\tD+1$ (Lemma~\ref{lem:baseline_undirected_one_DLL}).
\item \sensseqD: Run the global sensitivity algorithm to publish $\seq{\highdegree{\tau}{G_i}}{i=1}{T}$ with privacy parameter $\epsilon$ and $D$-bounded global sensitivity $(D + 1)T$ (Lemma~\ref{lem:baseline_undirected_T_D}).
\item \sensseqproj: Run Algorithm~\ref{alg:DLL_seq} on $\seq{G_i}{i=1}{T}$ with projection parameter $\tD$ to get $\seq{G_i^{\tD}}{i=1}{T}$. Run the global sensitivity algorithm to publish $\seq{\highdegree{\tau}{G_i^{\tD}}}{i=1}{T}$ with privacy parameter $\epsilon$ and computed global sensitivity (Lemma~\ref{lem:baseline_T_DLL}).
\end{enumerate}

For a directed graph sequence $(G_1,\dots,G_T)$,
\begin{enumerate}
\item \composeD: For each $i \in [T]$, run the global sensitivity algorithm to publish $\highoutdegree{\tau}{G_i}$ with privacy parameter $\epsilon / T$ and $\Din$-in-bounded global sensitivity $\Din+1$ (Lemma~\ref{lem:baseline_directed_one_D}).
\item \composeproj: For each $i \in [T]$, run Algorithm~\ref{alg:DLL_oneshot} on $G_i$ with projection parameter $\tDin$, $\tDout$ to get $G_i^{\tDin,\tDout}$. Run the global sensitivity algorithm to publish $\highoutdegree{\tau}{G_i^{\tDin,\tDout}}$ with privacy parameter $\epsilon / T$ and global sensitivity $\max\{\tDin+1,\tDout-1\}$ (Lemma~\ref{lem:baseline_directed_one_DLL}).
\item \sensseqD: Run the global sensitivity algorithm to publish $\seq{\highoutdegree{\tau}{G_i}}{i=1}{T}$ with privacy parameter $\epsilon$ and $\Din$-in-bounded global sensitivity $(\Din + 1)T$ (Lemma~\ref{lem:baseline_directed_T_D}).
\item \sensseqproj: Run Algorithm~\ref{alg:DLL_seq} on $\seq{G_i}{i=1}{T}$ with projection parameter $\tDin$, $\tDout$ to get $\seq{G_i^{\tDin,\tDout}}{i=1}{T}$. Run the global sensitivity algorithm to publish $\seq{\highoutdegree{\tau}{G_i^{\tDin,\tDout}}}{i=1}{T}$ with privacy parameter $\epsilon$ and computed global sensitivity (Lemma~\ref{lem:baseline_T_DLL}).
\end{enumerate}

The lemmas below show that for both undirected and directed graph sequences, (3) is worse than (1); and (4) is worse than (2). Therefore, we only need to run (1) and (2), i.e., \composeD\ and \composeproj.

\subsubsection{Analyses for \composeD}
\begin{lemma}\label{lem:baseline_undirected_one_D}
Given an undirected graph $G = (V, E)$, the $D$-bounded global sensitivity of publishing $\highdegree{\tau}{G}$ is $D+1$ for any $\tau \leq D$.
\end{lemma}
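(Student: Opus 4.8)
The plan is to bound the change in $\highdegree{\tau}{\cdot}$ caused by adding or removing a single node together with its incident edges in a $D$-bounded graph. Since global sensitivity is symmetric in the two neighboring graphs, I would without loss of generality write $G' = (V \cup \{v^*\}, E \cup \Delta E)$, where $\Delta E$ collects the edges incident to $v^*$; because $G'$ is $D$-bounded, $\degree{G'}{v^*} = |\Delta E| \leq D$. Throughout I take $1 \leq \tau \leq D$, the regime in which the statistic is meaningful.

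For the upper bound, the key observation is monotonicity: inserting $v^*$ increases the degree of a pre-existing node $u$ by exactly $1$ when $(u, v^*) \in \Delta E$ and leaves it unchanged otherwise, so no degree ever decreases. Consequently $\highdegree{\tau}{G'} \geq \highdegree{\tau}{G}$, and the increase splits into two sources: (i) the new node $v^*$ itself, which contributes at most $1$ to the count; and (ii) the at most $D$ neighbors of $v^*$, each of which sees its degree rise by exactly $1$ and hence can cross from $\tau - 1$ to $\tau$ at most once, contributing at most $1$ each. Summing the two sources yields $\bigl|\highdegree{\tau}{G'} - \highdegree{\tau}{G}\bigr| \leq D + 1$.

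For tightness I would exhibit $D$-bounded graphs attaining $D+1$, reusing the single-time-step idea from the proof of Lemma~\ref{lem:high-degree-undirected}. Take $v^*$ with exactly $D$ neighbors $v_1, \ldots, v_D$, and introduce $\tau - 1$ auxiliary nodes $u_1, \ldots, u_{\tau-1}$ with $E = \{(v_i, u_j) : i \in [D], j \in [\tau-1]\}$ and $E' = E \cup \{(v_i, v^*) : i \in [D]\}$. Then $\degree{G}{v_i} = \tau - 1 < \tau$ while $\degree{G'}{v_i} = \tau$, so all $D$ neighbors cross the threshold, and $\degree{G'}{v^*} = D \geq \tau$ makes $v^*$ high-degree as well, for a total change of exactly $D + 1$. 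Since $\degree{}{u_j} = D$, $\degree{G'}{v_i} = \tau \leq D$, and $\degree{G'}{v^*} = D$, both graphs respect the degree bound.

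I expect the only real subtlety to be the tightness construction: one must simultaneously place $v^*$'s $D$ neighbors at degree exactly $\tau - 1$ before the insertion while keeping every degree (including the auxiliary nodes') at most $D$. This is immediate once $\tau \leq D$, since each $v_i$ needs only $\tau - 1 \leq D - 1$ extra edges and the shared auxiliary nodes absorb the remaining incidences at degree exactly $D$. The upper-bound half is then just the routine monotonicity counting described above.
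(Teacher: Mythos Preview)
Your proof is correct and follows essentially the same approach as the paper: the upper bound via the monotonicity/counting argument is identical, and your tightness example (a complete bipartite graph $K_{D,\tau-1}$ with $v^*$ attached to the $D$-side) is a minor variant of the paper's $D$ disjoint $(\tau-1)$-stars, both placing $D$ nodes at degree exactly $\tau-1$ so that the single new node pushes all of them across the threshold while itself contributing one more.
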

\begin{proof}
Consider neighboring graphs $G = (V, E)$ and $G' = (V', E')$ with $V' = V \cup \{v^*\}$ and $E' = E \cup E^*$ where $E^*$ consists of all edges adjacent to $v^*$.

Observe that for any node $v\in V$, $\degree{G}{v}$ and $\degree{G'}{v}$ are equal if $v$ is not connected to $v^*$ in $G'$.  There are in total $|E^*| \leq D$ nodes that are connected to $v^*$; this extra connection can potentially push their degrees over the threshold. Moreover, $v^*$ itself can have degree over $\tau$. Therefore, $\highdegree{\tau}{G'} - \highdegree{\tau}{G} \leq D + 1$.

Now we show neighboring $D$-bounded graphs $G$ and $G'$ such that $\abs{\highdegree{\tau}{G'} - \highdegree{\tau}{G}} = D+1$. Let $G$ consist of $D$ $(\tau - 1)$-stars, and $G'$ be equal to $G$ except for an added node $v^*$ that is connected to the center of all the stars. Since $D \geq \tau$, the degrees of all nodes still at most $D$. Additionally, we have $\highdegree{\tau}{G} = 0$ and $\highdegree{\tau}{G'} = D+1$. The lemma follows.
\end{proof}

\begin{lemma}\label{lem:baseline_directed_one_D}
Given a directed graph $G = (V, E)$, the $\Din$-in-bounded global sensitivity of publishing $\highoutdegree{\tau}{G}$ is $\Din+1$ for any $\tau \leq \Dout$.
\end{lemma}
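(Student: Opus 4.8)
The plan is to prove this as the directed counterpart of Lemma~\ref{lem:baseline_undirected_one_D}, establishing a matching upper and lower bound. As in the undirected case, I would take two neighboring graphs $G = (V,E)$ and $G' = (V',E')$ with $V' = V \cup \{v^*\}$ and $E' = E \cup E^*$, where $E^*$ is the set of all edges incident to $v^*$ (both its in-edges and out-edges), and both graphs are assumed $\Din$-in-bounded.

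For the upper bound, the crucial observation is that only the in-edges of $v^*$ can influence the out-degree statistic on the existing nodes. Indeed, for any $v \in V$, $\outdegree{G}{v}$ and $\outdegree{G'}{v}$ differ only if $v$ gains a new out-edge, which must be the edge $(v, v^*)$; equivalently, $v^*$ has an in-edge from $v$. Since $G'$ is $\Din$-in-bounded, $v^*$ has at most $\Din$ in-edges, so at most $\Din$ existing nodes can have their out-degree raised, each by exactly one, and hence each can cross the threshold $\tau$ at most once. Moreover, adding edges never lowers an out-degree, so no existing node can drop below $\tau$. Together with $v^*$ itself, which may also have out-degree at least $\tau$, this yields $\abs{\highoutdegree{\tau}{G'} - \highoutdegree{\tau}{G}} \leq \Din + 1$. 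Note that, in contrast to the undirected setting, the out-degree bound $\Dout$ plays no role in this argument; it enters only through the meaningfulness condition $\tau \leq \Dout$.

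For the lower bound, I would exhibit neighboring graphs achieving exactly $\Din + 1$. Take $\Din$ nodes $v_1, \ldots, v_{\Din}$, each with out-degree exactly $\tau - 1$ in $G$ (for instance, all pointing to a shared set of $\tau - 1$ target nodes, which keeps every in-degree at most $\Din$). Let $G'$ add the node $v^*$ together with the in-edges $(v_i, v^*)$ for all $i$ and $\tau$ out-edges from $v^*$ to fresh nodes. Then each $v_i$ has out-degree $\tau$ in $G'$ and $v^*$ has out-degree $\tau$, while $v^*$ has in-degree $\Din$; one checks that both graphs are $\Din$-in-bounded and stay $\Dout$-out-bounded because $\tau \leq \Dout$. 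Consequently $\highoutdegree{\tau}{G} = 0$ while $\highoutdegree{\tau}{G'} = \Din + 1$, matching the upper bound.

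The main obstacle, though a mild one, is the degree bookkeeping in the lower-bound construction: one must route the edges so that neither the shared targets nor $v^*$ exceed the in-bound $\Din$, and so that pushing $\tau$ out-edges out of $v^*$ (and raising each $v_i$ to out-degree $\tau$) respects $\tau \leq \Dout$. Once the construction is pinned down, the upper and lower bounds coincide and the lemma follows.
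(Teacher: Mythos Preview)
Your proposal is correct and follows essentially the same approach as the paper: the upper bound argument is identical (only the at most $\Din$ in-neighbors of $v^*$ can have their out-degree pushed across $\tau$, plus $v^*$ itself), and your lower-bound construction is the same idea as the paper's $\Din$ copies of $(\tau-1)$-out-stars together with $\tau$ fresh targets for $v^*$, differing only cosmetically in that you let the $v_i$ share their $\tau-1$ targets rather than using disjoint stars. One minor remark: the lemma's hypothesis is only $\Din$-in-boundedness, so verifying the $\Dout$-out-bound in the lower-bound example is not strictly required, though it does no harm.
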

\begin{proof}
Consider neighboring graphs $G = (V, E)$ and $G' = (V', E')$ with $V' = V \cup \{v^*\}$ and $E' = E \cup E^*_i \cup E^*_o$ where $E^*_i$ consists of all in-edges adjacent to $v^*$ and $E^*_o$ consists of all out-edges adjacent to $v^*$. For any node $v\in V$, $\outdegree{G}{v}$ and $\outdegree{G'}{v}$ are equal if $(v, v^*) \notin E^*_i$; and there are in total $|E^*_i|$ nodes that have inbound edges to $v^*$; this extra edge can push their outdegrees over the threshold. Additionally, $v^*$ itself may have outdegree over the threshold $\tau$. Therefore, $\highoutdegree{\tau}{G'} - \highoutdegree{\tau}{G} \leq |E^*_i| + 1 \leq \Din + 1$.

Now we show neighboring $D$-bounded graphs $G$ and $G'$ such that $\abs{\highoutdegree{\tau}{G'} - \highoutdegree{\tau}{G}} = \Din+1$.
Let $G$ consists of $\Din$ $(\tau - 1)$-out-stars and $\tau$ isolated nodes. Let $G'$ be equal to $G$ except for a node $v^*$ that has inbound edges from the center of all the stars and outbound edges to all the isolated nodes. Observe that as $\tau \leq \Din, \Dout$, the in-degrees of all nodes are no larger than $\Din$ and out-degrees of all nodes are no larger than $\Dout$, and we have $\highoutdegree{\tau}{G} = 0$ and $\highoutdegree{\tau}{G'} = \Din+1$.
\end{proof}

\subsubsection{Analyses for \composeproj}
\begin{lemma}\label{lem:baseline_undirected_one_DLL}
Using Algorithm~\ref{alg:DLL_oneshot} on an undirected graph $G$ to get $G^{\tD}$ and publishing $\highdegree{\tau}{G^{\tD}}$ has global sensitivity $\tD + 1$ for any $0 < \tau \leq \tD$.
\end{lemma}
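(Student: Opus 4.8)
The plan is to bound the global sensitivity from above by $\tD + 1$ and then exhibit a pair of neighboring graphs attaining it. Fix neighboring graphs $G = (V, E)$ and $G' = (V \cup \{v^*\}, E \cup E^*)$, where $E^*$ collects the edges incident to $v^*$; since global sensitivity is symmetric it suffices to bound $\abs{\highdegree{\tau}{G'^{\tD}} - \highdegree{\tau}{G^{\tD}}}$ for the node-addition direction. I would run Algorithm~\ref{alg:DLL_oneshot} on $G$ and on $G'$ under a \emph{consistent} edge ordering (the common edges of $E$ keep their relative order, with $v^*$'s edges inserted among them), so that the two runs agree on every edge not touched by the cascade described below.

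The node $v^*$ is present only in $G'^{\tD}$ and so contributes at most $1$ to the difference. The core of the argument is to show that at most $\tD$ \emph{other} nodes can change their high-degree status. Here I would exploit the greedy structure of the projection: $v^*$ retains at most $\tD$ edges in $G'^{\tD}$ (since $G'^{\tD}$ is $\tD$-bounded), and I claim each retained edge $(v^*, w)$ initiates a single alternating ``cascade'' when the two runs are compared. Concretely, adding $(v^*, w)$ raises $w$'s running degree by one; if this pushes $w$ to the cap $\tD$, a later edge $(w, x)$ accepted in the $G$-run is now rejected, which frees capacity at $x$ and may cause a previously rejected edge $(x, y)$ to be accepted, and so on. Along such a path the membership of edges alternates between ``$G'$-only'' and ``$G$-only,'' so every \emph{interior} node gains and loses exactly one edge and keeps its projected degree, while only the far endpoint has its projected degree shifted by exactly one. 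Thus each of the at most $\tD$ retained edges of $v^*$ perturbs the projected degree of a single additional node by one; only those nodes can cross the threshold $\tau$, so the total status change within $V$ is at most $\tD$, and combining with $v^*$ gives the bound $\tD + 1$.

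The main obstacle is making the cascade argument rigorous: I must verify that the symmetric difference $E(G^{\tD}) \triangle E(G'^{\tD})$ really decomposes into alternating paths rooted at $v^*$ with net-zero interior degree change, that edges incident to $v^*$ which are \emph{rejected} create no further discrepancy (they leave $H'$ unchanged), and that overlapping or coinciding path endpoints can only \emph{decrease} the count of affected nodes. This is precisely the stability property of the ordering-based projection of \cite{day2016publishing}, which I would either invoke directly or re-derive by a simultaneous induction on the two runs, maintaining as an invariant that the current discrepancy between the two projected graphs is a disjoint union of at most $\tD$ alternating paths emanating from $v^*$.

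Finally, for tightness I would reuse the construction style of Lemma~\ref{lem:baseline_undirected_one_D}: let $G$ consist of $\tD$ disjoint $(\tau-1)$-stars, and let $G'$ add $v^*$ joined to all $\tD$ star centers, processing the star edges before the $v^*$-edges. Since $\tau \le \tD$, every center has degree $\tau - 1 < \tD$ and $v^*$ reaches degree exactly $\tD$, so no edge is ever dropped by the projection: each center crosses from degree $\tau - 1$ to $\tau$, and $v^*$ itself has degree $\tD \ge \tau$. Hence $\highdegree{\tau}{G^{\tD}} = 0$ while $\highdegree{\tau}{G'^{\tD}} = \tD + 1$, matching the upper bound and establishing that the global sensitivity equals $\tD + 1$.
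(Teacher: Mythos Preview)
Your proposal is correct and follows essentially the same approach as the paper: both arguments reduce to the $\le \tD$ retained edges of $v^*$, trace an alternating ``cascade'' for each one in which interior nodes have unchanged projected degree and only the far endpoint shifts by $\pm 1$, then add $1$ for $v^*$ itself; the tightness example is also identical. The only cosmetic difference is that the paper formalizes the cascade step by inserting $v^*$'s retained edges one at a time (defining intermediate graphs $G_0,\dots,G_i$ and telescoping $\sum_j|\f{G_j^{\tD}}-\f{G_{j-1}^{\tD}}|$), which sidesteps your stated obstacle about overlapping paths and rejected $v^*$-edges more cleanly than the simultaneous-induction invariant you sketch.
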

\begin{proof}
For two graphs $H_1$ and $H_2$ with the same set of nodes, we say an edge has the same status in graph $H_1$ and $H_2$ if it is present in both graphs or absent in both.

Consider neighboring graphs $G = (V, E)$ and $G' = (V', E')$ with $V' = V \cup \{v^*\}$ and $E' = E \cup E^*$, where $E^*$ consists of all edges adjacent to $v^*$.
Let $\Lambda$ and $\lambda$ be the edge order and the corresponding ordering function.
Let the projected graphs be $G^{\tD}$ and $G'^{\tD}$.

In $G'$, $v^*$ may have a large number of adjacent edges $E^*$, but it is adjacent to at most $\tD$ edges in $G'^{\tD}$. Let such edges be $\{e^*_1,\dots,e^*_i\}$ with $\lambda(e^*_j) < \lambda(e^*_{j+1})$ for any $j$. Obviously, $i \leq \tD$. 
Notice that the projected graph of $(V', E\cup E^*)$ and that of $(V', E\cup \{e^*_1,\dots,e^*_i\})$ are the same, therefore we can assume $E^* = \{e^*_1,\dots,e^*_i\}$, which would not change $G'^{\tD}$.

Let $G_0 = (V', E)$, $G_1 = (V', E\cup \{e^*_1\})$, $G_2 = (V', E\cup \{e^*_1, e^*_2\})$, \dots, $G_i = (V', E\cup \{e^*_1,\dots,e^*_i\})$, i.e., $G_0$ is different from $G$ by only an isolated node $v^*$, and $G_i$ is exactly the same as $G'$. Moreover, considering the projected graphs, $G_0^{\tD}$ is the same as $G^{\tD}$ except for an isolated node $v^*$, since the isolated $v^*$ in $G_0$ does not influence the projection algorithm; obviously, $G_i^{\tD}$ is the same as $G'^{\tD}$.

Given any graph $H$, let $\f{H} = \highdegree{\tau}{H} - \ind{\degree{H}{v^*} \geq \tau}$, i.e., the number of high degree node in $H$ not counting $v^*$. We let $\ind{\degree{H}{v^*} \geq \tau}$ be $0$ if $v^*$ is not in $H$. 
Obviously, $\f{G_0^{\tD}} = \f{G^{\tD}}$ (since an isolated node $v^*$ does not have high degree) and $\f{G_i^{\tD}} = \f{G'^{\tD}}$.
So we have
\begin{align*}
&	\abs{\f{G'^{\tD}} - \f{G^{\tD}}}
= \abs{\f{G_i^{\tD}} - \f{G_0^{\tD}}}\\
= 	&|\f{G_i^{\tD}} - \f{G_{i-1}^{\tD}} + \f{G_{i-1}^{\tD}} - \dots \\& -  \f{G_1^{\tD}} + \f{G_1^{\tD}} - \f{G_0^{\tD}}|\\
\leq &\sum_{j=1}^i \abs{\f{G_j^{\tD}} - \f{G_{j-1}^{\tD}}}.
\end{align*}

Now we aim at bounding $\abs{\f{G_j^{\tD}} - \f{G_{j-1}^{\tD}}}$ for any $j \in [i]$.
Notice that $G_j$ and $G_{j-1}$ only differ by one edge $e^*_j$. Let us consider the influence of $e^*_j$ in the projected graph. 
Suppose $e^* = (v^*,v_0)$. 
Then there exists a finite ``alternating'' edge sequences $(v^*, v_1)$, $(v_1, v_2)$, $(v_2, v_3)$, \dots, $(v_{\ell-1}, v_{\ell})$ for some $\ell \geq 1$ which satisfies the following conditions:
\begin{itemize}
\item All edges in the sequence are in $E'$ and $\lambda((v^*,v_1)) < \lambda((v_1,v_2)) < \dots < \lambda((v_{\ell-1}, v_{\ell}))$
\item For all $k \in [\ell-1] \cap 2\mathbb{Z}+1$ (edges at odd positions), $\degree{G_{j-1}^{\tD}}{v_k} = \tD$ and $(v_k, v_{k+1}) = \argmax_{e \in F} \lambda(e)$ for $F = \edge{G_{j-1}^{\tD}}{v_k}$ ($(v_k, v_{k+1})$ has the lowest priority among all adjacent edges of $v_k$ in $G_{j-1}^{\tD}$)
\item For all $k \in [\ell-2] \cap 2\mathbb{Z}$ (edges at even positions), $\degree{G_{j-1}^{\tD}}{v_k} < \degree{G_{j-1}}{v_k}$ and $(v_k, v_{k+1}) = \argmin_{e \in F} \lambda(e)$ for $F = \edge{G_{j-1}}{v_k} \backslash \edge{\tilde{G_{j-1}}}{v_k}$ ($(v_k, v_{k+1})$ has the highest priority among all adjacent edges of $v_k$ in $G_{j-1}$ but not in $G_{j-1}^{\tD}$)
\end{itemize}

Running the projection algorithm, we would have the following process.
$(v^*,v_1)$ is added to $G_j^{\tD}$. Since $\degree{G_{j-1}^{\tD}}{v_1} = \tD$, $(v^*,v_1)$ will prevent another edge,
$(v_1,v_2)$, from being added to $G_j^{\tD}$. This saves one quota in the budget of $v_2$. Since some edge in $\edge{G_{j-1}}{v_2}$ is not present in $G_{j-1}^{\tD}$, we will have the first of them,
$(v_2,v_3)$, added to $G_j^{\tD}$
\dots
This process keeps going and stops at $(v_{\ell-1},v_{\ell})$. 
In a word, all odd positioned edges in the sequence are added to $G_j^{\tD}$ and all even positioned edges are not. Any edge that is not in the sequence has the same status in $G_{j-1}^{\tD}$ and $G_j^{\tD}$. 

Considering the degrees of all nodes in $\{v_1,\dots,v_{\ell}\}$, only $v_{\ell}$ has different degrees in $G_{j-1}^{\tD}$ and $G_j^{\tD}$, and the difference is either $1$ or $-1$; all other nodes in $\{v_1,\dots,v_{\ell}\}$ have the same degree in both graphs. Also, all nodes in $V\backslash \{v_1,\dots,v_{\ell}\}$ has the same degree in both graphs as well.

Therefore, $\abs{\f{G_j^{\tD}} - \f{G_{j-1}^{\tD}}} \leq 1$, and thus $\abs{\f{G'^{\tD}} - \f{G^{\tD}}} \leq i \leq \tD$. Taking into consideration that $v^*$ can have high degree, we have $\abs{\highdegree{\tau}{G'^{\tD}} - \highdegree{\tau}{G^{\tD}}} \leq \tD + 1$.

We now show that for any $\tD$ and $\tau$ such that $0 < \tau \leq \tD$, there exists neighboring graphs $G$ and $G'$ such that $\abs{\highdegree{\tau}{G'^{\tD}} - \highdegree{\tau}{G^{\tD}}} = \tD + 1$.
Let $G$ be a graph with $\tD$ $(\tau-1)$-stars, and $G'$ be the same as $G$ except for an additional node $v^*$ that is connected to all the centers of the stars. Since $\tau \leq \tD$, all nodes in $G$ and $G'$ have degree no more than $\tD$, and thus $G^{\tD} = G$ and $G'^{\tD} = G'$. We have $\highdegree{\tau}{G} = 0$ and $\highdegree{\tau}{G'} = \tD+1$.

Therefore, the global sensitivity is $\tD+1$.
\end{proof}

\begin{lemma}\label{lem:baseline_directed_one_DLL}
Using Algorithm~\ref{alg:DLL_oneshot} on a directed graph $G$ to get $G^{\tDin, \tDout}$ and publishing $\highdegree{\tau}{G^{\tDin, \tDout}}$ has global sensitivity $\max\{\tDin + 1, \tDout-1\}$ for any $0 < \tau \leq \tDout$.
\end{lemma}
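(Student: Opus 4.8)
The plan is to follow the undirected argument of Lemma~\ref{lem:baseline_undirected_one_DLL} almost verbatim in its scaffolding, but to track \emph{out-degrees} through the directed projection of Algorithm~\ref{alg:DLL_oneshot} and, crucially, to bound the increase and the decrease of the statistic \emph{separately}, since the target bound is a maximum rather than a sum. Concretely, take neighboring $G = (V,E)$ and $G' = (V', E')$ with $V' = V \cup \{v^*\}$ and $E' = E \cup E^*_i \cup E^*_o$, where $E^*_i$ and $E^*_o$ are the in- and out-edges of $v^*$. In $G'^{\tDin,\tDout}$ the node $v^*$ keeps at most $\tDin$ in-edges and at most $\tDout$ out-edges, and (as in the undirected proof) the never-admitted edges of $v^*$ may be deleted from the input without changing any projection, since a rejected edge alters no degree. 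Order the surviving edges of $v^*$ by $\lambda$ and build the chain $G_0 \subset G_1 \subset \cdots \subset G_m$ inserting them one at a time, with $G_0 = G$ plus an isolated $v^*$ and $G_m = G'$. Writing $\f{H} = \highoutdegree{\tau}{H} - \ind{\outdegree{H}{v^*}\geq\tau}$ for the count excluding $v^*$, I then telescope $\f{G_m^{\tDin,\tDout}} - \f{G_0^{\tDin,\tDout}}$ over the single-edge insertions.

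The technical heart, and the main obstacle, is an orientation-sensitive version of the undirected alternating-path claim. Inserting a surviving \emph{in}-edge $(u,v^*)$ consumes out-degree budget at $u$; following the alternating sequence that alternately evicts a lowest-priority out-edge (freeing one unit of some target's in-budget) and re-admits a highest-priority previously-rejected in-edge (consuming one unit of some source's out-budget), I expect every internal node to retain its out-degree and exactly one node to gain one unit, so each in-edge step changes $\f{\cdot}$ by $0$ or $+1$. Symmetrically, inserting a surviving \emph{out}-edge $(v^*,w)$ consumes in-degree budget at $w$; the dual cascade evicts a lowest-priority in-edge (lowering one source's out-degree) and re-admits a rejected out-edge, so exactly one node loses one unit of out-degree and each out-edge step changes $\f{\cdot}$ by $0$ or $-1$. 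Proving these cascades are well-defined, terminate, and leave all but one out-degree unchanged is where the care of the undirected proof must be redone with the two budgets now interacting along the path.

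Granting the claim, let $a \le \tDin$ be the number of in-edge steps that raise the count, $b \le \tDout$ the number of out-edge steps that lower it, and $s \le \tDout$ the number of surviving out-edges of $v^*$, so that $b \le s$ and $v^*$ is high-out-degree iff $s \ge \tau$. Then $\highoutdegree{\tau}{G'^{\tDin,\tDout}} - \highoutdegree{\tau}{G^{\tDin,\tDout}} = (a-b) + \ind{s\geq\tau}$. The upper bound is immediate: $a - b + \ind{s\geq\tau} \le \tDin + 1$. For the lower bound, $a - b + \ind{s\geq\tau} \ge -s + \ind{s\geq\tau}$, which is $\ge -(\tDout-1)$ when $s \ge \tau$ and $\ge -(\tau-1) \ge -(\tDout-1)$ when $s < \tau$, using $\tau \le \tDout$. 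Hence the magnitude is at most $\max\{\tDin+1,\tDout-1\}$. This is exactly where the $-1$ is forced: realizing $\tDout$ decreases requires all $\tDout$ out-edges to survive, whence $\outdegree{}{v^*}=\tDout \ge \tau$ contributes an offsetting $+1$.

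Finally, for tightness I would exhibit two families of neighboring degree-bounded instances, one saturating each term, in the style of Lemma~\ref{lem:baseline_directed_one_D}. To hit $\tDin+1$, take $\tDin$ disjoint $(\tau-1)$-out-stars and let $v^*$ receive one in-edge from each star-center (raising $\tDin$ out-degrees from $\tau-1$ to $\tau$) while emitting $\tau$ out-edges to fresh isolated nodes (so $v^*$ is itself high); all degrees stay within the thresholds, no edge is projected away, and the count jumps by $\tDin+1$. To hit $\tDout-1$, arrange $\tDout$ sources of out-degree exactly $\tau$ whose lowest-priority out-edge lands on a distinct target already at in-degree $\tDin$, and give $v^*$ higher-priority out-edges to those targets so the existing edges are evicted; each source drops to out-degree $\tau-1$ while $v^*$ gains out-degree $\tDout \ge \tau$, for a net change of $-(\tDout-1)$. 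Taking the worse of the two directions yields global sensitivity exactly $\max\{\tDin+1,\tDout-1\}$.
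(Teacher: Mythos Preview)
Your proposal is correct and follows essentially the same approach as the paper: the same reduction to surviving edges of $v^*$, the same telescoping over single-edge insertions via $\f{H}=\highoutdegree{\tau}{H}-\ind{\outdegree{H}{v^*}\geq\tau}$, the same signed alternating-path analysis (in-edges of $v^*$ contribute $\{0,+1\}$, out-edges contribute $\{0,-1\}$), and matching tightness constructions. Your lower-bound case split on $s\gtrless\tau$ is slightly more explicit than the paper's one-line bound $\outdegree{\tilde{G'}}{v^*}-\ind{\outdegree{\tilde{G'}}{v^*}\geq\tau}\leq\tDout-1$, but the content is identical; the only remaining work, as you note, is to spell out the directed alternating sequence carefully, which the paper does.
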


\begin{proof}
For two graphs $H_1$ and $H_2$ with the same set of nodes, we say an edge has the same status in graph $H_1$ and $H_2$ if it is present in both graphs or absent in both.

Consider neighboring graphs $G = (V, E)$ and $G' = (V', E')$, with $V' = V \cup \{v^*\}$ and $E' = E \cup E^*_i \cup E^*_o$, where $E^*_i$ consists of all in-edges adjacent to $v^*$ and $^*E_o$ consists of all out-edges adjacent to $v^*$.
Let $\Lambda$ and $\lambda$ be the edge order and the corresponding ordering function.
To simplify the notation, we use $\tilde{G}$ to denote $G^{\tDin, \tDout}$ for any $G$.
Let the projected graphs be $\tilde{G}$ and $\tilde{G'}$.

In $G'$, $v^*$ may have a large number of adjacent edges $E^*$, but it is adjacent to at most $\tDin$ in-edges and at most $\tDout$ out-edges in $\tilde{G'}$. Let such edges be $\{e^*_1,\dots,e^*_i\}$ with $\lambda(e^*_j) < \lambda(e^*_{j+1})$ for any $j$. Obviously, this set contains at most $\tDin$ in-edges and at most $\tDout$ out-edges of $v^*$.
Notice that the projected graph of $(V', E\cup E^*)$ and that of $(V', E\cup \{e^*_1,\dots,e^*_i\})$ are the same, therefore we can assume that $E^*_i$ contains only the in-edges of $v^*$ that are present in $\tilde{G'}$ and $E^*_o$ contains only the out-edges of $v^*$ that are present in $\tilde{G'}$ and have $E^*_i \cup E^*_o = \{e^*_1,\dots,e^*_i\}$. This assumption does not change $\tilde{G'}$.

Let $G_0 = (V', E)$, $G_1 = (V', E\cup \{e^*_1\})$, $G_2 = (V', E\cup \{e^*_1, e^*_2\})$, \dots, $G_t = (V', E\cup E^*)$, i.e., $G_0$ is different from $G$ by only an isolated node $v^*$, and $G_i$ is exactly the same as $G'$. 
Moreover, considering the projected graphs, $\tilde{G_0}$ is the same as $\tilde{G}$ except for an isolated node $v^*$, since the isolated $v^*$ in $G_0$ does not influence the projection algorithm; obviously, $\tilde{G_i}$ is the same as $\tilde{G'}$.

Given any graph $H$, let $\f{H} = \highoutdegree{\tau}{H} - \ind{\outdegree{H}{v^*} \geq \tau}$, i.e., the number of high out-degree nodes in $H$ not counting $v^*$. We let $\ind{\outdegree{H}{v^*} \geq \tau}$ be $0$ if $v^*$ is not in $H$. 
Obviously, $\f{\tilde{G_0}} = \f{\tilde{G}}$ (since an isolated node $v^*$ does not have high out-degree) and $\f{\tilde{G_t}} = \f{\tilde{G'}}$.
So we have
\begin{align*}
&	{\f{\tilde{G'}} - \f{\tilde{G}}}
=   {\f{\tilde{G_{i}}} - \f{\tilde{G_{0}}}}\\
= & \f{\tilde{G_{i}}} - \f{\tilde{G_{i-1}}} + \f{\tilde{G_{i-1}}} - \dots \\&- \f{\tilde{G_{1}}} + \f{\tilde{G_{1}}} - \f{\tilde{G_{0}}}\\
= & \paren{\f{\tilde{G_{i}}} - \f{\tilde{G_{i-1}}}} + \dots + \paren{\f{\tilde{G_{1}}} - \f{\tilde{G_{0}}}}.
\end{align*}

Now we aim at calculating $\f{\tilde{G_{j}}} - \f{\tilde{G_{j-1}}}$.
Notice that $G_j$ and $G_{j-1}$ only differ by one edge $e^*_j$.
Let us consider the influence of $e^*_j$ in the projected graph. We need to consider two cases -- $e^*_j$ is an out-edge of $v^*$ and $e^*_j$ is an in-edge.

First, suppose $e^*_j = (v^*,v_1)$ is an out-edge of $v^*$.

There exists a finite ``alternating'' edges sequences $e^*_j = (v^*, v_1)$, $e_2 = (v_2, v_1)$, $e_3 = (v_2, v_3)$, $e_4 = (v_4,v_3)$ \dots, $e_{\ell} = (v_{\ell-1}, v_{\ell})$ (or the last one might be $e_{\ell} = (v_{\ell},v_{\ell-1})$) for some $\ell \geq 1$ which satisfies the following conditions:
\begin{itemize}
\item All edges in the sequence are in $V'$ and $\lambda(e^*_j) < \lambda(e_2) < \dots < \lambda(e_{\ell})$
\item For all $k \in [\ell-1] \cap 2\mathbb{Z}+1$, $\indegree{\tilde{G_{j-1}}}{v_k} = \tDin$ and $e_{k+1} = (v_{k+1},v_k) = \argmax_{e \in E'} \lambda(e)$ for $E' = \inedge{\tilde{G_{j-1}}}{v_k}$ ($e_{k+1}$ has the lowest priority among all adjacent in-edges of $v_k$ that are in $\tilde{G_{j-1}}$)
\item For all $k \in [\ell-2] \cap 2\mathbb{Z}$,   $\outdegree{\tilde{G_{j-1}}}{v_k} < \outdegree{G_{j-1}}{v_k}$ and $e_{k+1} = (v_k, v_{k+1}) = \argmin_{e \in E'} \lambda(e)$ for $E' = \outedge{G_{j-1}}{v_k} \backslash \outedge{\tilde{G_{j-1}}}{v_k}$ ($e_{k+1}$ has the highest priority among all adjacent out-edges of $v_k$ that are not in $\tilde{G_{j-1}}$)
\end{itemize}

Running the projection algorithm, we would have the following process.
$(v^*,v_1)$ is added to $\tilde{G_j}$. Since $\indegree{\tilde{G_{j-1}}}{v_1} = \tDin$, $(v^*,v_1)$ will prevent another edge,
$(v_2,v_1)$, from being added to $\tilde{G_j}$. This saves one quota in the out-degree budget of $v_2$. Since some edge in $\outedge{G_{j-1}}{v_2}$ is not present in $\tilde{G_{j-1}}$, we will have the first of them,
$(v_3,v_2)$, added to $\tilde{G_j}$
\dots
This process keeps going and stops at $e_{\ell}$. 
In a word, all odd positioned edges in the sequence are added to $\tilde{G_j}$ and all even positioned edges are not. Any edge that is not in the sequence has the same status in $\tilde{G_{j-1}}$ and $\tilde{G_j}$. 

Considering the out-degrees of all nodes in $\{v_1,\dots,v_{\ell}\}$, if $\ell$ is even, only $v_{\ell}$ has different out-degrees in $\tilde{G_{j-1}}$ and $\tilde{G_j}$ and $\outdegree{\tilde{G_{j-1}}}{v_{\ell}} - \outdegree{\tilde{G_j}}{v_{\ell}} = 1$ and all other nodes have the same degree in the two graphs; and if $\ell$ is odd, then all nodes have the same out-degrees in $\tilde{G_{j-1}}$ and $\tilde{G_j}$.
All nodes in $V\backslash \{v_1,\dots,v_{\ell}\}$ have the same out-degree in $\tilde{G_{j-1}}$ and $\tilde{G_j}$. 

Second, suppose $e^*_j = (v_1, v^*)$ is an in-edge of $v^*$.

There exists a finite ``alternating'' edges sequences $e^*_j = (v_1,v^*)$, $e_2 = (v_1,v_2)$, $e_3 = (v_3,v_2)$, $e_4 = (v_3,v_4)$ \dots, $e_{\ell} = (v_{\ell-1}, v_{\ell})$ (or the last one might be $e_{\ell} = (v_{\ell},v_{\ell-1})$) for some $\ell \geq 1$ which satisfies the following conditions:
\begin{itemize}
\item All edges in the sequence are in $V'$ and $\lambda(e^*_j) < \lambda(e_2) < \dots < \lambda(e_{\ell})$
\item For all $k \in [\ell-1] \cap 2\mathbb{Z}+1$, $\outdegree{\tilde{G_{j-1}}}{v_k} = \tDout$ and $e_{k+1} = (v_{k+1},v_k) = \argmax_{e \in E'} \lambda(e)$ for $E' = \outedge{\tilde{G_{j-1}}}{v_k}$ ($e_{k+1}$ has the lowest priority among all adjacent out-edges of $v_k$ that are in $\tilde{G_{j-1}}$)
\item For all $k \in [\ell-2] \cap 2\mathbb{Z}$,   $\indegree{\tilde{G_{j-1}}}{v_k} < \indegree{G_{j-1}}{v_k}$ and $e_{k+1} = (v_k, v_{k+1}) = \argmin_{e \in E'} \lambda(e)$ for $E' = \inedge{G_{j-1}}{v_k} \backslash \inedge{\tilde{G_{j-1}}}{v_k}$ ($e_{k+1}$ has the highest priority among all adjacent in-edges of $v_k$ that are not in $\tilde{G_{j-1}}$)
\end{itemize}

In the projection algorithm, all odd positioned edges in the sequence are added to $\tilde{G_j}$ and all even positioned edges are not. Any edge that is not in the sequence has the same status in $\tilde{G_{j-1}}$ and $\tilde{G_j}$. 

Considering the out-degrees of all nodes in $\{v_1,\dots,v_{\ell}\}$, if $\ell$ is odd, only $v_{\ell}$ has different out-degrees in $\tilde{G_{j-1}}$ and $\tilde{G_j}$ and $\outdegree{\tilde{G_j}}{v_{\ell}} - \outdegree{\tilde{G_{j-1}}}{v_{\ell}} = 1$, and all other nodes have the same degree in the two graphs; and if $\ell$ is even, then all nodes have the same out-degrees in $\tilde{G_{j-1}}$ and $\tilde{G_j}$.
All nodes in $V\backslash \{v_1,\dots,v_{\ell}\}$ has the same out-degree in $\tilde{G_{j-1}}$ and $\tilde{G_j}$. 

Therefore, 
if $e^*_j$ is an out-edge of $v^*$, then $\f{\tilde{G_j}} - \f{\tilde{G_{j-1}}}$ is either $-1$ or $0$;
if $e^*_j$ an in-edge of $v^*$, then $\f{\tilde{G_j}} - \f{\tilde{G_{j-1}}}$ is either $1$ or $0$.

We have
\begin{align*}
&\paren{\f{\tilde{G_{i}}} - \f{\tilde{G_{i-1}}}} + \dots + \paren{\f{\tilde{G_{1}}} - \f{\tilde{G_{0}}}}
\\&\in [- \outdegree{\tilde{G'}}{v^*}, \indegree{\tilde{G'}}{v^*}],
\end{align*}
and since
\begin{align*}
&	\highoutdegree{\tau}{\tilde{G'}} - \highoutdegree{\tau}{\tilde{G}}\\
= &	\f{\tilde{G'}} - \f{\tilde{G}} + \ind{\outdegree{\tilde{G'}}{v^*} \geq \tau},
\end{align*}
we have
\begin{align*}
&	\highoutdegree{\tau}{\tilde{G'}} - \highoutdegree{\tau}{\tilde{G}} \\
\leq&
\indegree{\tilde{G'}}{v^*} + \ind{\outdegree{\tilde{G'}}{v^*} \geq \tau}
\leq
\tDin + 1,\\
&\highoutdegree{\tau}{\tilde{G}} - \highoutdegree{\tau}{\tilde{G'}}\\
\leq&
\outdegree{\tilde{G'}}{v^*} - \ind{\outdegree{\tilde{G'}}{v^*} \geq \tau}
\leq \tDout - 1.
\end{align*}

Therefore the global sensitivity is upper bounded by $\max\{\tDin + 1, \tDout-1\}$.

We now show that for any $\tDin,\tDout > 0$, there exists neighboring graphs $G$ and $G'$ such that the difference between $\highoutdegree{\tau}{\tilde{G'}}$ and $\highoutdegree{\tau}{\tilde{G}}$ is $\tDin + 1$ and another neighboring graphs $G$ and $G'$ such that the difference is $\tDout - 1$.

Let $G$ be a graph with $\tDin$ $(\tau-1)$-out-stars, and $\tau$ isolated nodes. Let $G'$ be the same as $G$ except for an additional node $v^*$ that has in-edges from all the centers of the stars and out-edges to all the isolated nodes. All nodes in $G$ and $G'$ have out-degree no more than $\tDout$ and in-degree no more than $\tDin$, and thus $\tilde{G} = G$ and $\tilde{G'} = G'$. We have $\highoutdegree{\tau}{G} = 0$ and $\highoutdegree{\tau}{G'} = \tDin + 1$.

Let $G$ be a graph with $\tDout$ $\tau$-out-stars, with the nodes of the $i$-th star labeled as $\{c^i, b^i_1, \dots, b^i_{\tau}\}$, where $c^i$ is the center node, and $b^i_j$ denote the $j$-th outer node of the $i$-th star. Additionally, for every $i\in[\tau]$, let there be a set of $\Din - 1$ nodes $\{a^i_1,\dots,a^i_{\tDin-1}\}$ that point to $b^i_1$. Suppose all edges in the stars have the lowest priority in $\Lambda$.
Let $G'$ be the same as $G$ except for an additional node $v^*$ with additional edges $\{(v^*, b^i_1):i\in[\tau]\}$, i.e., $v^*$ points to the first non-central node of all stars. Let this set of new edges take highest priority in the ordering $\Lambda$.
In $G$, we have $\outdegree{}{c^i} = \tau$, $\outdegree{}{a^i_j} = 1$, $\outdegree{}{b^i_j} = 0$ for all $i, j$; $\indegree{}{c^i} = 0$, $\indegree{}{a^i_j} = 0$ (for all $j$), $\indegree{}{b^i_1} = \tDin$, $\indegree{}{b^i_j} = 1$ (for $j > 1$) for all $i$. So $\tilde{G} = G$, and $\highoutdegree{\tau}{\tilde{G}} = \tDout$.
On the other hand, running Algorithm~\ref{alg:DLL_oneshot} in $G'$, we would first consider the edges adjacent to $v^*$ and edges adjacent to $a^i_j$ and add all of them to $\tilde{G'}$. Now, since for all $i$, $b^i_1$ has in-degree equal to $\tDin$, we can no longer add edge $(c^i, b^i_1)$; we can still add $(c^i, b^i_j)$ for all $j > 1$ and have our final $\tilde{G'}$.
In $\tilde{G'}$, we have $\outdegree{}{c^i} = \tau - 1$, $\outdegree{}{a^i_j} = 1$, $\outdegree{}{b^i_j} = 0$, $\outdegree{}{v^*} = \tDout$ for all $i, j$, and thus $\highoutdegree{\tau}{\tilde{G'}} = 1$.
So $\highoutdegree{\tau}{\tilde{G}} - \highoutdegree{\tau}{\tilde{G'}} = \tDout - 1$.

Therefore, the global sensitivity is $\max\{\tDin + 1, \tDout - 1\}$.

We note that it is a must to put bound on both in-degree and out-degree. Apparently, we must have a bound on the in-degree (otherwise $v^*$ can have many in-edges and each can make the out-degree of one node cross $\tau$). Yet if we only bound in-degree, then $v^*$ can have many out-edges. Each out-edge $(v^*, v_0)$ can saturate the in-degree of $v_0$ earlier, forcing it to discard another of its in-edge $(v_1, v_0)$, causing the out-degree of $v_1$ to decrease by $1$. So the sensitivity can be at least the out-degree of $v^*$.
\end{proof}

\subsubsection{Analyses for \sensseqD}
\begin{lemma}\label{lem:baseline_undirected_T_D}
Given an undirected graph $G$, the $D$-bounded global sensitivity for publishing $\seq{\highdegree{\tau}{G_i})}{i=1}{T}$ is $(D+1)T$.
\end{lemma}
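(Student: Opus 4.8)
The plan is to reduce the sequence sensitivity to the single-graph sensitivity already established in Lemma~\ref{lem:baseline_undirected_one_D}, and then to exhibit a construction that saturates the bound at every time step. First I would fix two neighboring $D$-bounded sequences $\calG$ and $\calG'$ that differ by the insertion of a single node $v^*$ at some time step $i$; that is, $\partial V'_i = \partial V_i \cup \{v^*\}$ together with its incident edges, and $\partial V'_j = \partial V_j$ for $j \neq i$. The quantity to bound is the generalized $L_1$ distance $\sum_{t=1}^T \abs{\highdegree{\tau}{G_t} - \highdegree{\tau}{G'_t}}$, maximized over all such neighboring pairs.

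For the upper bound I would argue termwise. For $t < i$ the graphs $G_t$ and $G'_t$ are identical, so those terms vanish; for $t \geq i$ the graphs $G_t$ and $G'_t$ are a pair of neighboring single graphs differing exactly by $v^*$ and its adjacent edges, both $D$-bounded. Hence Lemma~\ref{lem:baseline_undirected_one_D} applies to each such pair and gives $\abs{\highdegree{\tau}{G_t} - \highdegree{\tau}{G'_t}} \leq D+1$. Summing over the at most $T$ nonzero terms yields the claimed bound $(D+1)T$, with the maximum attained when $i = 1$.

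For the matching lower bound I would place the tight single-graph gadget from Lemma~\ref{lem:baseline_undirected_one_D} entirely at the first time step and keep the graph static afterward. Concretely, at $t = 1$ introduce $D$ disjoint $(\tau-1)$-stars; in $\calG'$ additionally introduce $v^*$ joined to all $D$ star centers, which raises each center's degree to $\tau$ and gives $v^*$ degree $D \geq \tau$. Then set $\partial V_t = \partial E_t = \emptyset$ for $t = 2, \ldots, T$, so $G_t = G_1$ and $G'_t = G'_1$ for every $t$. Since $\highdegree{\tau}{G_t} = 0$ and $\highdegree{\tau}{G'_t} = D+1$ for all $t$, the distance is exactly $(D+1)T$, and both sequences remain $D$-bounded because $\tau \leq D$ keeps every degree at most $D$.

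The argument is essentially routine once the single-graph lemma is in hand; the one point demanding care is the observation that inserting $v^*$ at time $i$ contributes to \emph{every} subsequent release rather than a single one. Because edges only accumulate along the sequence, the $D+1$ discrepancy created when $v^*$ first appears persists in all later graphs, and this persistence is precisely what lets the per-step bound stack additively to $(D+1)T$. I expect no genuine obstacle beyond verifying that the extremal construction stays within the degree bound at each step, which it does since $v^*$ has degree exactly $D$ and every star center reaches degree exactly $\tau \leq D$.
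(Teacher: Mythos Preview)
Your proposal is correct and matches the paper's argument: both bound each term $\abs{\highdegree{\tau}{G_t} - \highdegree{\tau}{G'_t}}$ by $D+1$ and use the same static gadget from Lemma~\ref{lem:baseline_undirected_one_D} placed at $t=1$ for the lower bound. The only cosmetic difference is that you invoke Lemma~\ref{lem:baseline_undirected_one_D} directly for the per-term bound, while the paper re-expands the indicator-function argument inline; the content is the same.
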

\begin{proof}
Consider $G = (V, E)$ and $G = (V', E')$ such that $V = \cup_{t=1}^{\infty}\partial{V_{t}}$, $V' = \cup_{t=1}^{\infty}\partial{V'_{t}}$, $\partial{V'_{i}} = \partial{V_{i}} \cup \{v^*\}$ and $\partial{V_{j}} = \partial{V'_{j}}$ for $j\neq i$.
For any $v \in V$ that is not connected with $v^*$, $\degree{G_i}{v}$ and $\degree{G'_i}{v}$ are the same for any $i$. %
There are at most $D$ nodes that are connected with $v^*$; let $V^* = \{v\in V:(v, v^*) \in E'\}$ be the set of all such nodes.
We have
\begin{align*}
&	\sum_{i=1}^T \abs{\highdegree{\tau}{G_i} - \highdegree{\tau}{G'_i}}\\
=&	\sum_{i=1}^T \abs{\sum_{v\in V} \ind{\degree{G_i}{v} \geq \tau} - \sum_{v\in V'} \ind{\degree{G'_i}{v} \geq \tau}}\\
=&	\sum_{i=1}^T \abs*{\sum_{v\in V} \left(\ind{\degree{G_i}{v} \geq \tau} - \ind{\degree{G'_i}{v} \geq \tau}\right) \\&- \ind{\degree{G'_i}{v^*} \geq \tau}}\\
=&	\sum_{i=1}^T \abs*{\sum_{v\in V^*} \left(\ind{\degree{G_i}{v} \geq \tau} - \ind{\degree{G'_i}{v} \geq \tau}\right) \\&- \ind{\degree{G'_i}{v^*} \geq \tau}}\\
\leq&\sum_{i=1}^T (|V^*| + 1)
\leq (D+1)T,
\end{align*}
where the third equality follows from the fact that $\degree{G'_i}{v} = \degree{G_i}{v}$ for $v \notin V^*$, and the last inequality follows from the fact that the size of $V^*$, which consists of all nodes adjacent to $v^*$, is at most $D$.

Now we show that there exists neighboring $G$ and $G'$ such that the $D$-bounded global sensitivity is $(D+1)T$.
Lemma~\ref{lem:baseline_undirected_one_D} shows that there exists neighboring $D$-bounded graphs $G$ and $G'$, such that $\abs{\highdegree{}{G'} - \highdegree{}{G}} = D + 1$. Let $G_1 = G$, $G'_1 = G'$. We then have $\highdegree{}{G_i} = \highdegree{}{G_1} = \highdegree{}{G}$ and the same for $G'$. Therefore, $\sum_{i=1}^T \abs{\highdegree{}{G_i} - \highdegree{}{G'_i}} = (D+1)T$.
\end{proof}

\begin{lemma}\label{lem:baseline_directed_T_D}
Given a directed graph $G$, the $\Din$-bounded global sensitivity for publishing $\seq{\highoutdegree{\tau}{G_i}}{i=1}{T}$ is $(\Din+1)T$.
\end{lemma}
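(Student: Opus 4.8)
The plan is to mirror the proof of Lemma~\ref{lem:baseline_undirected_T_D} almost verbatim, substituting out-degree for degree and the in-degree bound $\Din$ for the bound $D$. Concretely, I would consider neighboring graph sequences arising from $G = (V, E)$ with $V = \cup_{t=1}^{\infty}\partial{V_{t}}$ and $G' = (V', E')$ with $V' = \cup_{t=1}^{\infty}\partial{V'_{t}}$, where $\partial{V'_{i}} = \partial{V_{i}} \cup \{v^*\}$ and $\partial{V_{j}} = \partial{V'_{j}}$ for $j \neq i$, so that $G'$ differs from $G$ only by the single node $v^*$ together with its adjacent edges inserted at time $i$. The goal is to bound $\sum_{i=1}^{T} \abs{\highoutdegree{\tau}{G_i} - \highoutdegree{\tau}{G'_i}}$ from above by $(\Din + 1)T$ and to exhibit a matching instance.

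For the upper bound, the key observation is that adding $v^*$ can change the out-degree of another node $v$ only if $v$ has an out-edge into $v^*$, i.e. $(v, v^*) \in E'$; every node $v$ with $(v, v^*) \notin E'$ satisfies $\outdegree{G_i}{v} = \outdegree{G'_i}{v}$ for all $i$. Letting $V^* = \{v \in V : (v, v^*) \in E'\}$, this set is exactly the collection of in-neighbors of $v^*$, and so $\abs{V^*} \leq \Din$ precisely because the sequence is $\Din$-in-bounded. I would then write $\highoutdegree{\tau}{G_i} = \sum_{v} \ind{\outdegree{G_i}{v} \geq \tau}$, cancel all terms for $v \notin V^*$, and bound the surviving contribution at each time step---at most $\abs{V^*}$ nodes of $V^*$ crossing the threshold, plus the single indicator $\ind{\outdegree{G'_i}{v^*} \geq \tau}$ for $v^*$ itself---by $\abs{V^*} + 1 \leq \Din + 1$. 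Summing over $t = 1, \dots, T$ gives $(\Din + 1)T$.

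For the matching lower bound, I would invoke Lemma~\ref{lem:baseline_directed_one_D}, which already exhibits neighboring $\Din$-in-bounded graphs $G$ and $G'$ with $\abs{\highoutdegree{\tau}{G'} - \highoutdegree{\tau}{G}} = \Din + 1$, and simply place that entire configuration at the first time step, keeping the graph static thereafter (so $\partial{V_t} = \partial{E_t} = \emptyset$ for $t \geq 2$). Then $\highoutdegree{\tau}{G_i} = \highoutdegree{\tau}{G}$ and $\highoutdegree{\tau}{G'_i} = \highoutdegree{\tau}{G'}$ for every $i$, so each of the $T$ terms contributes $\Din + 1$, yielding $(\Din + 1)T$ and matching the upper bound. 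I do not anticipate any real difficulty; the one point that deserves care---and the only genuine departure from the undirected argument---is to recognize that the nodes whose out-degree can shift are the \emph{in}-neighbors of $v^*$, so their count is governed by the in-degree bound $\Din$ and not by any bound on the out-degree of $v^*$. Getting this direction right is exactly what makes $\Din$ (rather than $\Dout$) the correct constant.
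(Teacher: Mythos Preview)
Your proposal is correct and follows essentially the same approach as the paper: the paper likewise defines $V^* = \{v\in V:(v, v^*) \in E'\}$, cancels the indicator terms for $v\notin V^*$, bounds each time step by $|V^*|+1\leq \Din+1$, and obtains the matching lower bound by invoking Lemma~\ref{lem:baseline_directed_one_D} and placing that instance entirely at time~$1$. Your emphasis that it is the \emph{in}-neighbors of $v^*$ (hence $\Din$) that govern the out-degree changes is exactly the point the paper's argument hinges on.
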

\begin{proof}
Consider $G = (V, E)$ and $G = (V', E')$ such that $V = \cup_{t=1}^{\infty}\partial{V_{t}}$, $V' = \cup_{t=1}^{\infty}\partial{V'_{t}}$, $\partial{V'_{i}} = \partial{V_{i}} \cup \{v^*\}$ and $\partial{V_{j}} = \partial{V'_{j}}$ for $j\neq i$.
For any $v \in V$ that does not point to $v^*$, $\outdegree{G_i}{v}$ and $\outdegree{G'_i}{v}$ are the same for any $i$.
There are at most $\Din$ nodes that point to $v^*$; let $V^* = \{v\in V:(v, v^*) \in E'\}$ denote all such nodes.
We have
\begin{align*}
&	\sum_{i=1}^T \abs{\highoutdegree{\tau}{G_i} - \highoutdegree{\tau}{G'_i}}\\
=&	\sum_{i=1}^T \abs*{\sum_{v\in V} \ind{\outdegree{G_i}{v} \geq \tau} \\&- \sum_{v\in V'} \ind{\outdegree{G'_i}{v} \geq \tau}}\\
=&	\sum_{i=1}^T \abs*{\sum_{v\in V} \left(\ind{\outdegree{G_i}{v} \geq \tau} \right.\\&\left.- \ind{\outdegree{G'_i}{v} \geq \tau}\right) - \ind{\outdegree{G'_i}{v^*} \geq \tau}}\\
=&	\sum_{i=1}^T \abs*{\sum_{v\in V^*} \left(\ind{\outdegree{G_i}{v} \geq \tau} \right.\\&\left.- \ind{\outdegree{G'_i}{v} \geq \tau}\right) - \ind{\outdegree{G'_i}{v^*} \geq \tau}}\\
\leq&\sum_{i=1}^T (|V^*| + 1)
\leq (\Din+1)T,
\end{align*}
where the third equality follows from the fact that $\outdegree{G'_i}{v} = \outdegree{G_i}{v}$ for $v \notin V^*$, and the last inequality follows from the fact that the size of $V^*$, which consists of all nodes that point to $v^*$, is at most $\Din$.

Now we show that there exists neighboring $G$ and $G'$ such that the $\Din$-in-bounded global sensitivity is $(\Din+1)T$.
Lemma~\ref{lem:baseline_directed_one_D} shows that there exists neighboring $\Din$-bounded graphs $G$ and $G'$, such that $\abs{\highoutdegree{}{G'} - \highoutdegree{}{G}} = \Din + 1$. Let $G_1 = G$, $G'_1 = G'$. We then have $\highoutdegree{}{G_i} = \highoutdegree{}{G_1} = \highoutdegree{}{G}$ and the same for $G'$. Therefore, $\sum_{i=1}^T \abs{\highoutdegree{}{G_i} - \highoutdegree{}{G'_i}} = (\Din+1)T$.
\end{proof}

\subsubsection{Analyses for \sensseqproj}
\begin{lemma}\label{lem:baseline_T_DLL}
Using Algorithm~\ref{alg:DLL_seq} on an undirected (or directed) graph sequence $\seq{G_i}{i=1}{T}$ to get $\seq{G_i^{\tD}}{i=1}{T}$ (or $\seq{G_i^{\tDin,\tDout}}{i=1}{T}$) and publishing $\seq{\highdegree{\tau}{G_i^{\tD}}}{i=1}{T}$ (or $\seq{\highoutdegree{\tau}{G_i^{\tDin, \tDout}}}{i=1}{T}$) has global sensitivity at least $(D+1)T$ (or at least $\max\{\tDin+1, \tDout-1\} \cdot T$).
\end{lemma}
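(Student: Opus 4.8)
The plan is to establish both bounds as \emph{lower} bounds on the global sensitivity, so it suffices to exhibit a single pair of neighboring graph sequences whose published outputs differ, in generalized $L_1$ distance, by the stated amount. The guiding idea is exactly the degenerate-sequence trick used in the proof of Lemma~\ref{lem:baseline_undirected_T_D} for \sensseqD: take a worst-case static witness, inject it entirely at the first time step, and let it persist unchanged so that the single-step difference is reproduced in all $T$ releases.

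Concretely, for the undirected case I would start from the extremal pair $G, G'$ constructed in the proof of Lemma~\ref{lem:baseline_undirected_one_DLL}, where $G$ is a disjoint union of $\tD$ many $(\tau-1)$-stars, $G'$ adds a node $v^*$ joined to every star center, and (because $\tau \le \tD$) the projection is the identity, so that $\highdegree{\tau}{G'^{\tD}} - \highdegree{\tau}{G^{\tD}} = \tD+1$. I would then form the graph sequences $\calG, \calG'$ by letting all of these vertices and edges arrive at $t=1$ (so $v^*$ is the single extra node, injected at $t=1$) and letting $\partial V_t, \partial E_t$ be empty for $t \geq 2$. Since no edges arrive after the first step, Algorithm~\ref{alg:DLL_seq} leaves $\tilde{E}$ frozen after step $1$, so $G_t^{\tD} = G^{\tD}$ and $G_t'^{\tD} = G'^{\tD}$ for every $t$; hence each of the $T$ releases contributes exactly $\tD+1$, for a total of $(\tD+1)T$.

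For the directed case the only change is the choice of witness. Lemma~\ref{lem:baseline_directed_one_DLL} supplies two static constructions, one realizing a difference of $\tDin+1$ and one realizing $\tDout-1$. I would select whichever of the two achieves $\max\{\tDin+1,\tDout-1\}$, again place it entirely at $t=1$ with nothing arriving afterwards, and observe that Algorithm~\ref{alg:DLL_seq} reproduces the corresponding one-shot projection at every step; summing the constant per-step difference over the $T$ releases yields $\max\{\tDin+1,\tDout-1\}\cdot T$.

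The step demanding the most care is verifying that running the sequence projection on this all-at-time-$1$ input really does reproduce the static projection used in the one-shot lemmas --- in particular, that the edge ordering required by the witnesses (for the directed $\tDout-1$ case, $v^*$'s edges at highest priority and the star edges at lowest) is admissible as an ordering $\Lambda_1$ that is consistent across $G$ and $G'$. This is where I would spend the argument: because every vertex and edge shares the time stamp $1$, the cross-time consistency constraint is vacuous, and the edges of $v^*$ occur only in $G'$, so they impose no relative-order constraint against $G$; thus any desired intra-step ordering is legal, and the sequence algorithm on $\Lambda_1$ coincides with Algorithm~\ref{alg:DLL_oneshot}. Everything else is the same bookkeeping as in Lemma~\ref{lem:baseline_undirected_T_D}.
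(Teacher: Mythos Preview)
Your proposal is correct and follows essentially the same approach as the paper: both take the extremal static pairs from Lemmas~\ref{lem:baseline_undirected_one_DLL} and~\ref{lem:baseline_directed_one_DLL}, inject them entirely at time~$1$ with empty increments thereafter, observe that the sequence projection then coincides with the one-shot projection at every step, and sum the constant per-step difference over the $T$ releases. If anything, your treatment is slightly more careful than the paper's, which does not explicitly argue that the edge ordering required by the directed $\tDout-1$ witness is compatible with the sequence algorithm's ordering constraints.
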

\begin{proof}
Let $\tilde{G}$ denote $G^{\tD}$ or $G^{\tDin,\tDout}$ for any $G$.

From Lemma~\ref{lem:baseline_undirected_one_DLL} (or Lemma~\ref{lem:baseline_directed_one_DLL}), there exists graphs $G=(V,E)$ and $G'=(V',E')$, such that $\abs{\highdegree{\tau}{\tilde{G'}} - \highdegree{\tau}{\tilde{G}}}$ (or the formula with $\highoutdegree{}{}$) equal to $D+1$ (or $\max\{\tDin+1, \tDout-1\}$). Let $G$ and $G'$ be such graphs.
Let $G_1 = G$, $G'_1 = G'$ and $\partial{V_{j}} = \partial{V'_{j}} = \emptyset$ for $j > 1$.
We thus have $\tilde{G_i} = \tilde{G_1} = \tilde{G}$ and $\tilde{G'_i} = \tilde{G'_1} = \tilde{G'}$ for any $i$. Therefore, all elements in the published sequence are the same, with value $\highdegree{\tau}{\tilde{G}}$ and $\highdegree{\tau}{\tilde{G'}}$ (or $\highoutdegree{\tau}{\tilde{G}}$ and $\highoutdegree{\tau}{\tilde{G'}}$).

Therefore,
\begin{align*}
\sum_{i=1}^T \abs{\highdegree{\tau}{\tilde{G_i}} - \highdegree{\tau}{\tilde{G_i}}} = (\tD+1)T
\end{align*}
or
\begin{align*}
&\sum_{i=1}^T \abs{\highoutdegree{\tau}{\tilde{G_i}} - \highoutdegree{\tau}{\tilde{G_i}}}\\ =& \max\{\tDin+1, \tDout-1\} \cdot T,
\end{align*}
and we can conclude that the global sensitivity of publishing the whole sequence is at least $(\tD+1)T$ (or $\max\{\tDin+1, \tDout-1\} \cdot T$).
\end{proof}

\subsection{Baselines for Publishing the Number of Edges}
Similar to the algorithms for publishing the number of high-degree node, we have four baselines for publishing the number of edges. 

As has been mentioned in Section~\ref{sec:main_alg}, there are two baseline algorithms for high-degree nodes count. Combining with the algorithm proposed in \cite{day2016publishing}, there are four baseline algorithms for both undirected and directed graph sequences.

For an undirected graph sequence $(G_1,\dots,G_T)$,
\begin{enumerate}
\item \composeD: For each $i \in [T]$, run the global sensitivity algorithm to publish $\numedge{}{G_i}$ with privacy parameter $\epsilon / T$ and $D$-bounded global sensitivity $D$ (Lemma~\ref{lem:baseline_one_D_edge}).
\item \composeproj: For each $i \in [T]$, run Algorithm~\ref{alg:DLL_oneshot} on $G_i$ with projection parameter $\tD$ to get $G_i^{\tD}$. Run the global sensitivity algorithm to publish $\numedge{}{G_i^{\tD}}$ with privacy parameter $\epsilon / T$ and global sensitivity $\tD$ (Lemma~\ref{lem:baseline_one_DLL_edge}).
\item \sensseqD: Run the global sensitivity algorithm to publish $\seq{\numedge{}{G_i}}{i=1}{T}$ with privacy parameter $\epsilon$ and $D$-bounded global sensitivity.
\item \sensseqproj: Run Algorithm~\ref{alg:DLL_seq} on $\seq{G_i}{i=1}{T}$ with projection parameter $\tD$ to get $\seq{G_i^{\tD}}{i=1}{T}$. Run the global sensitivity algorithm to publish $\seq{\numedge{}{G_i^{\tD}}}{i=1}{T}$ with privacy parameter $\epsilon$ and computed global sensitivity.
\end{enumerate}

For a directed graph sequence $(G_1,\dots,G_T)$,
\begin{enumerate}
\item \composeD: For each $i \in [T]$, run the global sensitivity algorithm to publish $\numedge{}{G_i}$ with privacy parameter $\epsilon / T$ and $(\Din,\Dout)$-bounded global sensitivity $\Din+\Dout$ (Lemma~\ref{lem:baseline_one_D_edge}).
\item \composeproj: For each $i \in [T]$, run Algorithm~\ref{alg:DLL_oneshot} on $G_i$ with projection parameter $\tDin$, $\tDout$ to get $G_i^{\tDin,\tDout}$. Run the global sensitivity algorithm to publish $\numedge{}{G_i^{\tDin,\tDout}}$ with privacy parameter $\epsilon / T$ and global sensitivity $\tDin + \tDout$ (Lemma~\ref{lem:baseline_one_DLL_edge}).
\item \sensseqD: Run the global sensitivity algorithm to publish $\seq{\numedge{}{G_i}}{i=1}{T}$ with privacy parameter $\epsilon$ and $\Din$-in-bounded global sensitivity.
\item \sensseqproj: Run Algorithm~\ref{alg:DLL_seq} on $\seq{G_i}{i=1}{T}$ with projection parameter $\tDin$, $\tDout$ to get $\seq{G_i^{\tDin,\tDout}}{i=1}{T}$. Run the global sensitivity algorithm to publish $\seq{\numedge{}{G_i^{\tDin,\tDout}}}{i=1}{T}$ with privacy parameter $\epsilon$ and computed global sensitivity.
\end{enumerate}

Similar to the results for number of high-degree nodes, it is not hard to see that for both undirected and directed graphs, (3) is worse than (1); and (4) is worse than (2). Therefore, we only need to run (1) and (2), i.e., \composeD\ and \composeproj.

\subsubsection{Analyses for \composeD}
\begin{lemma}\label{lem:baseline_one_D_edge}
Given an undirected graph $G = (V, E)$, the $D$-bounded global sensitivity of publishing $\numedge{}{G}$ is $D$.\\
Given an directed graph $G = (V, E)$, the $(\Din,\Dout)$-bounded global sensitivity of publishing $\numedge{}{G}$ is $\Din + \Dout$.
\end{lemma}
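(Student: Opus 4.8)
The plan is to treat this as a direct one-shot node-sensitivity computation, since $\numedge{}{\cdot}$ is a count that decomposes cleanly over edges and the neighboring relation for node privacy only touches edges incident to the single added node. For the undirected case I would fix neighboring graphs $G = (V,E)$ and $G' = (V', E')$ with $V' = V \cup \{v^*\}$ and $E' = E \cup E^*$, where $E^*$ is exactly the set of edges adjacent to $v^*$. Since no edge of $E$ is altered by the addition, $\numedge{}{G'} - \numedge{}{G} = |E^*| = \degree{G'}{v^*}$, which is at most $D$ because $G'$ is $D$-bounded. This immediately yields the upper bound $\GS{}{\numedge{}{\cdot}} \leq D$.

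For tightness I would exhibit neighbors achieving the bound: let $G$ consist of $D$ isolated nodes (so it is trivially $D$-bounded) and let $G'$ add $v^*$ together with an edge to each of these $D$ nodes. Then $v^*$ has degree $D$ and every other node has degree $1$, so $G'$ is still $D$-bounded, and $\numedge{}{G'} - \numedge{}{G} = D$. Hence the $D$-bounded global sensitivity is exactly $D$.

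For the directed case the argument is identical, except $v^*$ contributes both in-edges and out-edges: writing $E^* = E^*_i \cup E^*_o$ for the in- and out-edges adjacent to $v^*$, I get $\numedge{}{G'} - \numedge{}{G} = |E^*_i| + |E^*_o| = \indegree{G'}{v^*} + \outdegree{G'}{v^*} \leq \Din + \Dout$ by the $(\Din,\Dout)$-bound. Tightness then follows by taking $G$ to consist of $\Din + \Dout$ isolated nodes and letting $v^*$ receive an in-edge from $\Din$ of them and send an out-edge to the remaining $\Dout$; this respects both degree bounds and changes the edge count by exactly $\Din + \Dout$.

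There is essentially no hard step here; the only point requiring a moment's care is verifying that the extremal constructions stay within the degree bound — that is, that the neighbors of $v^*$ do not themselves exceed their degree limit after a new edge is attached. Starting those neighbors as isolated nodes makes this automatic, since each ends with degree (respectively in-/out-degree) one, well below the bound whenever $D \geq 1$ (resp. $\Din, \Dout \geq 1$).
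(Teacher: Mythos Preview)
Your proposal is correct and follows essentially the same approach as the paper's proof: adding $v^*$ contributes exactly its incident edges, which is at most $D$ (resp.\ $\Din+\Dout$), and tightness follows from an explicit construction. If anything, you are more careful than the paper in verifying that the extremal pair stays within the degree bound; the paper's proof simply asserts ``when $d=D$, the change is exactly $D$'' without spelling out the construction.
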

\begin{proof}
For undirected graph, suppose we add an additional node $v^*$ to $G = (V, E)$ with $d$ adjacent edges. This does not affect any edge in $E$, but only adds $d \leq D$ edges to the graph. And thus $\numedge{}{G}$ changes by at most $D$. When $d = D$, the change is exactly $D$.

Similar holds for directed graph. A node $v^*$ with $\Din$-in-edges and $\Dout$-out-edges can increase the total number of edges by $\Din + \Dout$.
\end{proof}

\subsubsection{Analyses for \composeproj}
\begin{lemma}\label{lem:baseline_one_DLL_edge}
Using Algorithm~\ref{alg:DLL_oneshot} on an undirected graph $G$ to get $G^{\tD}$ and publishing $\numedge{}{G^{\tD}}$ has global sensitivity $\tD$.\\
Using Algorithm~\ref{alg:DLL_oneshot} on an directed graph $G$ to get $G^{\tDin,\tDout}$ and publishing $\numedge{}{G^{\tDin,\tDout}}$ has global sensitivity $\tDin + \tDout$.
\end{lemma}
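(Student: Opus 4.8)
The plan is to reduce both statements to the single-edge cascade analysis already carried out in Lemma~\ref{lem:baseline_undirected_one_DLL} (undirected) and Lemma~\ref{lem:baseline_directed_one_DLL} (directed), but tracking the total edge count $\numedge{}{\cdot}$ rather than the high-degree-node count. First I would fix neighboring graphs $G = (V,E)$ and $G' = (V',E')$ with $V' = V \cup \{v^*\}$ and $E' = E \cup E^*$, where $E^*$ collects all edges incident to $v^*$. Exactly as in those lemmas, I would note that replacing $E^*$ by only the edges of $v^*$ that survive in the projected graph leaves the projection unchanged; there are at most $\tD$ such edges in the undirected case, and at most $\tDin$ surviving in-edges together with at most $\tDout$ surviving out-edges in the directed case. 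Listing them $e_1^*, \ldots, e_i^*$ in processing order, I would build the chain $G_0 \subset G_1 \subset \cdots \subset G_i$ in which $G_0$ differs from $G$ only by an isolated $v^*$, $G_i = G'$, and consecutive graphs differ by exactly one edge.

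The heart of the argument is to show that each single-edge step changes the projected edge count by at most one. Here I would invoke the alternating-path structure established in the referenced lemmas: adding the one edge $e_j^*$ flips an alternating sequence $e_j^* = (v^*, v_1), (v_1, v_2), \ldots, (v_{\ell-1}, v_\ell)$ in the projected graph, with the odd-positioned edges added and the even-positioned edges removed. Counting degree changes, each interior node $v_1, \ldots, v_{\ell-1}$ is incident to one added and one removed edge and so keeps its degree, while $v^*$ gains one and $v_\ell$ changes by $\pm 1$; hence the net change in $\numedge{}{\cdot}$ equals $(1 \pm 1)/2 \in \{0,1\}$, and in particular is nonnegative. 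The same path structure (with the corresponding in-/out-edge bookkeeping of Lemma~\ref{lem:baseline_directed_one_DLL}) holds in the directed case. Summing this per-step bound over the chain yields $0 \le \numedge{}{G'^{\tD}} - \numedge{}{G^{\tD}} \le i$, which is at most $\tD$ undirected and at most $\tDin + \tDout$ directed.

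For matching lower bounds I would exhibit explicit neighbors realizing these values. Undirected: take $G$ to be $\tD$ isolated nodes and let $G'$ attach $v^*$ to all of them; every new edge survives projection, so the difference is exactly $\tD$. Directed: take $G$ to be $\tDin + \tDout$ isolated nodes and let $v^*$ receive an in-edge from $\tDin$ of them and send an out-edge to the other $\tDout$; all $\tDin + \tDout$ edges survive, giving difference $\tDin + \tDout$. In both constructions $\tilde{G} = G$ and $\tilde{G'} = G'$, so the claimed global sensitivity is attained.

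I expect the main obstacle to be justifying cleanly, without re-deriving the whole cascade, that a one-node insertion decomposes into the unit steps above and that each step perturbs the projected edge set along a single alternating path of the stated parity. Once the parity is in hand the edge-count bookkeeping is immediate, so the real care lies in the reduction to the surviving edges of $v^*$ and in checking that the directed in-edge and out-edge cases both produce the same $\{0,1\}$ net change.
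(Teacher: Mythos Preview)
Your proposal is correct and follows essentially the same route as the paper: reduce to the surviving edges of $v^*$ (at most $\tD$, respectively $\tDin+\tDout$), invoke the alternating-path analysis of Lemmas~\ref{lem:baseline_undirected_one_DLL} and~\ref{lem:baseline_directed_one_DLL} to bound the per-edge change in $\numedge{}{\cdot}$ by $1$, and use an edgeless $G$ for the matching lower bound. Your degree-sum/handshake computation showing the per-step change lies in $\{0,1\}$ is in fact sharper than the paper's looser ``increase by $1$ or decrease by $1$'' phrasing, but the structure of the argument is the same.
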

\begin{proof}
We follow the same analysis as in Lemma~\ref{lem:baseline_undirected_one_DLL} and \ref{lem:baseline_directed_one_DLL}. 

For undirected graph $G$, for every additional edge $(v^*, v_1)$ adjacent to $v^*$, we know from the proof of Lemma~\ref{lem:baseline_undirected_one_DLL} that it yields an ``alternating sequence'' of edges such that all odd positioned edges are added while all even positioned edges are deleted from the projected graph without $(v^*, v_1)$. So the number of edges can only increase by $1$ or decrease by $1$. Since there are at most $\tD$ edges adjacent to $v^*$ that are added in the projected graph, the number of edges changes by at most $\tD$. 

For directed graph, the same analysis holds. For every in-edge $(v_1,v^*)$ or every out-edge $(v^*, v_1)$, there is an ``alternating sequence'' of edges such that that all odd positioned edges are added while all even positioned edges are deleted from the projected graph. So the number of edges changes by at most $\tDin + \tDout$. 

For both undirected and directed graph, it is easy to see that the number of edges change by exactly $\tD$ and $\tDin + \tDout$ if $G$ does not contain any edge.
\end{proof}

\section{Other}
\begin{lemma}
Suppose we use \sensdiff\ to publish $\{ \sum_{s=1}^{t} \tilde{\Delta}_s \}_{t=1}^T$.
For any $t \leq T$, the standard deviation of $\sum_{s=1}^{t} \tilde{\Delta}_s - f(G_t)$ is of order $\bigO{\sqrt{t}}$ .
\end{lemma}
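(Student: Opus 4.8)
The plan is to reduce the error to a sum of independent Laplace noise terms by exploiting the telescoping structure of the difference sequence, and then to compute its variance directly.

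First I would observe that the partial sum of the exact differences telescopes. Writing $\Delta_s = \f{G_s} - \f{G_{s-1}}$ and using $\f{G_0} = 0$ (the initial graph is empty, so every counting statistic vanishes), we obtain $\sum_{s=1}^t \Delta_s = \f{G_t}$. Consequently the quantity of interest simplifies to $\sum_{s=1}^t \tilde{\Delta}_s - \f{G_t} = \sum_{s=1}^t (\tilde{\Delta}_s - \Delta_s)$, i.e. the accumulated noise from the first $t$ steps, with all contributions from the true statistic cancelling.

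Next I would invoke the structure of \sensdiff: at each step $s$ the algorithm sets $\tilde{\Delta}_s = \Delta_s + \Lap{\GS{D}{\Delta}/\epsilon}$ using a fresh, independent Laplace draw. Thus each $\tilde{\Delta}_s - \Delta_s$ is an independent sample from $\Lap{\GS{D}{\Delta}/\epsilon}$, whose variance is $2(\GS{D}{\Delta}/\epsilon)^2$. By independence, the variance of the sum is the sum of the variances, so $\Var{\sum_{s=1}^t (\tilde{\Delta}_s - \Delta_s)} = 2t\,(\GS{D}{\Delta}/\epsilon)^2$, and the standard deviation equals $\sqrt{2t}\,\GS{D}{\Delta}/\epsilon$. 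Since $\GS{D}{\Delta}$ depends only on the degree bound $D$ (by the lemmas of Sections~\ref{sec:degree} and~\ref{sec:subgraph}) and $\epsilon$ is fixed, this is $\bigO{\sqrt{t}}$, as claimed.

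There is no substantive obstacle here; the proof is a direct consequence of telescoping together with additivity of variance for independent random variables. The only conceptual point worth stressing is that injecting noise into the \emph{differences} rather than into the cumulative statistic makes the errors accumulate only in variance (linearly in $t$), yielding $\bigO{\sqrt{t}}$ standard deviation, in contrast to the linear-in-$t$ growth of the naive composition baselines discussed in Section~\ref{sec:main_alg}.
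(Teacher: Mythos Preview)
Your proof is correct and follows essentially the same approach as the paper: reduce $\sum_{s=1}^t \tilde{\Delta}_s - f(G_t)$ to a sum of $t$ i.i.d.\ Laplace noises and compute the variance. The only difference is cosmetic: the paper represents the sum of $t$ i.i.d.\ $\Lap{a}$ variables as a difference $X-Y$ of two independent $\Gamma(t,a)$ random variables and computes $\Var{X-Y}=2ta^2$, whereas you bypass this detour and simply add the per-term variances $2a^2$, which is more direct and equally valid.
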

\begin{proof}
$\sum_{s=1}^{t} \tilde{\Delta}_s - f(G_t) = \sum_{s=1}^t \Lap{\frac{\GS{D}{\Delta}}{\epsilon}}$. The sum of these $t$ i.i.d. Laplace random variables follows the same distribution as the difference between two Gamma random variables with scale parameters $\frac{\GS{D}{\Delta}}{\epsilon}$ and shape parameters $t$. 
Therefore, to analyze the variance of the difference, we only need to analyze the variance of $X - Y$, where $X, Y \sim \Gamma(t, a)$ with $a = \frac{\GS{D}{\Delta}}{\epsilon}$. 
Since $X$ and $Y$ are independent, we have $\Var{X - Y} = \Var{X} + \Var{Y} = 2t a^2$, and thus the standard deviation of $\sum_{s=1}^{t} \tilde{\Delta}_s - f(G_t)$ is $\sqrt{2}\frac{\GS{D}{\Delta}}{\epsilon}\sqrt{t}$.
\end{proof}

On the other hand, it is easy to see that \composeD\ adds noise with standard deviation $\sqrt{2} \frac{\GS{D}{f}}{\epsilon/T} = \sqrt{2} \frac{\GS{D}{f}}{\epsilon} T$.

\end{document}